\documentclass[sigconf, nonacm, screen]{acmart}

\usepackage{xcolor}

\usepackage{amsthm}
\usepackage{thmtools}
\usepackage{thm-restate}
\declaretheorem{example}[numberwithin=section]
\declaretheorem{definition}[numberwithin=section]
\declaretheorem{theorem}[numberwithin=section]
\declaretheorem{corollary}[numberwithin=section]
[numberwithin=section]
\declaretheorem{assumption}[numberwithin=section]
\declaretheorem{fact}[numberwithin=section]
\declaretheorem{observation}[numberwithin=section]
[numberwithin=section]
[numberwithin=section]

\setcopyright{acmlicensed}
\copyrightyear{2018}
\acmYear{2018}
\acmDOI{XXXXXXX.XXXXXXX}

\acmConference[Conference acronym 'XX]{Make sure to enter the correct
  conference title from your rights confirmation emai}{June 03--05,
  2018}{Woodstock, NY}
\acmISBN{978-1-4503-XXXX-X/18/06}





\begin{document}

\title{The Cost of Balanced Training-Data Production in an Online Data Market}

\author{Augustin Chaintreau}
\email{augustin@cs.columbia.edu}
\affiliation{%
  \institution{Columbia University}
  \city{New York}
  \state{New York}
  \country{USA}
}

\author{Roland Maio}
\email{rjm2212@columbia.edu}
\affiliation{%
  \institution{Columbia University}
  \city{New York}
  \state{New York}
  \country{USA}
}

\author{Juba Ziani}
\email{jziani3@gatech.com}
\affiliation{%
  \institution{Georgia Institute of Technology}
  \city{Atlanta}
  \state{Georgia}
  \country{USA}
}


\begin{abstract}
  Many ethical issues in machine learning are connected to the training data.
Online data markets are an important source of training data, facilitating both production and distribution.
Recently, a trend has emerged of for-profit ``ethical'' participants in online data markets.
This trend raises a fascinating question: Can online data markets sustainably and efficiently address ethical issues in the broader machine-learning economy?

In this work, we study this question in a stylized model of an online data market.
We investigate the effects of intervening in the data market to achieve balanced training-data production.
The model reveals the crucial role of market conditions.
In small and emerging markets, an intervention can {\em drive the data producers out of the market}, so that the cost of fairness is maximal.
Yet, in large and established markets, {\em the cost of fairness can vanish} (as a fraction of overall welfare) as the market grows.

Our results suggest that ``ethical'' online data markets can be economically feasible under favorable market conditions, and motivate more models to consider the role of data production and distribution in mediating the impacts of ethical interventions.

\end{abstract}

\begin{CCSXML}
<ccs2012>
   <concept>
       <concept_id>10003752.10010070.10010099.10010106</concept_id>
       <concept_desc>Theory of computation~Market equilibria</concept_desc>
       <concept_significance>500</concept_significance>
       </concept>
   <concept>
       <concept_id>10002951.10003260.10003282.10003550.10003552</concept_id>
       <concept_desc>Information systems~E-commerce infrastructure</concept_desc>
       <concept_significance>500</concept_significance>
       </concept>
   <concept>
       <concept_id>10010147.10010257</concept_id>
       <concept_desc>Computing methodologies~Machine learning</concept_desc>
       <concept_significance>500</concept_significance>
       </concept>
   <concept>
       <concept_id>10002951.10003260</concept_id>
       <concept_desc>Information systems~World Wide Web</concept_desc>
       <concept_significance>500</concept_significance>
       </concept>
   <concept>
       <concept_id>10010147.10010341.10010342</concept_id>
       <concept_desc>Computing methodologies~Model development and analysis</concept_desc>
       <concept_significance>500</concept_significance>
       </concept>
   <concept>
       <concept_id>10003456.10003462</concept_id>
       <concept_desc>Social and professional topics~Computing / technology policy</concept_desc>
       <concept_significance>500</concept_significance>
       </concept>
 </ccs2012>
\end{CCSXML}

\ccsdesc[500]{Theory of computation~Market equilibria}
\ccsdesc[500]{Information systems~E-commerce infrastructure}
\ccsdesc[500]{Computing methodologies~Machine learning}
\ccsdesc[500]{Information systems~World Wide Web}
\ccsdesc[500]{Computing methodologies~Model development and analysis}
\ccsdesc[500]{Social and professional topics~Computing / technology policy}



\maketitle

\section{Introduction}

It is now widely recognized that machine-learning systems can raise ethical issues, including issues of fairness, privacy, law, and working conditions.
Models can produce predictions that systematically differ by race~\cite{obermeyer2019dissecting}, reproduce the photos of individuals in their training data~\cite{carlini2023extracting}, be trained on copyrighted materials~\cite{mueller2024llms}, and require the labeling of psychologically harmful content~\cite{perrigo2023exclusive}.
Ethical issues have been documented across diverse applications and domains~\cite{angwin2016machine, buolamwini2018gender, caliskan2017semantics}.

Much work on addressing these issues has focused on the training data.
Researchers have developed methods to analyze, transform, or augment a given dataset~\cite{feldman2015certifying, zemel2013learning, chen2018my}, and built new training-data resources~\cite{buolamwini2018gender, yang2020towards, nadimpalli2022gbdf}.
Non-profit organizations have supported the production of training data through innovation and subsidization~\cite{mozillacommonvoice, lacunafund}.
Governments have reviewed and improved their data infrastructure, emphasizing data that is widely used, including for machine learning~\cite{ostp2022vision, ostp2023equity}.
But these efforts have been largely restricted to academia, non-profit organizations, and government.

Recently, a small and growing number of firms have been attempting to address ethical issues in training data as profit-seeking participants in online data markets.
They are targeting a range of issues including protecting creators' copyright~\cite{fairlytrained}, creating job opportunities with fair working conditions~\cite{karya}, and producing representative training data~\cite{definedai}.
And they are working together~\cite{dpa}.
This trend is intriguing.

On the one hand, online data markets offer powerful possibilities.
Ethical issues in the training data could be addressed right at the source---at the level of training-data production.
Online infrastructure enables monetization and transaction on a web scale, and can facilitate greater participation by under-represented and disadvantaged communities~\cite{chopra2019exploring}. On the other hand, online data markets face economic pressures that may render addressing ethical issues in a free-market environment economically unviable.
If an ethical intervention increases production costs, this may diminish the economic surplus of machine learning, or place a firm at a disadvantage to less scrupulous competitors.
This fledgling trend raises an exciting question: {\em Can online data markets sustainably and efficiently address ethical issues in the machine-learning economy?}

In this work, we take a first step towards answering this question.
We study the problem of {\em unbalanced training data}, loosely defined here (see Section~(\ref{subsection:fairness-in-the-data-market}) for a precise definition) as there being too few samples of some group (such as racial group or gender) in the training data.
Unbalanced training data is linked to a number of issues in fair machine learning including performance disparities and equity assessments~\cite{buolamwini2018gender, chen2018my, albiero2020does, gwilliam2021rethinking}.
We investigate the effects of intervening to achieve balanced training-data production in an online data market.
Our aim is to motivate further work by shedding light on the economics of the intervention: {\em How costly can it be to achieve balanced training-data production?}

An intervention changes the training-data demographics, but also constrains the extraction of economic value.
If fewer revenues are extracted, then the budget for data production will decrease and sellers will produce fewer training samples.
In contrast to works that assume a fixed budget for data production as in~\cite{elzayn2020effects, cai2022adaptive}, the cost of fairness, loosely defined here as the loss in efficiency (see Section~(\ref{subsection:fairness-in-the-data-market}) for details), now arises out of a complex dynamic that includes this feedback and it is unclear what the overall impact will be.
\emph{How does the cost of fairness behave?}

\paragraph{Summary of contributions.}
We investigate this question in a stylized model of an online data market.
Our main contributions are as follow.
\begin{itemize}
  \item We revisit a well-known model of Agarwal et al.~\cite{agarwal2019marketplace}.
  Our modeling contribution is to formulate a specialized variant that endogenizes data production and allows the marketplace to impose a fairness constraint.
  Our model captures two important factors that can lead to unbalanced training-data production absent an intervention.
  First, data production is endogenous: sellers decide {\em how many training samples to produce for each group} in order to maximize their profit.
  Second, groups can systematically differ on three dimensions that influence sellers' profits: 1) the potential economic value that can be extracted; 2) the difficulty of the prediction tasks; and 3) the training-sample production costs.
  \item We describe the Nash equilibria of the model in a tractable and restricted {\em quasi-symmetric setting}.
  Our results reveal the equilibrium outcomes in a baseline scenario, when no fairness intervention is undertaken, and in an intervention scenario, when the marketplace undertakes a fairness intervention.
  \item We highlight the effects of applying fairness interventions in small and emerging markets.
  We characterize how imposing a fairness intervention from the get-go interacts with market formation.
  When the potential economic value is small for all groups, {\em the intervention can backfire}.
  The cost of fairness can be unbearable and the market may not form at all.
  Thus, every agent loses the full utility it would have received without an intervention.
  Worse still, this can harm the very groups the intervention is intended to benefit---leading to decreased accuracy and data produced for them.
  \item Beyond emerging markets, we consider large and established markets, where we find that the the cost of fairness can be completely offset.
  When the potential economic value \emph{of at least one group} is sufficiently large and growing, the intervention can affect agents in only two positive ways.
  Some agents can be strictly better off; the intervention can have a positive externality that benefits some market participants in addition to creating more data and higher accuracy for the intended groups.
  But perhaps more surprisingly, for {\em all} the other market participants---the sellers, the buyers, and the marketplace---{\em the cost of fairness amortizes}, i.e., the cost of fairness, as a fraction of their utilities without intervention decreases, even vanishes, and becomes arbitrarily small.

\end{itemize}

\paragraph{Paper Organization} The remainder of this paper is organized as follows.
We discuss related work in Section~(\ref{section:related-literature}).
We present our model in Section~(\ref{section:model}).
We analyze the model equilibria in Section~(\ref{section:data-market-equilibria}).
In Section~(\ref{section:fairness-can-backfire}), we study the conditions under which a fairness intervention can backfire in the market.
We study the conditions under which the cost of fairness amortizes in Section~(\ref{section:market-growth-amortizes-the-cost-of-fairness}).
We conclude with some limitations and discussion in Section~(\ref{section:limitations})

\section{Related Literature}
\label{section:related-literature}

We study the cost of fairness in an online data market.
The cost of fairness has been studied in a large literature.
We mention three important threads: 1) characterizing the behavior of the cost of fairness for a given fairness criterion~\cite{menon2018cost, corbett2023measure}; 2) working with the cost of fairness by formulating relaxations or novel variants of fairness criteria~\cite{hardt2016equality, chawla2022individual}; and 3) studying the impact of interacting fairness criteria on the cost of fairness~\cite{dwork2012fairness, kleinberg2017inherent}.
Our contribution is to show that the {\em relative severity} of the cost of fairness can vary, in particular, that it can be amortized by economic growth.
Our hope is that this may spin off a new thread in this literature.

We formulate our model to capture, as special cases, a wide range of real-world online data markets.
The literature on real-world online data markets has focused on issues such as the structure of data markets~\cite{schomm2013marketplaces, stahl2014data, stahl2015marketplaces}, pricing mechanisms~\cite{spiekermann2019data, azcoitia2022survey}, and challenges to market transactions~\cite{castro2022protecting, kennedy2022revisiting}.
To the best of our knowledge, the recent trend of emerging ethical data markets has not yet received research attention.
Ours is the first theoretical work to try and understand this trend and its potential.

Although our focus is on fairness, there is a significant theoretical literature on privacy and data markets.
This literature has studied how to reconcile privacy and efficiency through pricing mechanisms and architectural blueprints~\cite{ghosh2011selling, liao2022privacy, koutsos2021agora}.
Our work shows initial promise for reconciling fairness and efficiency.
This suggests that it may be possible to reconcile the efficiency of data markets with ethical issues more broadly.

\section{Online Data Market Model}
\label{section:model}

We design our model to capture a wide range of real-world and theoretical online data markets.
We take the theoretical model of Agarwal et al.~\cite{agarwal2019marketplace} as a starting point.
Their model is a practical blueprint for an end-to-end automated, computationally efficient, and real-time online data market for buying machine-learning predictions and selling training data.
We formulate a specialized variant that endogenizes training-data production and enables the marketplace to impose a fairness constraint.

Formally, our data market is comprised of three kinds of agents: 1) $M$ sellers that produce and sell data to the marketplace; 2) $N$ buyers who buy predictions from the marketplace; and 3) a centralized marketplace that coordinates the market by aggregating and allocating data, producing predictions by carrying out machine learning, and setting prices.
We next define each kind of agent and their role in the data market.

\subsection{Datasets and sellers}

Our model endogenizes data production: sellers respond to market conditions by deciding whether and what data to produce.

{\em Datasets} are a widespread form of data supply in real-world and theoretical online data markets~\cite{elzayn2020effects, cai2022adaptive}.
Datasets are typically organized into discrete elements called samples.
To model fairness, we also suppose that each sample is exclusively associated to some group.
A dataset could be comprised of images of people, each sample could be a single image of a person, and the group could be the gender, age, or race of that person.

\begin{definition}{(Dataset)}
A dataset is made up of samples; each sample is exclusively associated to one group $g$, from a possible set of groups $G$.
The dataset is described by a vector $x \in \mathbb{R}^{|G|}$ where $x_g \geq 0$ gives the number of samples associated to group $g$ in the dataset $x$.
We denote the total number of samples in $x$, by $\|x\| \triangleq \sum_{g \in G}x_g$.
\end{definition}

{\em Sellers} produce datasets.
Each seller decides how many samples to produce of each group.
Bearing on their decision are production costs.
We use a simple model of production costs where each sample costs a fixed amount.
This is similar to sample-based pricing in real-world data markets~\cite{paul2024inside}.
A common model in the fair machine learning literature extends this to group-specific costs~\cite{elzayn2020effects, cai2022adaptive}.

For example, a seller may produce text data including text from different languages where the languages may be considered different groups.
Each word in a widely-used language may cost \$0.001 to produce, whereas each word in a rarely-used language may cost \$0.003 to produce.

\begin{definition}{(Sellers)}
There are $M$ sellers.
Each seller $j$ produces a dataset $x^{(j)}$ to sell.
We assume that, for each group $g$, seller $j$'s production process ensures that the $x^{(j)}_g$ samples in the dataset are independent and identically distributed, and that samples are drawn independently between groups.

Each seller $j$ faces a production-cost structure, $\kappa^{(j)} \in \mathbb{R}^{|G|}$; $\kappa_g^{(j)} > 0$ is the constant marginal cost incurred by seller $j$ to produce a sample of group $g$.
The cost to seller $j$ of producing dataset $x^{(j)}$ is $\sum \kappa^{(j)}_gx^{(j)}_g$, or in vector notation, $\kappa^{(j)T}x^{(j)}$.
\end{definition}

\subsection{Data demand and the buyers}
\label{subsection:data-demand-and-the-buyers}

Demand for training data is driven by valuable machine-learning applications.
We decompose the value of a machine-learning application into a machine-learning component, {\em prediction task}, and an economic component, {\em value of accuracy}.

A {\em prediction task} distills an application problem into a form that is workable by machine learning.
For example, transcribe the words spoken in a speech clip, score the risk of a borrower defaulting on a loan, or conjecture a person's gender from an image of their face.

We model a prediction task in terms of its {\em learning curve}, $\mathcal{G}(\cdot)$ which captures the relationship between the training data, $x$, and accuracy, $\mathcal{G}(x)$.
Learning curves are usually modeled as a function of the number of samples in the training data~\cite{viering2022shape, kaplan2020scaling, hoffmann2022training}.
But in our model, a dataset $x$ is a vector that also captures the dataset demographics.
Prior work shows that dataset demographics affect accuracy~\cite{albiero2020does, gwilliam2021rethinking}, so presumably it affects the learning curve.

In general, it remains unclear exactly how to model learning curves as a function of both the number of training samples and their demographics, and assumptions must be made. One can expect that more training samples always help, and that a sample typically contributes more to accuracy for prediction on similar demographic groups. While we believe that this is an important issue to consider in future work, we make the following simplifying assumptions that still allows us to focus here on the economics of fairness intervention: (1) We analyze the scenario in which the value is extracted by buyers separately among groups, so the prediction tasks may be written as group-specific. 
This captures situations in which group-specific models are used or group-aware models are allowed and preferred.
For example, the state of Wisconsin ruled that considering gender is permissible in recidivism prediction because it increases accuracy~\cite{hlr2017state}. (2) Only the training samples of the associated group contribute to learning on its prediction task.
\begin{assumption}{(Zero inter-group transfer)}
    \label{assumption:zero-inter-group-transfer}
    Let $x$ be a dataset, and $\mathcal{G}$ be a prediction task associated to some group $g$, then
    \begin{equation}
        \mathcal{G}(x) = \mathcal{G}(x_g).
    \end{equation}
    Note that we are abusing notation here, $x$ is a vector whereas $x_g$ is a scalar, but we hope that this
    clarifies that when the whole dataset is passed, only the training samples of the group $g$ are informative.
\end{assumption}
To be clear, prediction tasks for different group can differ or follow the same exact learning curve. Here we only assume that they are not interacting through transfer learning. 
Assumption (\ref{assumption:zero-inter-group-transfer}) may be seen as a simplified extreme version of the assumption, common in the fair machine learning literature, that machine learning usually does not fully transfer across populations~\cite{corbett2023measure, dwork2018decoupled}.
Note that, intuitively, some transfer learning ought to reduce learning disparities among group. One can therefore informally interpret that this model overestimates the cost of fairness.
This provides some indication that our results on amortizing the cost of fairness (see Section~(\ref{section:market-growth-amortizes-the-cost-of-fairness})) are robust.

By Assumption (\ref{assumption:zero-inter-group-transfer}), a learning curve depends only on the number of samples of an associated group.
We use a common learning-curve model~\cite{viering2022shape, elzayn2020effects, kaplan2020scaling, hoffmann2022training} that captures three of their fundamental properties: 1) accuracy increase as the amount of training data increases; 2) the gain in accuracy is diminishing; and 3) there is a limit to the maximum possible accuracy.

\begin{definition}{(Learning curve)}
  A {\em learning curve, $\mathcal{G}$}, is defined by three parameters: $Z$, $\alpha$, and $\beta$.
  Given $n$ training samples, the accuracy, $\mathcal{G}(n)$, is defined to be
  \begin{equation}
    \mathcal{G}(n) \triangleq \left(Z - \alpha n^{-\beta}\right)_+,
  \end{equation}
  where $(\cdot)_+$ denotes the positive part.
\end{definition}

We model the economic component of data demand by a {\em marginal value-of-accuracy}, $\mu$, that we assume is constant.
Thus, the value of $x_g$ training samples for a prediction task associated to group $g$ and with a learning curve $\mathcal{G}$ is $\mu\mathcal{G}(x_g)$.
This is the amount that a buyer with prediction task\footnotemark $\mathcal{G}$ is willing to pay for predictions derived from $x_g$ samples.
\footnotetext{Since a prediction task defines a learning curve, we refer to $\mathcal{G}$ as either depending on which makes more sense in the context.}
For example, an ecommerce company may estimate that every 1\% of accuracy in transcribing spoken words yields revenues of \$10,000.
If $x_g$ training samples results in overall accuracy of 73\%, the company would expect revenues of \$730,000 and be willing to pay up to that amount for those predictions.

\begin{definition}{(Buyers)}
There are $N$ buyers.
Each buyer $i$ faces $|G|$ prediction tasks, one for each group $g$.
Each prediction task has its own learning curve with parameters $Z_{i,g}$, $\alpha_{i,g}$, and $\beta_{i,g}$.
Buyer $i$ has a private value-of-accuracy $\mu_{i,g}$ for each group-specific prediction task; we collectively denote these $\mu_i$.
\end{definition}

\subsection{Market mechanics and the marketplace}
\label{subsection:market-mechanics-and-the-marketplace}

The marketplace coordinates the market.
The sellers give the marketplace access to their datasets and the buyers disclose to the marketplace their values-of-accuracy and prediction tasks.
The role of the marketplace is to perform three functions: 1) produce predictions; 2) set prices; and 3) divide payment between the sellers.

\paragraph{Market Mechanics}
Market transactions begin with an interaction between the sellers and the marketplace.
Each seller $j$ gives the marketplace access to its dataset, $x^{(j)}$.
The marketplace can combine the sellers' datasets for machine learning and revenue division.
For a subset of sellers $T \subseteq [M]$, the dataset $x^{(T)}$ is defined coordinate-wise by,\footnote{We are abusing notation here; this means that seller $j$'s dataset could also be written as $x^{(\{j\})}$, but for the sake of notational clarity we will only use $x^{(j)}$.}
\begin{equation}
    x^{(T)}_g = \sum_{j \in T}x^{(j)}_g.
\end{equation}
In particular, $x^{([M])}$ is the aggregate dataset of all the sellers' datasets.
With access to the sellers' data, the marketplace has one of the key production factors for predictions.

Next, the marketplace chooses a reserve-price vector, $p \in \mathbb{R}^{|G|}$, $p_g > 0$, and then publicly announces $p$ so that buyers know $p$ at the beginning of the next step.

Once $p$ is announced, each buyer $i$ submits its prediction tasks and a bid vector $b_i \in \mathbb{R}^{|G|}$ to the marketplace.
The bid vector $b_i$ is buyer $i$'s report to the marketplace of its values-of-accuracy $\mu_i$.

Then, the marketplace allocates training samples from the aggregate dataset to carry out machine learning and produce predictions for the buyers' prediction tasks.
The marketplace uses a reserve price allocation mechanism, $\mathcal{AF}_g$, for each group $g$.
For each group $g$, the marketplace allocates all the samples in the aggregate dataset to the machine learning for buyer $i$'s prediction task for group $g$ if the buyer bids at least the reserve price $p_g$, and otherwise nothing.
Formally,
\begin{equation}
    \mathcal{AF}_g(b_{i,g}, x^{([M])}) \triangleq \begin{cases}
        x^{([M])}_g & \text{if } b_{i,g} \geq p_g\\
        0 & \text{if } b_{i,g} < p_g
        \end{cases}
\end{equation}
Note that we define $\mathcal{AF}_g(b_{i,g}, x^{([M])})$ to be $x_g^{([M])}$ rather than $x^{([M])}$ when $b_{i,g} \geq p_g$ for ease of presentation due to the zero inter-group transfer assumption.
For notational clarity, we define, for any coalition $T \subseteq [M]$,
\begin{equation}
  \tilde{x}^{(T)} \triangleq \mathcal{AF}_g(b_{i,g}, x^{(T)}).
\end{equation}

Having allocated training samples for machine learning to buyer $i$'s prediction task for group $g$, the marketplace then carries out the machine learning to produce predictions and sets a price for the predictions of
\begin{equation}
    \mathcal{RF}_{i,g}(b_{i,g}) \triangleq p_g\mathcal{G}(\tilde{x}^{([M])}).
\end{equation}
Note that this is the revenue function required by Myerson's mechanism for the allocation function $\mathcal{AF}_g$.

Finally, the marketplace divides the collected revenues between the sellers using the Shapley value~\cite{shapley1953value}. For notational clarity, we define $c_T \triangleq |T|!(M - |T| - 1)! / M!$ for any coalition $T \subseteq [M]$.
For each buyer $i$ and group $g$, seller $j$ receives the payment division
\begin{equation}
    \mathcal{PD}_{i,g,j}(x^{(j)}) = p_g\sum_{T \subseteq [M]\setminus \{j\} }c_T\cdot\left(\mathcal{G}_{i,g}(\tilde{x}^{(T\cup\{j\})}) - \mathcal{G}_{i,g}(\tilde{x}^{(T)}))\right).
\end{equation}

\subsection{Market outcomes: utilities and equilibria}
\label{subsection:market-dynamics}

We model the data market as a simultaneous game and study its Nash equilibria.

\paragraph{Utilities.}
Agents in the data market are strategic and act to maximize their utilities.
The marketplace chooses the reserve-price vector $p$ so as to maximize its total revenues,
\begin{equation}
    \label{eq:marketplace-utility}
    w(p) \triangleq \sum_{i=1}^N\sum_{g \in G} \mathcal{RF}_{i,g}(b_{i,g}).
\end{equation}
Each seller $j$ produces a dataset $x^{(j)}$ to maximize its profits,
\begin{equation}
    \label{eq:seller-utility}
    v_j(x^{(j)}) \triangleq \left(\sum_{i=1}^N\sum_{g \in G}\mathcal{PD}_{i,g,j}(x^{(j)})\right) - \kappa^{(j)T}x^{(j)}.
\end{equation}
Each buyer $i$ submits bids $b_i$ to maximize its surplus,
\begin{align}
    \label{eq:buyer-utility}
    u_i(b_i) \triangleq \sum_{g \in G}\mu_{i,g}\mathcal{G}_{i,g}(\tilde{x}^{([M])}) - \mathcal{RF}_{i,g}(b_{i,g}) 
\end{align}

\paragraph{Nash Equilibrium}
A strategy profile $(p, \{b_i\}, \{x^{(j)}\})$ is a Nash equilibrium if no agent can improve its utility by a unilateral deviation in its strategy.

\subsection{Fairness in the data market}
\label{subsection:fairness-in-the-data-market}

The fairness model performs three critical functions.
First, it provides a criterion that captures the notion of fairness that we study and allows one to test whether fairness has been achieved.
Second, it stipulates a particular intervention that the marketplace can undertake to achieve fairness in the intervention scenario.
And, third, it allows the cost of the fairness intervention to be assessed.
We develop each in turn below.

\paragraph{Fairness Criterion}
There are many fairness criteria.
Which is appropriate depends on the context of an application.
Here, we explore just one fairness criterion that is based on the fraction of samples associated to each group in a dataset, i.e., the dataset demographics.

\begin{definition}{(Dataset demographics)}
    The {\em demographics of a dataset $x$} is the vector $\gamma(x) \in \mathbb{R}^{|G|}$ whose $g$-th coordinate, $\gamma_g(x)$ is given by,
    \begin{equation}
        \gamma_g(x) \triangleq \frac{x_g}{\|x\|}.
    \end{equation}
\end{definition}

Dataset demographics are important for fairness.
Sufficiently balanced demographics are important for conducting equity assessments~\cite{buolamwini2018gender, barocas2023fairness}.
Unbalanced demographics can unfairly favor machine-learning performance on one group over another~\cite{barocas2023fairness}.
Obtaining more training-data samples for a particular group is an intervention to equalize excess errors~\cite{chen2018my}.
Equalizing excess errors is often applied in contexts where penalizing accuracy on any group is considered unethical, even to achieve fairness, such as in healthcare~\cite{chen2018my, obermeyer2019dissecting}.
Implicit here is a notion that some demographics are balanced, and therefore may be considered fair, while others are unbalanced, and may be considered unfair.
We formalize this as follows.

\begin{definition}{(Demographic balance)}
    \label{definition:demographic-balance}
    Let $\gamma \in [0,1]^{|G|}$ be a {\em target vector} satisfying $\sum_{g \in G} \gamma_g = 1$.
    We say that a {\em dataset $x$ is $\gamma$-demographically balanced}  if for every $g$ it holds that
    \begin{equation}
        \gamma_g = \frac{x_g}{\|x\|}.
    \end{equation}
    Note that we are overloading notation here, we use standalone $\gamma$ to refer to a target vector and function invocation $\gamma(x)$ to refer to the demographics of the dataset $x$.
\end{definition}

\paragraph{Fairness Intervention}

The marketplace implements the following intervention.
The marketplace chooses a target vector $\gamma$.
When the sellers submit their datasets, the marketplace accepts a seller $j$'s dataset $x^{(j)}$ if and only if $x^{(j)}$ is $\gamma$-demographically balanced.

If a seller produces any data at equilibrium, its dataset will be demographicaly balanced.
Therefore, the aggregate dataset will be demographically balanced.
Each seller $j$'s decision is reduced to the total number of samples it will produce, i.e., $\|x^{(j)}\|$.
For notational convenience, we define $n^{(j)} \triangleq \|x^{(j)}\|$ for a single seller $j$ and $n^{(T)} \triangleq \sum_{k \in T}n^{(k)}$ for a coalition of sellers $T$.
In particular, $n^{([M])} = \|x^{([M])}\|$ is the total number of samples produced by all the sellers in the aggregate dataset.

\section{Data market equilibria}
\label{section:data-market-equilibria}

In this section we study the data market equilibria when the number of buyers is fixed in two scenarios.
In the {\em baseline scenario} the marketplace does not implement its fairness intervention.
In the {\em intervention scenario} the marketplace implements its fairness intervention.
We compare the outcomes in the two scenarios to investigate the impacts of the fairness intervention in Sections~(\ref{section:fairness-can-backfire}) and~(\ref{section:market-growth-amortizes-the-cost-of-fairness}).

We first show, that solving the equilibria in closed form is elusive in the general case, even under our simplifying assumptions.
We therefore focus on the interesting case of {\em quasi-symmetric setting.}

\begin{restatable}{proposition}{propositiona}
    \label{prop:no-closed-form-solution}
    There does not exist a general closed-form solution over all the possible equilibrium equations in the general setting of the model.
\end{restatable}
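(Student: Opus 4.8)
\emph{Proof plan.} The plan is to show that no single formula in the model parameters can describe the equilibria of every instance, by exhibiting explicit instances whose equilibrium conditions reduce either to a polynomial equation that is not solvable by radicals, or to a transcendental equation with no elementary closed-form solution. The reduction to a statement about solving equations proceeds by first disposing of the parts of the equilibrium that are easy. At a Nash equilibrium each buyer's strategy is pinned down by $p$ and the aggregate dataset: buyer $i$ bids above $p_g$ precisely when participation yields nonnegative surplus, i.e.\ when $(\mu_{i,g}-p_g)\,\mathcal{G}_{i,g}(\cdot)\ge 0$, and otherwise bids below. What remains is a coupled system of the sellers' first-order conditions and the marketplace's optimality condition for $p$.

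Differentiating $v_j$ through the Shapley division $\mathcal{PD}_{i,g,j}$ and the learning curve, seller $j$'s first-order condition in group $g$ (in the baseline scenario, where $j$ chooses $x^{(j)}_g$ freely) reads
\begin{equation}
  p_g \sum_{i=1}^{N} \alpha_{i,g}\beta_{i,g} \sum_{T \subseteq [M]\setminus\{j\}} c_T\left(x^{(T)}_g + x^{(j)}_g\right)^{-\beta_{i,g}-1} \;=\; \kappa^{(j)}_g ,
\end{equation}
and in the intervention scenario one additionally sums the left-hand side over $g$ after substituting $x^{(j)}_g = \gamma_g n^{(j)}$. Since a ``closed-form solution'' would be a finite expression built from the parameters using field operations, radicals, and elementary functions, it suffices to produce one instance whose equilibrium is a root of a polynomial not solvable by radicals, and one whose equilibrium solves a genuinely transcendental equation.

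For the algebraic obstruction I would take the minimal case $|G|=1$, $M=1$, with $N$ buyers of whom three participate and have learning-curve exponents satisfying $\beta_i+1\in\{2,3,5\}$. Here $[M]\setminus\{1\}=\emptyset$, so the lone seller collects the full marginal value, its profit is strictly concave in its single production level $n$, and the unique interior maximizer is the positive root of its first-order condition; clearing denominators turns this into a quintic $(\kappa/p)\,n^{5} - a_1 n^{3} - a_2 n^{2} - a_3 = 0$ with $a_i=\alpha_i\beta_i$ and $p$ the marketplace's price. Choosing the numerical parameters so that this quintic is irreducible over $\mathbb{Q}$ with Galois group $S_5$ --- certified in the standard way via Dedekind's theorem, by reducing modulo a prime at which it remains irreducible (giving a $5$-cycle in the group) and modulo another at which it factors as an irreducible quadratic times an irreducible cubic (whose permutation cubes to a transposition), a $5$-cycle and a transposition generating $S_5$ --- the Abel--Ruffini theorem shows the equilibrium production level $n$ is not expressible in radicals. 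For the transcendental obstruction I would again take $M=1$ but let two participating buyers have incommensurable exponents $\beta_1,\beta_2$, so the first-order condition $a_1 n^{-\beta_1-1}+a_2 n^{-\beta_2-1}=\kappa$ equates a constant to a sum of two non-commensurable power laws, which has no solution in elementary functions.

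The main obstacle is the non-solvability certification in the first family: one must pick numerical values that simultaneously make the eliminated univariate polynomial a legitimate equilibrium equation --- positive root, participation of exactly the intended buyers at the chosen $p$, second-order conditions, and $p$ a best response for the marketplace (which follows from continuity of $w$ on the relevant price interval, independently of which root the quintic has) --- and realize $S_5$ as its Galois group, which requires the mod-$p$ factorization bookkeeping above. If one prefers to avoid pinning down a Galois group, a softer argument suffices: in the general setting $M$ is unbounded, and eliminating variables from the $M$ coupled Shapley first-order conditions produces a univariate polynomial whose degree grows with $M$, so no fixed finite closed-form expression in the parameters can solve the equilibrium equations uniformly across all instances.
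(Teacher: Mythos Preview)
Your approach matches the paper's: derive the single-seller first-order condition, observe that heterogeneous $\beta_{i,g}$'s across buyers make it, after clearing denominators, a polynomial of degree $\max_i(\beta_{i,g}+1)$ in the production level, and invoke Abel--Ruffini on a quintic instance. The paper's execution is slicker, though. Instead of three buyers with $\beta_i+1\in\{2,3,5\}$ and a Dedekind-style $S_5$ certification left as bookkeeping, it takes \emph{two} buyers with $\beta_{1,g}=3$, $\beta_{2,g}=4$, $\alpha_{1,g}=1/3$, $\alpha_{2,g}=1/4$, $\kappa_g=p_g=1$, so that the first-order condition is exactly van der Waerden's quintic $n^5-n-1=0$, whose non-solvability by radicals is already on record in \cite{van1949modern}; it then checks directly that the seller's utility at the root is positive, so the interior solution really is the equilibrium. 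This sidesteps precisely the parameter-tuning obstacle you flag. Your transcendental and ``softer'' degree-growth side arguments are extras not in the paper; the former is plausible but proving an equation has no elementary closed form is itself delicate, and the latter is not valid as stated --- high degree alone does not preclude solvability by radicals (think $x^n=a$), and in any case the degree of the eliminated polynomial is driven by the spread of the $\beta_{i,g}$'s rather than by $M$, with symmetric sellers collapsing the system regardless of $M$.
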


The proof is in Appendix~(\ref{proof:no-closed-form-solution}).
The impossibility arises when prediction tasks can differ in their learnability {\em within} a group, as captured by the decay parameters $\beta_{i,g}$.

\begin{definition}{(Quasi-symmetric setting)}
\label{definition:quasi-symmetric}
The buyers share a common prediction task within groups and between groups, denoted $\mathcal{G}(\cdot)$, and described by parameters $Z$, $\alpha$, and $\beta$, i.e. for all $i \in [N], g \in G, n \in \mathbb{R}$, $\mathcal{G}_{i,g}(n) = \mathcal{G}(n)$. 

The sellers share a common cost structure denoted $\kappa$, i.e.,
for every pair of sellers $j,j' \in [M]$, $\kappa^{(j)} = \kappa = \kappa^{(j')}$.
\end{definition}

Importantly, although this requires symmetry on the buyers' prediction tasks that capture the learning aspect, we still allow for the values-of-accuracy to vary arbitrarily among buyers and groups to describe a range of economic scenarios (see below).
We now describe the buyers' and marketplace's equilibrium strategies before analyzing the sellers' equilibrium stratgies in each scenario.

\paragraph{Buyer's Dominant Strategy.}

Since the marketplace's allocation function and revenue function are an application of Myerson's payment function~\cite{agarwal2019marketplace, myerson1981optimal}, it follows that truthfulness is a dominant strategy for the buyers.
\begin{fact}{(Buyer Truthfulness)}
    \label{fact:buyer-truthfulness}
    For every buyer $i$, truthfully bidding its values-of-accuracy, i.e., $b_i = \mu_i$ is a dominant strategy.
\end{fact}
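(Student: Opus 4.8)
The plan is to prove this directly by reducing each buyer's best-response problem to a collection of independent per-group decisions, each of which is settled by an elementary two-case comparison (the argument does not actually use the quasi-symmetric restriction). First I would observe that buyer $i$'s surplus is visibly a sum over groups whose $g$-th summand, $\mu_{i,g}\mathcal{G}_{i,g}(\tilde{x}^{([M])}) - \mathcal{RF}_{i,g}(b_{i,g})$, depends on buyer $i$'s report only through the scalar $b_{i,g}$, since both the allocation $\mathcal{AF}_g$ and the revenue function $\mathcal{RF}_{i,g}$ take $b_{i,g}$ as their only bid argument. Hence it suffices to fix an arbitrary strategy profile of all the other agents---which freezes the reserve price $p_g$ and the aggregate supply $\|x^{([M])}\|$ as far as buyer $i$ is concerned---and show that, for every group $g$, reporting $b_{i,g} = \mu_{i,g}$ maximizes the $g$-th summand.

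Second I would substitute the definitions of $\mathcal{AF}_g$ and $\mathcal{RF}_{i,g}$ to see that, as a function of $b_{i,g}$, the $g$-th summand equals $(\mu_{i,g} - p_g)\,\mathcal{G}_{i,g}(q)$, where $q$ is the number of training samples actually used for buyer $i$'s group-$g$ predictions: $q = \|x^{([M])}\|$ when $b_{i,g} \geq p_g$ and $q = 0$ when $b_{i,g} < p_g$. The point is that the value $\mu_{i,g}\mathcal{G}_{i,g}(q)$ and the charge $p_g\mathcal{G}_{i,g}(q)$ apply to the \emph{same} quantity $q$, and the bid's only effect is to select which of the two values of $q$ obtains. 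Since $\mathcal{G}_{i,g}$ is nonnegative and nondecreasing, a case split on the sign of $\mu_{i,g} - p_g$ finishes it: if $\mu_{i,g} \geq p_g$ the buyer prefers the larger $q$, i.e., wants $b_{i,g} \geq p_g$, and truthful reporting $b_{i,g} = \mu_{i,g}$ achieves this; if $\mu_{i,g} < p_g$ the buyer prefers $q = 0$, i.e., wants $b_{i,g} < p_g$, and again $b_{i,g} = \mu_{i,g}$ achieves this. Summing the weak inequalities over $g$ yields $b_i = \mu_i$ as a dominant strategy. Equivalently, and more briefly, $\mathcal{AF}_g$ is monotone nondecreasing in $b_{i,g}$ and $\mathcal{RF}_{i,g}$ is precisely its Myerson threshold payment, so dominant-strategy incentive compatibility is immediate from Myerson's characterization~\cite{myerson1981optimal}.

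I do not expect a genuine obstacle here: the statement is essentially a one-line consequence of Myerson's lemma, and even the self-contained version is a trivial case analysis. The only point that deserves explicit care in the writeup is that a buyer's bid can neither change the \emph{quantity} of data used for its predictions---allocation is all-or-nothing, either the whole aggregate dataset $\|x^{([M])}\|$ or none of it---nor the per-accuracy price $p_g$; once that is made explicit, the value term and the payment term move together through the common quantity $\tilde{x}^{([M])}$, so the buyer's entire decision collapses to the single comparison of $\mu_{i,g}$ against $p_g$, which truthful bidding resolves correctly by construction.
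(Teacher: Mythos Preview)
Your proposal is correct and matches the paper's approach: the paper itself does not give a detailed proof but simply notes that $\mathcal{AF}_g$ and $\mathcal{RF}_{i,g}$ are an instance of Myerson's mechanism and cites~\cite{agarwal2019marketplace, myerson1981optimal}, which is exactly the one-line justification you give at the end; your additional self-contained case analysis is a welcome elaboration of that citation. One cosmetic slip: the allocated quantity when $b_{i,g}\ge p_g$ is $x^{([M])}_g$ (the group-$g$ coordinate), not $\|x^{([M])}\|$, per the definition of $\mathcal{AF}_g$ and the zero inter-group transfer assumption---this does not affect your argument.
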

Henceforth, we restrict our attention to strategy profiles in which all the buyers bid truthfully, i.e., of the form $\sigma = (p, \{\mu_i\}, \{x^{(j)}\})$.

\paragraph{Marketplace Best Response.}
The marketplace has a best response that depends only on the buyers' strategies.

Let $\sigma = (p, \{\mu_i\}, \{x^{(j)}\})$ be a strategy profile.
Straightforward substitution and algebraic manipulation show that the marketplace's utility is given by,
\begin{equation}
    w(p) = \sum_{g \in G}\left(p_g\sum_{i=1}^N\mathbf{1}[\mu_{i,g} \geq p_g]\right)\mathcal{G}(x^{([M])}_g). \label{align:0000}
\end{equation}
Equation (\ref{align:0000}) shows that the marketplace extracts revenues from each group independently of the others.
The revenues extracted from each group $g$ is the product of two factors: one factor is the accuracy $\mathcal{G}(x^{([M])})$; and the other factor will turn out to be critical in our analyses.
\begin{definition}{(Potential economic value)}
    \label{definition:market-value-of-accuracy}
    Let $\sigma = (p, \{\mu_i\}, \{x^{(j)}\})$ be a strategy profile and $g$ be any group in $G$.
    The {\em potential economic value of group $g$ in the strategy profile $\sigma$}, denoted $\rho_g$, is defined to be
    \begin{equation}
        \rho_g \triangleq p_g\left(\sum_{i=1}^N \mathbf{1}[\mu_{i,g} \geq p_g]\right).
    \end{equation}
\end{definition}
The potential economic value, $\rho_g$ is the product of the reserve price and the number of buyers who bid at least the reserve price.
\begin{fact}{(Marketplace Best Response)}
    \label{fact:marketplace-best-response}
    Let the buyers bid their values-of-accuracy, i.e., $b_i = \mu_i$ for all $i \in [N]$.
    Let the sellers' datasets $\{x^{(j)}\}$, $j \in [M]$, be arbitrary.
    Then, the marketplace's best response is to maximize $\rho_g$ for every group $g \in G$, i.e., to set reserve prices $p_g$ to,
    \begin{equation}
        p_g \in \underset{p}{\arg\max} \rho_g.
    \end{equation}
\end{fact}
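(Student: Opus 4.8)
The plan is to reduce the maximization of the marketplace's utility to $|G|$ independent scalar maximizations, one per group, and to show that each of these coincides, up to a nonnegative multiplicative constant, with the maximization of $\rho_g$.

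First I would invoke Equation~(\ref{align:0000}), which, under the hypothesis $b_i = \mu_i$, already rewrites the marketplace's utility as $w(p) = \sum_{g \in G} \rho_g \, \mathcal{G}(x^{([M])}_g)$ with $\rho_g = p_g \sum_{i=1}^N \mathbf{1}[\mu_{i,g} \ge p_g]$. The crucial structural observations are: (i) the reserve price $p_g$ of group $g$ enters \emph{only} the $g$-th summand, through $\rho_g$; and (ii) the factor $\mathcal{G}(x^{([M])}_g)$ is a constant with respect to $p$, because the sellers' datasets $\{x^{(j)}\}$ are held fixed in the statement. Writing $f_g(p_g) \triangleq \rho_g(p_g)\,\mathcal{G}(x^{([M])}_g)$, we then have $w(p) = \sum_{g\in G} f_g(p_g)$ with each $f_g$ depending on the single coordinate $p_g$ alone.

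Second, I would exploit this separability: a reserve-price vector $p^\star$ (with all coordinates positive, as the model requires) maximizes $w$ if and only if each coordinate $p^\star_g$ maximizes $f_g$ over $p_g > 0$. I then split on the sign of the constant $\mathcal{G}(x^{([M])}_g) \ge 0$. If $\mathcal{G}(x^{([M])}_g) > 0$, multiplying by a positive constant leaves the set of maximizers unchanged, so $\arg\max_{p_g>0} f_g(p_g) = \arg\max_{p_g>0}\rho_g(p_g)$. If $\mathcal{G}(x^{([M])}_g) = 0$ --- for instance when no samples of group $g$ are produced --- then $f_g \equiv 0$, so every admissible $p_g$ is a maximizer, in particular any $p_g \in \arg\max_{p_g>0}\rho_g(p_g)$. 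In either case, choosing $p_g \in \arg\max_{p_g>0}\rho_g(p_g)$ for every $g$ yields a utility-maximizing $p$, i.e. a best response; this also explains the ``$\in$'' in the statement, since when $\mathcal{G}(x^{([M])}_g) = 0$ (or when the argmax of $\rho_g$ is itself multivalued) the best response need not be unique.

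Finally, for well-posedness I would check that $\arg\max_{p_g > 0}\rho_g(p_g)$ is nonempty. Sorting the values as $\mu_{(1),g} \ge \dots \ge \mu_{(N),g}$, the count $\sum_i \mathbf{1}[\mu_{i,g}\ge p_g]$ equals $k$ on the interval $(\mu_{(k+1),g}, \mu_{(k),g}]$, so there $\rho_g(p_g) = k p_g$ is increasing and attains its supremum at the right endpoint $\mu_{(k),g}$; hence it suffices to maximize over the finite candidate set $\{\mu_{i,g} : i \in [N],\ \mu_{i,g} > 0\}$ together with the value $0$ (attained by any $p_g$ exceeding all bids), and a maximizer exists. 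I expect the only delicate point of the whole argument to be these degenerate corner cases --- $\mathcal{G}(x^{([M])}_g) = 0$, and empty-or-all-zero bid profiles for a group --- in which the marketplace is genuinely indifferent; the claim must therefore be stated as ``maximizing $\rho_g$ groupwise is \emph{a} best response,'' never as the assertion that it is the unique one.
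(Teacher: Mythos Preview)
Your proposal is correct and follows exactly the reasoning the paper relies on: the paper derives the separable form $w(p)=\sum_{g\in G}\rho_g\,\mathcal{G}(x^{([M])}_g)$ in Equation~(\ref{align:0000}) and then states the fact without further proof, treating groupwise maximization of $\rho_g$ as immediate from that decomposition. Your treatment is in fact more careful than the paper's, since you explicitly handle the degenerate cases $\mathcal{G}(x^{([M])}_g)=0$ and verify that $\arg\max_{p_g}\rho_g$ is nonempty; these are points the paper leaves implicit.
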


\subsection{Baseline Scenario Equilibrium}
\label{subsection:baseline-equilibrium}

We now analyze the sellers' equilibrium strategies in the baseline scenario, when the marketplace does not constrain their production decisions.
We determine the amount of data that the sellers produce, when they produce data.

\begin{restatable}{lemma}{lemmaa}{(Baseline Data Production)}
    \label{lemma:seller-baseline-data-production}
    If $\sigma = (p, \{\mu_i\}, \{x^{(j)}\})$ is a Nash equilibrium at which samples are produced for some group $g$, i.e., $x^{([M])}_g > 0$, then every seller $j$ produces $x^{(j)}_g$ samples given by,
    \begin{equation}
        x^{(j)}_g = \frac{1}{M}\left( \frac{\rho_g}{\kappa_g}\alpha\beta \right)^{1/(\beta + 1)},
    \end{equation}
    and the total number of samples produced over all the sellers is,
    \begin{equation}
        x^{([M])}_g = \left( \frac{\rho_g}{\kappa_g}\alpha\beta \right)^{1/(\beta + 1)}.
    \end{equation}
\end{restatable}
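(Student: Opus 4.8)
The plan is to decompose each seller's profit group-by-group and reduce the analysis, for a fixed group $g$ with positive production, to a single strictly concave best-response problem whose first-order condition yields both stated formulas.

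First I would separate the groups. Restricting to truthful bidding (Fact~(\ref{fact:buyer-truthfulness})), the buyers with $\mu_{i,g}\ge p_g$ are exactly the ones for whom the revenue and payment-division terms are nonzero, so summing $\mathcal{PD}_{i,g,j}$ over $i$ produces the factor $\rho_g$ of Definition~(\ref{definition:market-value-of-accuracy}). Using Assumption~(\ref{assumption:zero-inter-group-transfer}) (so that $\mathcal{G}_{i,g}$ depends only on the $g$-th coordinates) together with the separability $\kappa^{(j)T}x^{(j)}=\sum_{g}\kappa_g x^{(j)}_g$ of the quasi-symmetric cost, seller $j$'s profit splits as $v_j=\sum_{g\in G}\bigl(\rho_g\Phi^{(g)}_j-\kappa_g x^{(j)}_g\bigr)$, where $\Phi^{(g)}_j=\sum_{T\subseteq[M]\setminus\{j\}}c_T\bigl(\mathcal{G}(x^{(T)}_g+x^{(j)}_g)-\mathcal{G}(x^{(T)}_g)\bigr)$ depends only on the group-$g$ productions. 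Hence at any Nash equilibrium each $x^{(j)}_g$ independently maximizes $\rho_g\Phi^{(g)}_j-\kappa_g x^{(j)}_g$, and it suffices to fix one group $g$ with $x^{([M])}_g>0$.

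Second --- and this is the step I expect to be the main obstacle --- I would put the Shapley share in a usable form and extract the first-order condition. The useful combinatorial fact is $\sum_{T\subseteq[M]\setminus\{j\},\,|T|=k}c_T=1/M$ for every $k\in\{0,\dots,M-1\}$, which collapses $\Phi^{(g)}_j$, viewed as a function of $x^{(j)}_g$, into a sum of univariate learning-curve increments. Differentiating and using $\mathcal{G}'(n)=\alpha\beta n^{-(\beta+1)}$ on the region where $\mathcal{G}>0$, I would show the equilibrium condition reduces to an equation in the aggregate alone, namely $\rho_g\alpha\beta\,(x^{([M])}_g)^{-(\beta+1)}=\kappa_g$. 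Two delicate points arise: (i) the map $(\cdot)_+$ is kinked, so $\mathcal{G}$ is not globally concave, and one must argue that at a positive-production equilibrium the grand coalition lies on the smooth concave branch of $\mathcal{G}$ (while any sub-coalition sitting below the accuracy floor contributes nothing marginal); and (ii) on that branch $n\mapsto\alpha\beta n^{-(\beta+1)}$ is strictly decreasing, so each seller's objective is strictly concave in its own coordinate, making the first-order condition necessary and sufficient and the best response unique.

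Third, I would solve and symmetrize. Rearranging $\rho_g\alpha\beta(x^{([M])}_g)^{-(\beta+1)}=\kappa_g$ gives $x^{([M])}_g=\bigl(\tfrac{\rho_g}{\kappa_g}\alpha\beta\bigr)^{1/(\beta+1)}$, the claimed aggregate; a sign check rules out the $\mathcal{G}\equiv 0$ branch, since there a seller's profit would be $-\kappa_g x^{(j)}_g$ and producing nothing would be a strictly profitable deviation, contradicting $x^{([M])}_g>0$. For the per-seller amount, all $M$ sellers solve the identical strictly concave sub-problem; a swapping argument --- transposing two sellers' roles inside the Shapley sum and invoking the strict monotonicity of $\mathcal{G}'$ --- shows that two sellers producing different amounts cannot both satisfy their first-order conditions, so $x^{(j)}_g=\tfrac1M x^{([M])}_g=\tfrac1M\bigl(\tfrac{\rho_g}{\kappa_g}\alpha\beta\bigr)^{1/(\beta+1)}$ for every $j$.
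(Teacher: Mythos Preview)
Your proposal lands on the same formulas as the paper but inverts the order of the two key steps, and that inversion leaves a gap in your step~2. The paper first invokes the separately proved symmetry result (Fact~\ref{fact:symmetric-seller-strategies-baseline}, which is exactly your ``swapping'' argument) to conclude that all sellers play identical group-$g$ quantities at any equilibrium; with symmetry already in hand it rewrites seller $j$'s Shapley payoff as a flat $\tfrac{1}{M}$ share of the revenue $\rho_g\,\mathcal{G}(x^{([M])}_g)$, substitutes $x^{([M])}_g=Mx^{(j)}_g$, and differentiates directly in $x^{(j)}_g$. You instead differentiate the full Shapley sum $\Phi^{(g)}_j$ before establishing symmetry and assert that the identity $\sum_{|T|=k}c_T=1/M$ collapses the first-order condition to an equation in the aggregate $x^{([M])}_g$ alone. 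The identity by itself does not accomplish this: the marginal term $\sum_{T}c_T\,\mathcal{G}'\bigl(x^{(T\cup\{j\})}_g\bigr)$ still depends on the individual quantities $x^{(k)}_g$ for $k\neq j$, because two coalitions of the same cardinality need not have the same total unless the other sellers' quantities are already known to coincide. The swapping argument you place in step~3 is precisely what is needed to justify the collapse, but it is invoked after the step that relies on it.

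The remedy is simply to reorder: establish symmetry first (your swapping argument is fine and is what the paper packages as Fact~\ref{fact:symmetric-seller-strategies-baseline}), use it to reduce the Shapley share to $\tfrac{1}{M}\rho_g\,\mathcal{G}(Mx^{(j)}_g)$, and only then take the derivative. Your explicit treatment of the kink in $(\cdot)_+$ and your argument ruling out the $\mathcal{G}\equiv 0$ branch are more careful than the paper's presentation and are worth retaining.
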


The proof is in Appendix~(\ref{proof:seller-baseline-data-production}).
Lemma (\ref{lemma:seller-baseline-data-production}) indicates that when the sellers produce data, they produce more data for groups with greater potential economic value, $\rho_g$, and lower production costs, $\kappa_g$.
Lemma (\ref{lemma:seller-baseline-data-production}) applies when the sellers produce data.
But using it, we can characterize the conditions under which the sellers produce data.

\begin{restatable}{claim}{claima}{(Sellers' Baseline Participation Threshold)}
    \label{claim:baseline-participation-threshold}\\
    Let $\sigma = (p, \{\mu_i\}, \{x^{(j)}\})$ be a Nash equilibrium, and $g$ be any group in $G$.
    The sellers produce a positive number of samples for group $g$, i.e., $x^{([M])}_g > 0$, if and only if $\kappa_g \leq \tau_g$, where $\tau_g$ is a threshold value given by
    \begin{equation}
        \tau_g \triangleq \rho_gc_{\mathcal{G}},
    \end{equation}
    where
    \begin{equation}
        c_{\mathcal{G}} \triangleq \frac{Z^{\frac{\beta+1}{\beta}}}{\alpha^{\frac{1}{\beta}}\left( \beta^{-\frac{\beta}{\beta+1}} + \beta^{\frac{1}{\beta+1}} \right)^{\frac{\beta+1}{\beta}}}.
    \end{equation}
\end{restatable}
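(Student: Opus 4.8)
The plan is to combine Lemma~(\ref{lemma:seller-baseline-data-production}), which fixes the \emph{amount} of group-$g$ data produced whenever production is positive, with an individual-rationality test that decides \emph{whether} it is positive. Two preliminary observations set this up. First, since the Shapley value is efficient and $\mathcal{G}(0)=0$ (as $\beta>0$ makes $0^{-\beta}=+\infty$), the sellers' total receipts for group $g$ equal $\rho_g\mathcal{G}(x^{([M])}_g)$, while their total cost for group $g$ is $\kappa_g x^{([M])}_g$; hence the sellers' aggregate profit attributable to group $g$ is $\rho_g\mathcal{G}(x^{([M])}_g)-\kappa_g x^{([M])}_g$. Second, a seller producing no samples of group $g$ collects nothing from group $g$ — each marginal contribution $\mathcal{G}(\tilde{x}^{(T\cup\{j\})})-\mathcal{G}(\tilde{x}^{(T)})$ in its payment division is then zero — and pays nothing, so its group-$g$ payoff is $0$. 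Because a seller's utility is separable across groups, at any Nash equilibrium with $x^{([M])}_g>0$ each seller's group-$g$ profit must be at least $0$; summing over sellers, the aggregate group-$g$ profit $\rho_g\mathcal{G}(x^{([M])}_g)-\kappa_g x^{([M])}_g$ is nonnegative.

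For the forward direction, let $\sigma$ be a Nash equilibrium with $x^{([M])}_g>0$. By Lemma~(\ref{lemma:seller-baseline-data-production}), $x^{([M])}_g=n^\star\triangleq(\alpha\beta\rho_g/\kappa_g)^{1/(\beta+1)}$, and moreover $\mathcal{G}(n^\star)=Z-\alpha(n^\star)^{-\beta}$ (if instead $\mathcal{G}(n^\star)=0$, the aggregate profit equals $-\kappa_g n^\star<0$, contradicting the previous paragraph). Substituting $n^\star$ into $\rho_g\mathcal{G}(n^\star)-\kappa_g n^\star\ge 0$ gives
\[
  \rho_g Z \;\ge\; \rho_g\alpha\,(n^\star)^{-\beta}+\kappa_g n^\star .
\]
The rest is routine algebra: both terms on the right equal $\kappa_g^{\beta/(\beta+1)}(\alpha\rho_g)^{1/(\beta+1)}$ times $\beta^{-\beta/(\beta+1)}$ and $\beta^{1/(\beta+1)}$ respectively (using $(\beta+1)-\beta=1$ to collapse exponents), so the inequality rearranges to $\kappa_g^{\beta/(\beta+1)}\le \rho_g^{\beta/(\beta+1)}Z\,\alpha^{-1/(\beta+1)}\big(\beta^{-\beta/(\beta+1)}+\beta^{1/(\beta+1)}\big)^{-1}$, and raising both sides to the power $(\beta+1)/\beta>0$ yields exactly $\kappa_g\le\rho_g c_{\mathcal{G}}=\tau_g$. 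Equivalently, the contrapositive reads: if $\kappa_g>\tau_g$, no Nash equilibrium produces any samples of group $g$.

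For the converse, assume $\kappa_g\le\tau_g$ and run this algebra backwards: at the symmetric profile in which every seller produces $n^\star/M$ samples of group $g$, the aggregate group-$g$ profit — hence each seller's — is nonnegative, so no seller gains by dropping group $g$. What remains, and is the step I expect to be the main obstacle, is to verify that $n^\star/M$ is a \emph{full} best response, i.e., that it also dominates every other positive production level; this is precisely the argument carried out inside the proof of Lemma~(\ref{lemma:seller-baseline-data-production}), where one must show that the first-order stationary point is the global maximizer of a seller's group-$g$ profit over all nonnegative production levels — a point of care because that profit is a sum of left-shifted copies of $\mathcal{G}$ minus a linear cost, and $\mathcal{G}$ is concave only beyond its positive-part kink, so global concavity cannot simply be invoked. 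Granting this, the symmetric profile — together with truthful bids (Fact~(\ref{fact:buyer-truthfulness})) and the marketplace best-responding (Fact~(\ref{fact:marketplace-best-response})) — is a Nash equilibrium with $x^{([M])}_g=n^\star>0$. One caveat is worth recording: at the knife-edge $\kappa_g=\tau_g$ the all-zero profile is also a Nash equilibrium (the marginal seller is exactly indifferent), so the biconditional is best read as characterizing when positive production of group $g$ arises as an equilibrium outcome.
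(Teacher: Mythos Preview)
Your proposal is correct and follows essentially the same approach as the paper: both directions pin down the production level via Lemma~(\ref{lemma:seller-baseline-data-production}) and then reduce the participation question to the sign of $\rho_g\mathcal{G}(n^\star)-\kappa_g n^\star$, with identical algebra leading to $\kappa_g\le\tau_g$. The only minor differences are that the paper reaches per-seller profit directly from the symmetric-strategies fact rather than via Shapley efficiency, and in the reverse direction it verifies the kink condition $n^\star>(\alpha/Z)^{1/\beta}$ by invoking a separate inequality (Fact~(\ref{fact:beta-expression})) rather than extracting it from the profit inequality as your ``run the algebra backwards'' does; your caveats on the knife-edge multiplicity and on the need to confirm global optimality of the first-order point are apt, as the paper glosses over both.
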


The proof is in Appendix~(\ref{proof:baseline-participation-threshold}).
Claim (\ref{claim:baseline-participation-threshold}) indicates that the seller's participation at Nash equilibrium depends critically on the relation between $\kappa_g$ and $\tau_g$.
Putting it all together, we can describe the baseline scenario equilibrium as follows.
\begin{theorem}
    \label{theorem:quasi-symmetric-baseline-equilibrium}
    If $(p, \{\mu_i\}, \{x^{(j)}\})$ is a Nash equilibrium, then for each group $g$, the marketplace sets price $p_g$ to maximize $\rho_g$ and the sellers produce samples depending on the following inequality,
    \begin{equation}
        \label{ineq:baseline-participation-threshold}
        \kappa_g \leq \tau_g.
    \end{equation}
    If Inequality (\ref{ineq:baseline-participation-threshold}) holds, every seller $j$ produces
    \begin{equation}
        x^{(j)}_g = \frac{1}{M}\left( \frac{\rho_g}{\kappa_g}\alpha\beta \right)^{1/(\beta + 1)}
    \end{equation}
    samples of group $g$, and otherwise $x^{(j)}_g = 0$.
\end{theorem}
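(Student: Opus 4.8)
The plan is to assemble the theorem from the three preceding equilibrium results, exploiting the fact that in the quasi-symmetric setting (Definition (\ref{definition:quasi-symmetric})) both the marketplace's revenue (Equation (\ref{align:0000})) and each seller's profit (Equation (\ref{eq:seller-utility})) decompose additively over groups, so the equilibrium conditions can be verified one group at a time. First I would fix a Nash equilibrium $\sigma = (p, \{\mu_i\}, \{x^{(j)}\})$; writing it in this form already uses Fact (\ref{fact:buyer-truthfulness}), which lets us take the buyers' bids to be their values-of-accuracy without loss of generality. Since at a Nash equilibrium the marketplace plays a best response against the (now fixed) buyer and seller strategies, Fact (\ref{fact:marketplace-best-response}) immediately gives that for every group $g$ the reserve price satisfies $p_g \in \underset{p}{\arg\max}\,\rho_g$, which is the theorem's first assertion.

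Next I would turn to the sellers, arguing group by group. For a fixed group $g$, Claim (\ref{claim:baseline-participation-threshold}) states that $x^{([M])}_g > 0$ if and only if $\kappa_g \le \tau_g$ with $\tau_g = \rho_g c_{\mathcal{G}}$, which supplies exactly the case split in the theorem. In the case $\kappa_g \le \tau_g$ we then have $x^{([M])}_g > 0$, so the hypothesis of Lemma (\ref{lemma:seller-baseline-data-production}) is met and it yields $x^{(j)}_g = \frac{1}{M}\left( \frac{\rho_g}{\kappa_g}\alpha\beta \right)^{1/(\beta + 1)}$ for every seller $j$ --- the per-seller formula being identical across sellers precisely because all sellers share the common cost structure $\kappa$. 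In the complementary case $\kappa_g > \tau_g$, Claim (\ref{claim:baseline-participation-threshold}) gives $x^{([M])}_g = 0$; since $x^{([M])}_g = \sum_{j \in [M]} x^{(j)}_g$ is a sum of nonnegative terms, each $x^{(j)}_g = 0$. Taking the conjunction over all $g \in G$ finishes the argument.

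The substance of the result lives in Fact (\ref{fact:marketplace-best-response}), Lemma (\ref{lemma:seller-baseline-data-production}), and Claim (\ref{claim:baseline-participation-threshold}); the theorem itself is bookkeeping. The only point I would be careful about is that all three can be invoked simultaneously at the same equilibrium --- which is fine, since each is a statement of the form ``at any Nash equilibrium the marketplace (resp. the sellers) must behave thus'' --- and that the group-by-group reduction is legitimate, i.e., that no cross-group coupling is hidden in the equilibrium conditions. Under Assumption (\ref{assumption:zero-inter-group-transfer}) and the reserve-price allocation mechanism this decomposition holds, so I do not expect a real obstacle here; it is essentially already dispatched by the earlier lemmas.
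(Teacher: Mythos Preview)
Your proposal is correct and matches the paper's approach: the paper presents Theorem~(\ref{theorem:quasi-symmetric-baseline-equilibrium}) with the phrase ``Putting it all together'' and no separate proof, treating it as the direct assembly of Fact~(\ref{fact:buyer-truthfulness}), Fact~(\ref{fact:marketplace-best-response}), Lemma~(\ref{lemma:seller-baseline-data-production}), and Claim~(\ref{claim:baseline-participation-threshold}), which is exactly what you do. Your additional remarks about the group-by-group decomposition being legitimate under Assumption~(\ref{assumption:zero-inter-group-transfer}) are accurate and, if anything, make the bookkeeping more explicit than the paper does.
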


Theorem 4.1 indicates that samples are produced for the groups independently of each other.
Some groups may have zero samples produced.
Other groups may have differing numbers of samples produced.
All of this depends on each group's economic potential.
Distinguishing these possibilities will be important in the next section.
Theorem (\ref{theorem:quasi-symmetric-baseline-equilibrium}) also enables us to describe the aggregate-dataset demographics that hold at equilibrium.

\begin{corollary}
    \label{corollary:aggregate-dataset-demographics-at-equilibrium}
    Let $(p, \{\mu_i\}, \{x^{(j)}\})$ be a Nash equilibrium, and $H = \{g \in G: \kappa_g \leq \tau_g\}$.
    Then for every group $g$, if $g \in H$, then,
    \begin{equation}
        \frac{x^{([M])}_g}{\|x^{([M])}\|} = \frac{\left(\frac{\rho_g}{\kappa_g}\right)^{1/(\beta +1)}}{\sum_{h \in H}\left(\frac{\rho_h}{\kappa_h}\right)^{1/(\beta +1)}},
    \end{equation}
    and $\frac{x^{([M])}_g}{\|x^{([M])}\|} = 0$ otherwise.
\end{corollary}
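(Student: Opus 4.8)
The plan is to obtain the claim by direct substitution from Theorem~(\ref{theorem:quasi-symmetric-baseline-equilibrium}), followed by a single cancellation. Fix a Nash equilibrium $(p, \{\mu_i\}, \{x^{(j)}\})$ and recall $H = \{g \in G : \kappa_g \leq \tau_g\}$. By Theorem~(\ref{theorem:quasi-symmetric-baseline-equilibrium}), for each group $g$, if $g \in H$ then every seller $j$ produces $x^{(j)}_g = \frac{1}{M}\left(\frac{\rho_g}{\kappa_g}\alpha\beta\right)^{1/(\beta+1)}$, so summing over the $M$ sellers gives
\[
x^{([M])}_g = \left(\frac{\rho_g}{\kappa_g}\alpha\beta\right)^{1/(\beta+1)} = (\alpha\beta)^{1/(\beta+1)}\left(\frac{\rho_g}{\kappa_g}\right)^{1/(\beta+1)};
\]
and if $g \notin H$ then $x^{(j)}_g = 0$ for every $j$, hence $x^{([M])}_g = 0$.

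Next I would compute the normalizer. By definition of $\|\cdot\|$ and $x^{([M])}$, we have $\|x^{([M])}\| = \sum_{h \in G} x^{([M])}_h = \sum_{h \in H} x^{([M])}_h$, because the summands with $h \notin H$ vanish. Substituting the expression above yields $\|x^{([M])}\| = (\alpha\beta)^{1/(\beta+1)}\sum_{h \in H}\left(\frac{\rho_h}{\kappa_h}\right)^{1/(\beta+1)}$. Dividing $x^{([M])}_g$ by $\|x^{([M])}\|$, the common factor $(\alpha\beta)^{1/(\beta+1)}$ cancels, producing exactly the stated ratio for $g \in H$, and $0$ for $g \notin H$.

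The only delicate point—and the one I would flag explicitly in the write-up—is the degenerate case $H = \emptyset$: then no samples are produced for any group, $\|x^{([M])}\| = 0$, and the dataset demographics $\gamma(x^{([M])})$ are undefined ($0/0$). The corollary should therefore be read as implicitly conditioned on the market forming, i.e., $H \neq \emptyset$ (equivalently $\|x^{([M])}\| > 0$); under that mild proviso the argument above is complete. No genuine obstacle arises beyond this bookkeeping, since Theorem~(\ref{theorem:quasi-symmetric-baseline-equilibrium}) already pins down each $x^{([M])}_g$ and the remaining work is a one-line normalization.
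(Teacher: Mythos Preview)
Your proposal is correct and matches the paper's approach: the corollary is stated immediately after Theorem~(\ref{theorem:quasi-symmetric-baseline-equilibrium}) without a separate proof, precisely because it follows by the direct substitution and cancellation you describe. Your explicit flagging of the degenerate case $H = \emptyset$ is a nice addition that the paper leaves implicit.
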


Corollary (\ref{corollary:aggregate-dataset-demographics-at-equilibrium}) reveals the dynamics of data production at baseline equilibrium.
Economic disparities drive disparities in the dataset demographics.
But the effect is dampened by the diminishing gains in accuracy as the amount of training data grows.
These findings illuminate the dynamics of the data market in the baseline scenario that can lead to unfair data production at equilibrium.

\subsection{Intervention Scenario Equilibrium}
\label{subsection:intervention-equilibrium}

We now analyze the sellers' equilibrium strategies in the intervention scenario, when the marketplace constrains the sellers' production decisions.
Our goal is to understand how this impacts the levels of data production and the sellers' decisions to participate in the data market.
We determine the amount of data that the sellers produce, when they produce data.

\begin{restatable}{lemma}{lemmab}{(Intervention Data Production)}
    \label{lemma:intervention-data-production}
    Fix a target vector $\gamma$.
    If $\sigma = (p, \{\mu_i\}, \{x^{(j)}\})$ is a Nash equilibrium such that samples are produced, i.e. $n^{([M])} > 0$, then there exists a group $h \in G$ such that every seller $j$ produces
    \begin{equation}
        n^{(j)} = \frac{1}{M}\left( \frac{\alpha\beta}{\kappa^T\gamma}\sum_{g \in H}\rho_g\gamma_g^{-\beta} \right)^{1/(\beta+1)}
    \end{equation}
    \begin{equation}
    \textrm{samples, where}\;
        H \triangleq \{g \in G : \gamma_g \geq \gamma_h \},
    \end{equation}
    and $\gamma_h$ is a minimum value over $\gamma_g$ satisfying
    \begin{equation}
        M\gamma_gn^{(j)} > \left( \frac{\alpha}{Z} \right)^{\frac{1}{\beta}}.
    \end{equation}
\end{restatable}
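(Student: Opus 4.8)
The plan is to mirror the baseline analysis (Lemma~\ref{lemma:seller-baseline-data-production} and Claim~\ref{claim:baseline-participation-threshold}), adapting it to the two features the intervention introduces: the fairness constraint collapses each seller's decision to a scalar, and the groups become coupled through that scalar. First I would invoke the intervention's acceptance rule: at any equilibrium a producing seller must be $\gamma$-balanced, so $x^{(j)}_g = \gamma_g n^{(j)}$, the aggregate is $x^{([M])}_g = \gamma_g n^{([M])}$, and the production cost becomes $(\kappa^T\gamma)\,n^{(j)}$. Substituting the allocation function, buyer truthfulness (Fact~\ref{fact:buyer-truthfulness}), and the marketplace best response (Fact~\ref{fact:marketplace-best-response}, so that group $g$ enters only through $\rho_g$) into the payment division rewrites seller $j$'s profit as
\begin{equation}
  v_j(n^{(j)}) = \sum_{g\in G}\rho_g\sum_{T\subseteq[M]\setminus\{j\}}c_T\Big(\mathcal{G}\big(\gamma_g(n^{(T)}+n^{(j)})\big)-\mathcal{G}\big(\gamma_g n^{(T)}\big)\Big) - (\kappa^T\gamma)\,n^{(j)}.
\end{equation}

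Next I would show, exactly as in Lemma~\ref{lemma:seller-baseline-data-production}, that at any equilibrium with $n^{([M])}>0$ all sellers produce a common $n^{(j)}$; the Shapley weights then telescope (coalitions of each size contributing $1/M$), and a unilateral deviation turns $v_j$ into a one-dimensional concave maximization on the region where the learning curve sits above its positive-part kink $(\alpha/Z)^{1/\beta}$. A group $g$ contributes to a marginal Shapley term only when $\gamma_g n^{([M])}$ clears that kink, so I would let $H$ collect exactly these groups; since the aggregate dataset is split in the fixed proportions $\gamma$, the test ``$\gamma_g n^{([M])}$ above the kink'' is monotone in $\gamma_g$, which is precisely the rule $H=\{g:\gamma_g\ge\gamma_h\}$ with $\gamma_h$ the smallest active $\gamma$-value. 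On $H$ we have $\mathcal{G}'(m)=\alpha\beta m^{-\beta-1}$, and, by the reduction used in the proof of Lemma~\ref{lemma:seller-baseline-data-production}, the best-response condition equates the aggregate marginal value $\sum_{g\in H}\rho_g\gamma_g\,\mathcal{G}'\big(\gamma_g n^{([M])}\big)$ with the marginal cost $\kappa^T\gamma$; solving for the aggregate gives $\big(n^{([M])}\big)^{\beta+1}=\frac{\alpha\beta}{\kappa^T\gamma}\sum_{g\in H}\rho_g\gamma_g^{-\beta}$, and dividing by $M$ yields the claimed $n^{(j)}$, with $H$ described as stated via $M\gamma_g n^{(j)}>(\alpha/Z)^{1/\beta}$.

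I expect the self-referential definition of $H$ to be the main obstacle. The active set fixes the equilibrium production through $\sum_{g\in H}\rho_g\gamma_g^{-\beta}$, yet the production level fixes the active set through the positive-part kink, so one must exhibit a consistent $H$ and argue it is the correct one; since enlarging $H$ by a less-represented group simultaneously increases that sum and decreases every $\gamma_g$ entering the kink test, this needs a careful monotonicity/squeeze argument rather than a direct computation, and it is also where the analogue of the participation threshold of Claim~\ref{claim:baseline-participation-threshold} (here, the condition $H\neq\emptyset$) would be pinned down. Finally, since the kink breaks global concavity of $v_j$, I would, as in the baseline, separately rule out profitable ``jump'' deviations---a seller dropping to $n^{(j)}=0$, or a seller jumping far enough to activate a currently inactive group---so that the first-order condition genuinely characterizes equilibrium.
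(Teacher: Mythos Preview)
Your proposal is correct and follows essentially the same route as the paper: invoke the $\gamma$-balance to collapse to the scalar $n^{(j)}$, use seller symmetry to reduce the Shapley payment to a $1/M$ share of $\sum_g\rho_g\mathcal{G}(M\gamma_g n^{(j)})$, identify the active set $H$ via the positive-part kink, and solve the first-order condition. The only difference is one of thoroughness: the paper proves seller symmetry as a separate fact and then computes the FOC directly, without addressing the self-consistency of $H$ or ruling out the ``jump'' deviations you flag; so the obstacles you anticipate in your last two paragraphs are real, but the paper's own proof simply does not treat them.
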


The proof is in Appendix~(\ref{proof:intervention-data-production}).
Lemma (\ref{lemma:intervention-data-production}) shows how coupling data production across the groups via $\gamma$ affects data production.
The sellers produce more data as the groups' potential economic values increase, but this is mediated by their required representation, $\rho_g\gamma_g^{-\beta}$.
And the sellers produce more data as the marginal production cost $\kappa^T\gamma$ decreases.
We can use Lemma (\ref{lemma:intervention-data-production}) to give a necessary condition for the sellers to produce data.

\begin{restatable}{claim}{claimb}{(Sellers' Intervention Participation Thresholds)}
    \label{claim:intervention-participation-threshold}
    Fix a target vector $\gamma$.
    Let $\sigma = (p, \{\mu_i\}, \{x^{(j)}\})$ be a Nash equilibrium.
    If the sellers produce a positive number of samples, i.e., $n^{([M])} > 0$, then there exists a group $h \in G$ such that the sellers' marginal production cost, $\kappa^T\gamma$, is at most a threshold value, $\tau_H(\rho, \gamma)$, given by,
    \begin{equation}
        \tau_H(\rho, \gamma) = \frac{\left(\sum_{g \in H}\rho_g\right)^{\frac{\beta+1}{\beta}}}{\left( \sum_{g \in H}\rho_g\gamma_g^{-\beta} \right)^{\frac{1}{\beta}}}
        \cdot
        c_{\mathcal{G}}
    \end{equation}
    \begin{equation}
        \textrm{where}\;H \triangleq \{g \in G : \gamma_g \geq \gamma_h \},
    \end{equation}
    and $\gamma_h$ is a minimum value over $\gamma_g$ satisfying
    \begin{equation}
        \gamma_gn^{([M])} > \left( \frac{\alpha}{Z} \right)^{\frac{1}{\beta}}.
    \end{equation}
\end{restatable}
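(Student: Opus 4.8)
The plan is to argue from the sellers' individual rationality at equilibrium, exactly mirroring the baseline argument behind Claim~(\ref{claim:baseline-participation-threshold}). At a Nash equilibrium with $n^{([M])} > 0$, no seller can profitably deviate to producing nothing (the empty dataset is always feasible and yields zero profit), so each seller's equilibrium profit must be nonnegative. I will compute that profit in closed form, substitute the equilibrium production levels supplied by Lemma~(\ref{lemma:intervention-data-production}), and show that $v_j \ge 0$ reduces, after routine algebra, precisely to $\kappa^T\gamma \le \tau_H(\rho,\gamma)$ for the group $h$ (hence the set $H$) furnished by that lemma.

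First I would pin down the sellers' revenue share. In the intervention scenario every accepted dataset is a scalar multiple of $\gamma$ and, by Lemma~(\ref{lemma:intervention-data-production}), all sellers produce the same quantity $n^{(j)}$, so the sellers are interchangeable in the induced game and their Shapley shares are equal. By the efficiency property of the Shapley value together with $\mathcal{G}(0)=0$, the total divided among the sellers for group $g$ and a participating buyer $i$ telescopes to $p_g\mathcal{G}(x^{([M])}_g)$; summing over buyers with $\mu_{i,g}\ge p_g$ gives $\rho_g\mathcal{G}(x^{([M])}_g)$ (consistent with Equation~(\ref{align:0000})), so each seller receives a $1/M$ fraction of $\sum_{g\in G}\rho_g\,\mathcal{G}(\gamma_g n^{([M])})$. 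Subtracting the cost $\kappa^T x^{(j)} = (\kappa^T\gamma)\,n^{(j)} = (\kappa^T\gamma)\,n^{([M])}/M$ yields
\[
v_j \;=\; \frac{1}{M}\!\left[\sum_{g\in G}\rho_g\,\mathcal{G}\!\big(\gamma_g n^{([M])}\big)\;-\;(\kappa^T\gamma)\,n^{([M])}\right].
\]

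Next I would evaluate the learning-curve terms. By the definition of $H$ and of $\gamma_h$ in Lemma~(\ref{lemma:intervention-data-production}), for $g\in H$ we have $\gamma_g n^{([M])} > (\alpha/Z)^{1/\beta}$, so the positive-part operator is not binding and $\mathcal{G}(\gamma_g n^{([M])}) = Z - \alpha\gamma_g^{-\beta}(n^{([M])})^{-\beta}$, while for $g\notin H$ the term vanishes. Writing $n = n^{([M])}$, $R = \sum_{g\in H}\rho_g$, $S = \sum_{g\in H}\rho_g\gamma_g^{-\beta}$, and $K = \kappa^T\gamma$ for brevity, and substituting the equilibrium identity $n^{\beta+1} = \alpha\beta S/K$ from the lemma (which gives $\alpha S\,n^{-\beta} = Kn/\beta$), the profit collapses to $v_j = \frac{1}{M}\big(ZR - \tfrac{\beta+1}{\beta}Kn\big)$. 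Imposing $v_j\ge 0$ gives $n \le \beta ZR/((\beta+1)K)$; substituting $n = (\alpha\beta S/K)^{1/(\beta+1)}$, raising both sides to the power $\beta+1$, clearing $K^{\beta+1}$, and taking $\beta$-th roots isolates $K \le \frac{\beta\, Z^{(\beta+1)/\beta} R^{(\beta+1)/\beta}}{(\beta+1)^{(\beta+1)/\beta}\,\alpha^{1/\beta}S^{1/\beta}}$.

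The last step is to recognize this bound as $\tau_H(\rho,\gamma) = R^{(\beta+1)/\beta}S^{-1/\beta}c_{\mathcal{G}}$, which amounts to the elementary identity $\beta\big(\beta^{-\beta/(\beta+1)} + \beta^{1/(\beta+1)}\big)^{(\beta+1)/\beta} = (\beta+1)^{(\beta+1)/\beta}$; this follows by writing $\beta = (\beta^{\beta/(\beta+1)})^{(\beta+1)/\beta}$ and pulling the factor inside the bracket, where $\beta^{\beta/(\beta+1)}(\beta^{-\beta/(\beta+1)}+\beta^{1/(\beta+1)}) = 1+\beta$. I expect the main obstacle to be bookkeeping rather than anything conceptual: correctly tracking which groups lie in $H$ (equivalently, that exactly the $H$-indexed terms of $w(p)$ and of $v_j$ are nonzero at the claimed equilibrium) and making the Shapley-symmetry reduction rigorous so that the per-seller profit formula is exact; the remaining manipulations, including the constant-matching identity, are routine.
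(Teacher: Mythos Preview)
Your proposal is correct and follows essentially the same route as the paper's proof: write the symmetric-seller profit as $\frac{1}{M}\big(\sum_g\rho_g\mathcal{G}(\gamma_g n^{([M])})\big)-(\kappa^T\gamma)n^{([M])}/M$, substitute the equilibrium $n^{([M])}$ from Lemma~(\ref{lemma:intervention-data-production}) (which also supplies $h$ and $H$), and read off the threshold from $v_j\ge 0$. The paper leaves the final algebra and the constant-matching identity implicit (``straightforward algebra''), whereas you spell them out, but the argument is the same.
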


The proof is in Appendix~(\ref{proof:claim:intervention-participation-threshold}).
Claim~(\ref{claim:intervention-participation-threshold}) tells us that if the sellers produce data at Nash equilibrium in the intervention scenario, then they must be profitably monetizing some groups.
Altogether, we have the following description of the intervention scenario equilibrium.

\begin{theorem}
    \label{theorem:intervention-equilibrium}
    Let $(p, \{\mu_i\}, \{x^{(j)}\})$ be a Nash equilibrium, then for each group $g$, the marketplace sets reserve price $p_g$ to maximize $\rho_g$ and each seller $j$ produces data if and only if the sellers' marginal production cost, $\kappa^T\gamma$, is at most a threshold value, $\tau_H(\rho, \gamma)$, given by,
    \begin{equation}
        \tau_H(\rho, \gamma) = \frac{\left(\sum_{g \in H}\rho_g\right)^{\frac{\beta+1}{\beta}}}{\left( \sum_{g \in H}\rho_g\gamma_g^{-\beta} \right)^{\frac{1}{\beta}}}
        \cdot
        c_{\mathcal{G}}
    \end{equation}
    \begin{equation}
        \textrm{samples, where}\; H \triangleq \{g \in G : \gamma_g \geq \gamma_h \},
    \end{equation}
    for some group $h$ such that $\gamma_h$ is a minimum value over $\gamma_g$ satisfying
    \begin{equation}
        \gamma_gn^{([M])} > \left( \frac{\alpha}{Z} \right)^{\frac{1}{\beta}}.
    \end{equation}
\end{theorem}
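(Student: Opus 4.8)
The plan is to read Theorem~(\ref{theorem:intervention-equilibrium}) as a consolidation of the pieces already in place, supplying only the one direction that Claim~(\ref{claim:intervention-participation-threshold}) leaves open. Fix a Nash equilibrium $\sigma = (p, \{b_i\}, \{x^{(j)}\})$. First I would apply Fact~(\ref{fact:buyer-truthfulness}) to replace the buyers' strategies by truthful bids $b_i = \mu_i$, so that $\sigma = (p, \{\mu_i\}, \{x^{(j)}\})$, and then Fact~(\ref{fact:marketplace-best-response}) to force the marketplace, for every group $g$, to set $p_g$ so as to maximize $\rho_g$. This disposes of the buyers and the marketplace and reduces the statement to a claim about the sellers' equilibrium production.

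For the sellers I split on whether $n^{([M])} > 0$ in $\sigma$. If it is, Lemma~(\ref{lemma:intervention-data-production}) gives directly that every seller produces $n^{(j)} = \tfrac{1}{M}\bigl(\tfrac{\alpha\beta}{\kappa^T\gamma}\sum_{g \in H}\rho_g\gamma_g^{-\beta}\bigr)^{1/(\beta+1)}$, with $H = \{g : \gamma_g \geq \gamma_h\}$ and $\gamma_h$ the minimal value satisfying $\gamma_g n^{([M])} > (\alpha/Z)^{1/\beta}$, and Claim~(\ref{claim:intervention-participation-threshold}) supplies the necessary inequality $\kappa^T\gamma \leq \tau_H(\rho,\gamma)$. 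Since at such an equilibrium all sellers produce the same positive amount, ``the sellers produce data'' is unambiguous, and this establishes the ``only if'' half together with the production formula.

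The remaining work is the converse: if $n^{([M])} = 0$ in $\sigma$, then $\kappa^T\gamma > \tau_H(\rho,\gamma)$. I would argue by a single-seller deviation from the all-zero profile. When every other seller produces nothing, $\mathcal{G}(\tilde{x}^{(T)}) = \mathcal{G}(0) = 0$ for every $T \subseteq [M]\setminus\{j\}$, so each marginal-contribution term in $\mathcal{PD}_{i,g,j}$ for a seller $j$ that instead produces the ($\gamma$-balanced, as the intervention requires) quantity $n^{(j)} = n$ collapses to $p_g\,\mathbf{1}[\mu_{i,g} \geq p_g]\,\mathcal{G}(\gamma_g n)$; since the Shapley weights $c_T$ over $T \subseteq [M]\setminus\{j\}$ sum to $1$, the deviator collects the entire revenue $\sum_g \rho_g \mathcal{G}(\gamma_g n)$ and its profit is $\Pi(n) = \sum_{g \in H(n)} \rho_g\bigl(Z - \alpha(\gamma_g n)^{-\beta}\bigr) - (\kappa^T\gamma)\,n$, where $H(n) = \{g : \gamma_g n > (\alpha/Z)^{1/\beta}\}$ collects the groups on which accuracy is positive. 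Maximizing $\Pi$ over $n > 0$ by the same computation that produced Lemma~(\ref{lemma:intervention-data-production}) and Claim~(\ref{claim:intervention-participation-threshold}) — optimize within each regime on which $H(n)$ is constant, then compare the candidates at the regime boundaries — I would show $\max_{n > 0}\Pi(n) > 0$ precisely when $\kappa^T\gamma < \tau_H(\rho,\gamma)$, with $H$ the active set at the maximizer. A strictly profitable deviation is incompatible with $\sigma$ being a Nash equilibrium, so $n^{([M])} = 0$ forces $\kappa^T\gamma \geq \tau_H(\rho,\gamma)$; the boundary case $\kappa^T\gamma = \tau_H$, where the best deviation earns exactly zero, is handled exactly as in the baseline Claim~(\ref{claim:baseline-participation-threshold}).

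I expect the main obstacle to be the self-referential description of the active set $H$ (equivalently of the group $h$): $H$ is defined through $n^{([M])}$, which itself depends on $H$. The resolution is the monotonicity of $H(n)$ — as $n$ grows, groups enter in order of decreasing $\gamma_g$ — which makes the self-consistent choice of $H$ unique, makes $\tau_H(\rho,\gamma)$ well-defined, and is precisely what lets the regime-by-regime optimization of $\Pi(n)$ close up cleanly (care is needed only when $H(n)$ jumps exactly at the unconstrained optimizer, which is dispatched by comparing the boundary candidates). Beyond this bookkeeping and the collapsing Shapley computation, no ingredients are needed that are not already in Facts~(\ref{fact:buyer-truthfulness})--(\ref{fact:marketplace-best-response}), Lemma~(\ref{lemma:intervention-data-production}), and Claim~(\ref{claim:intervention-participation-threshold}).
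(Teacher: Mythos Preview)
Your proposal is correct and, in fact, does more work than the paper does. In the paper, Theorem~(\ref{theorem:intervention-equilibrium}) is presented purely as a summary --- ``Altogether, we have the following description of the intervention scenario equilibrium'' --- and is given no standalone proof: it is read off directly from Fact~(\ref{fact:marketplace-best-response}), Lemma~(\ref{lemma:intervention-data-production}), and Claim~(\ref{claim:intervention-participation-threshold}). Your treatment of the buyers, the marketplace, and the ``only if'' direction for the sellers lines up exactly with that packaging.

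Where you diverge is that you noticed, correctly, that Claim~(\ref{claim:intervention-participation-threshold}) is stated only as a necessary condition (contrast the baseline Claim~(\ref{claim:baseline-participation-threshold}), which is explicitly ``if and only if'' and whose proof does both directions). The paper never fills this gap for the intervention scenario, so your single-seller deviation argument is genuinely additional content. The argument is sound: when all other sellers sit at zero, every Shapley marginal contribution collapses to $\mathcal{G}(\gamma_g n)$, the weights sum to one, and the deviator's profit is $\Pi(n) = \sum_{g}\rho_g\mathcal{G}(\gamma_g n) - (\kappa^T\gamma)n$. A useful sanity check you might add is that, after the change of variable $m = Mn^{(j)}$, the symmetric-equilibrium per-seller profit equals $\tfrac{1}{M}\Pi(m)$, so the deviation problem and the participation problem share the same threshold $\tau_H(\rho,\gamma)$ --- which is why your converse dovetails with Claim~(\ref{claim:intervention-participation-threshold}) exactly. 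Your remarks on the self-referential active set $H$ and its monotone resolution are also apt; the paper never spells this out.
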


\section{Intervention can prevent formation in emerging markets}
\label{section:fairness-can-backfire}

The emergence of ethical profit-seeking firms in online data markets raises a fundamental question: Why now?
Online data markets have existed since at least 2009~\cite{spiekermann2019data}.
Interest in ethical issues related to data date back to at least the 1970's~\cite{dalenius1977towards}.
Why would these firms emerge in this moment, and not earlier or later?

In this section we get at this question by considering emerging markets, i.e., markets where the economic potential is small or not yet fully realized, and data production has not yet begun or is still low.
What happens if a fairness constraint is imposed from the get-go?
Can the market overcome the cost of fairness and still incentivize sellers to participate?
Or will the cost of fairness drive the sellers out of the market?

Formally, we study the conditions under which the sellers produce data in the baseline scenario but not in the intervention scenario.
To make this precise we define market formation and intervention backfire.

\begin{definition}{(Market Formation)}
  Fix a set of $N$ buyers and $M$ sellers.
  In either the baseline or intervention scenario, we say {\em the market forms} if there exists a Nash equilibrium, $\sigma = (p, \{\mu_i\}, \{x^{(j)})$, such that samples are produced, i.e., $\|x^{([M])}\| > 0$.
\end{definition}

\begin{definition}{(Intervention backfire)}
    Let $\gamma$ be a target vector, and fix a data market.
    We say that {\em the intervention $\gamma$ backfires in the data market} if the market forms in the baseline scenario but not in the intervention scenario.
\end{definition}

\begin{restatable}{theorem}{theorema}
    \label{theorem:every-intervention-can-backfire}
    For every target vector $\gamma$ there exists a data market in which $\gamma$ backfires.
\end{restatable}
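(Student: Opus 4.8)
The plan is to play the two participation thresholds against each other. Claim~(\ref{claim:baseline-participation-threshold}) says that in the baseline scenario samples are produced for group $g$ iff $\kappa_g \le \rho_g c_{\mathcal{G}}$, whereas Claim~(\ref{claim:intervention-participation-threshold}) says that if the intervention scenario produces \emph{any} samples then $\kappa^T\gamma \le \tau_H(\rho,\gamma)$ for some upper set $H = \{g : \gamma_g \ge \gamma_h\}$. It therefore suffices to build a data market in which the baseline threshold holds for at least one group while the intervention threshold fails for \emph{every} admissible $H$. The point that makes the latter tractable is a bound on $\tau_H$ that is uniform over $H$: since $\gamma_g \le 1$ and $\beta > 0$ we have $\gamma_g^{-\beta} \ge 1$, hence $\sum_{g \in H}\rho_g\gamma_g^{-\beta} \ge \sum_{g\in H}\rho_g$ and so
\begin{equation}
    \tau_H(\rho,\gamma) = \frac{\left(\sum_{g\in H}\rho_g\right)^{\frac{\beta+1}{\beta}}}{\left(\sum_{g\in H}\rho_g\gamma_g^{-\beta}\right)^{\frac{1}{\beta}}}\,c_{\mathcal{G}} \le \left(\sum_{g\in G}\rho_g\right)c_{\mathcal{G}}.
\end{equation}

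\paragraph{Construction.}
Assume $|G|\ge 2$ (for $|G|=1$ the intervention is vacuous). Given $\gamma$, fix a group $g'$ with $\gamma_{g'}>0$ (one exists since $\sum_g\gamma_g=1$) and a distinct group $g^{\star}\ne g'$. Let the market have $N=1$ buyer and $M=1$ seller and any valid learning curve $Z,\alpha,\beta>0$, which fixes the constant $c_{\mathcal{G}}>0$. Set the buyer's values-of-accuracy to $\mu_{1,g^{\star}}=v$ for a fixed $v>0$ and $\mu_{1,g}=\epsilon$ for every $g\ne g^{\star}$, with a small $\epsilon>0$. Then at every strategy profile $\rho_{g^{\star}}\le v$ and $\rho_g\le\epsilon$ for $g\ne g^{\star}$ (a single buyer values $g\ne g^\star$ at $\epsilon$ and prices are positive), so $\sum_{g\in G}\rho_g \le v + (|G|-1)\epsilon$; moreover the marketplace's best response achieves $\rho_{g^{\star}}=v$. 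For the cost structure set $\kappa_{g^{\star}} = v c_{\mathcal{G}}/2$, set $\kappa_{g'} = K$ for a constant $K$ to be chosen, and set $\kappa_g = 1$ for all remaining groups.

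\paragraph{Verification.}
In the baseline scenario $\kappa_{g^{\star}} = v c_{\mathcal{G}}/2 < v c_{\mathcal{G}} = \rho_{g^{\star}} c_{\mathcal{G}} = \tau_{g^{\star}}$, so by Claim~(\ref{claim:baseline-participation-threshold}) and Lemma~(\ref{lemma:seller-baseline-data-production}) the seller produces strictly positive samples for $g^{\star}$; the profile in which the buyer bids truthfully (Fact~(\ref{fact:buyer-truthfulness})), the marketplace best-responds (Fact~(\ref{fact:marketplace-best-response})), and the seller produces the amount of Lemma~(\ref{lemma:seller-baseline-data-production}) is a Nash equilibrium with $\|x^{([M])}\|>0$, so the baseline market forms. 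In the intervention scenario, suppose for contradiction that the market forms. Then by Claim~(\ref{claim:intervention-participation-threshold}) and the uniform bound above,
\begin{equation}
    \kappa^T\gamma \le \tau_H(\rho,\gamma) \le \left(\sum_{g\in G}\rho_g\right)c_{\mathcal{G}} \le \bigl(v+(|G|-1)\epsilon\bigr)c_{\mathcal{G}}.
\end{equation}
But $\kappa^T\gamma = \sum_{g\in G}\kappa_g\gamma_g \ge \kappa_{g'}\gamma_{g'} = K\gamma_{g'}$, so taking $K > \bigl(v+(|G|-1)\epsilon\bigr)c_{\mathcal{G}}/\gamma_{g'}$ contradicts the display. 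Hence the intervention market does not form, and $\gamma$ backfires in this market.

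\paragraph{Main obstacle.}
The one step that is not routine parameter bookkeeping is the uniform bound on $\tau_H(\rho,\gamma)$. The pivotal group $h$, and hence the set $H$, is endogenous---determined at equilibrium by $\gamma_h n^{([M])} > (\alpha/Z)^{1/\beta}$---so one cannot simply compare $\kappa^T\gamma$ against a single threshold but must rule out production for all $O(|G|)$ possible upper sets $H$ at once. The estimate $\gamma_g^{-\beta}\ge 1$ collapses these into the single inequality above; everything else is choosing $v$, $\epsilon$ and, above all, the blow-up parameter $K$ so as to separate the baseline and intervention thresholds.
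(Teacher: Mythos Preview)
Your proof is correct and follows essentially the same approach as the paper's: choose one group whose cost is low enough that the baseline market forms there, and another group with $\gamma_{g'}>0$ whose cost $K$ is blown up so that $\kappa^T\gamma$ exceeds every intervention participation threshold $\tau_H$. The paper handles the last step by simply defining $\lceil\tau\rceil=\max_{H\subseteq G}\tau_H(\rho,\gamma)$ and taking $\kappa_{h'}>\lceil\tau\rceil/\gamma_{h'}$; you instead prove the explicit uniform estimate $\tau_H(\rho,\gamma)\le\bigl(\sum_{g\in G}\rho_g\bigr)c_{\mathcal G}$ via $\gamma_g^{-\beta}\ge 1$. That extra line buys you a small robustness advantage: because your bound on $\sum_g\rho_g$ holds at \emph{every} strategy profile (one buyer forces $\rho_g\le\mu_{1,g}$), you do not have to argue that the marketplace plays the same reserve prices in the intervention scenario as in the baseline---a subtlety the paper sidesteps here but has to confront later in the proof of Theorem~\ref{theorem:only-u-is-safe-in-F}.
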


The proof is in Appendix~(\ref{proof:every-intervention-can-backfire}), and follows from constructing a market in which the production cost for one group is arbitrarily high relative to its economic potential.
Theorem (\ref{theorem:every-intervention-can-backfire}) tells us that all interventions are risky when the intervention requires the sellers to begin producing data for some group.
If the marketplace does not choose the target vector $\gamma$ carefully, then the sellers may opt out of the market.

This is striking because this magnifies the cost of fairness {\em to the maximum extent possible and to every single agent}, i.e., the cost of fairness for every agent is its full baseline utility.
And this can harm the very groups that are the intended beneficiaries of the intervention.
Some groups may be better off with some samples in the baseline scenario---even if they are under-represented---versus no samples in the intervention scenario.

Yet, Theorem (\ref{theorem:every-intervention-can-backfire}) is tempered by its assumption that no data is produced for some of the groups in the baseline scenario.
It is sensible to ask whether the backfire risk behaves differently when data is produced for all the groups in the baseline scenario.

\begin{definition}{(Fully-forming Markets)}
  Fix a set of $N$ buyers and $M$ sellers.
  We say {\em the market fully-forms} if there exists a Nash equilibrium in the baseline scenario, $\sigma = (p, \{\mu_i\}, \{x^{(j)})$, such that samples are produced for every group, i.e., for all $g \in G$, $\|x^{([M])}_g\| > 0$.
\end{definition}

In contrast to the general case, our next result shows there exists a target vector that never backfires in fully-forming markets.

\begin{definition}{(Uniform intervention)}
    The {\em uniform intervention}, denoted $u$, is the target vector 
        $u_g \triangleq {1}/{|G|},$
    for every group $g$.
\end{definition}

\begin{restatable}{theorem}{theoremb}
    \label{theorem:u-is-safe-in-F}
    Let $N$ buyers and $M$ sellers be a fully-forming data market.
    If the marketplace chooses the uniform intervention, i.e., $\gamma = u$, then the data market forms in the intervention scenario.
\end{restatable}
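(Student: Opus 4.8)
The plan is to show that, for $\gamma = u$, the sellers' participation condition in the intervention scenario is exactly the sum over groups of the baseline participation conditions, so that a fully-forming market automatically satisfies it.

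First I would pin down the quantities that the two scenarios share. By Fact~(\ref{fact:buyer-truthfulness}) buyers bid truthfully in either scenario, and by Fact~(\ref{fact:marketplace-best-response}) the marketplace sets each reserve price to maximize $\rho_g$; since $\max_p \rho_g$ depends only on the fixed buyer values $\{\mu_{i,g}\}$ (Definition~(\ref{definition:market-value-of-accuracy})), the equilibrium potential economic values $\rho_g$ are the same market primitives in the baseline and intervention scenarios. The fully-forming hypothesis furnishes a baseline equilibrium with $x^{([M])}_g > 0$ for every group $g$; by Claim~(\ref{claim:baseline-participation-threshold}) this is equivalent to $\kappa_g \le \tau_g = \rho_g\, c_{\mathcal{G}}$ for every $g$. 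Summing over $g \in G$ yields the key inequality $\sum_{g \in G}\kappa_g \le c_{\mathcal{G}}\sum_{g \in G}\rho_g$.

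Next I would evaluate the intervention-scenario characterization of Theorem~(\ref{theorem:intervention-equilibrium}) at $\gamma = u$. Since all coordinates $u_g = 1/|G|$ coincide, the set $H = \{g : \gamma_g \ge \gamma_h\}$ is all of $G$ (for any admissible threshold group $h$), so the relevant threshold is $\tau_G(\rho, u)$; using $u_g^{-\beta} = |G|^\beta$, the exponents in the ratio of sums collapse and $\tau_G(\rho, u) = \frac{(\sum_g \rho_g)^{(\beta+1)/\beta}}{(|G|^\beta \sum_g \rho_g)^{1/\beta}}\, c_{\mathcal{G}} = \frac{c_{\mathcal{G}}}{|G|}\sum_{g}\rho_g$, while $\kappa^T u = \frac{1}{|G|}\sum_{g}\kappa_g$. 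Hence the sellers' participation condition $\kappa^T u \le \tau_G(\rho, u)$ is equivalent to $\sum_{g}\kappa_g \le c_{\mathcal{G}}\sum_{g}\rho_g$ --- precisely the inequality from the previous step. Conceptually, the uniform intervention forces every group into the same production share $1/|G|$, collapsing the market into one effective instance with pooled economic value $\sum_g\rho_g$ and pooled cost $\sum_g\kappa_g$, and $\tau_G(\rho,u)$ is just the $\tau_g$ of that pooled instance.

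Finally I would assemble the candidate equilibrium: marketplace prices maximizing each $\rho_g$, truthful buyers, and each seller producing the amount prescribed by Lemma~(\ref{lemma:intervention-data-production}), which here equals $\frac{|G|}{M}\big(\frac{\alpha\beta}{\sum_g\kappa_g}\sum_g\rho_g\big)^{1/(\beta+1)}$, i.e. $x^{([M])}_g = m^\ast \triangleq \big(\frac{\alpha\beta\sum_g\rho_g}{\sum_g\kappa_g}\big)^{1/(\beta+1)} > 0$ for every $g$. Because $\sum_g\kappa_g \le c_{\mathcal{G}}\sum_g\rho_g$ one gets $m^\ast > (\alpha/Z)^{1/\beta}$ (the pooled instance yields nonnegative profit, hence strictly positive accuracy), which is exactly what makes $H = G$ self-consistent; together with $\kappa^T u \le \tau_G(\rho, u)$ this exhibits a Nash equilibrium with $n^{([M])} = |G|\,m^\ast > 0$, so the market forms. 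I expect the one genuinely delicate point to be closing this self-consistency loop --- the set $H$ is defined through $n^{([M])}$, which is in turn computed from $H$ --- and ruling out the degenerate possibility $H = \emptyset$ (no profitably monetized group, hence no production); the symmetry of $u$ is what makes activity all-or-nothing across groups and thereby resolves it.
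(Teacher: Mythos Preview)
Your proposal is correct and follows essentially the same approach as the paper: both sum the baseline per-group conditions $\kappa_g \le \tau_g$ to obtain $\sum_g \kappa_g \le c_{\mathcal{G}}\sum_g \rho_g$, compute $\tau_G(\rho,u) = \tfrac{c_{\mathcal{G}}}{|G|}\sum_g \rho_g$ and $\kappa^T u = \tfrac{1}{|G|}\sum_g \kappa_g$ to verify the intervention participation condition, and then check that the resulting production level clears the learning ante $(\alpha/Z)^{1/\beta}$ so that $H = G$ is self-consistent. Your explicit framing of the ``pooled instance'' and the self-consistency loop for $H$ is if anything a bit more transparent than the paper's presentation, but the argument is the same.
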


The proof is in Appendix~(\ref{proof:u-is-safe-in-F}).
It is suprising that there exists a target vector that never backfires in fully-forming markets.
It is natural to ask: How much flexibility does the marketplace have in fully-forming markets?

\begin{restatable}{theorem}{theoremc}
    \label{theorem:only-u-is-safe-in-F}
    Let $\gamma$ be the marketplace's target vector.
    If $\gamma$ is not the uniform intervention, i.e., $\gamma \neq u$, then there exists a data-market that is fully forming in the baseline scenario but does not form in the intervention scenario.
\end{restatable}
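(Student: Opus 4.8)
The plan is to realize the required market with a single buyer and a single seller, chosen so that every group sits \emph{exactly} at its baseline participation boundary --- which makes the baseline market fully form --- while the cross-group coupling imposed by a non-uniform $\gamma$ forces the sellers' aggregate marginal cost strictly above \emph{every} intervention threshold. I will take $\gamma$ to have full support (all $\gamma_g > 0$), which is the relevant case since $u$ does; note that $\gamma \neq u$ forces $\gamma$ to be non-constant on $G$ (a target vector all of whose coordinates are equal must equal $u$), and full support is just what is needed for the quantities $\gamma_g^{-\beta}$ to be meaningful. The one analytic input is a power-mean/Jensen comparison for the strictly convex map $t \mapsto t^{-\beta}$.

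Concretely, I would fix an arbitrary positive vector $\rho \in \mathbb{R}^{|G|}_{>0}$ (say $\rho_g = 1$), set the single buyer's value-of-accuracy to $\mu_{1,g} = \rho_g$ for each $g$, and set the single seller's cost structure to $\kappa_g = \rho_g c_{\mathcal{G}}$. By Fact~(\ref{fact:marketplace-best-response}) the marketplace best-responds with reserve prices $p_g = \rho_g$, so the potential economic value of group $g$ is exactly $\rho_g$. Since $\kappa_g = \rho_g c_{\mathcal{G}} = \tau_g$ for every $g$, Claim~(\ref{claim:baseline-participation-threshold}) (equivalently Theorem~(\ref{theorem:quasi-symmetric-baseline-equilibrium})) gives $x^{([M])}_g > 0$ for all $g$ at the baseline equilibrium, so this market fully forms in the baseline scenario.

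The crux is then to show $\kappa^T\gamma > \tau_H(\rho,\gamma)$ for every nonempty $H \subseteq G$. Writing $\rho^H = \sum_{g \in H}\rho_g$ and $q_g = \rho_g/\rho^H$ for $g \in H$, the definition of $\tau_H$ rearranges (using $\tfrac{\beta+1}{\beta} - \tfrac1\beta = 1$) to $\tau_H(\rho,\gamma)/c_{\mathcal{G}} = \rho^H \big(\sum_{g \in H} q_g \gamma_g^{-\beta}\big)^{-1/\beta}$, i.e.\ $\rho^H$ times the weighted power mean of the $\gamma_g$ of order $-\beta < 0$. Jensen for the strictly convex $t \mapsto t^{-\beta}$ then yields $\big(\sum_{g\in H} q_g\gamma_g^{-\beta}\big)^{-1/\beta} \le \sum_{g\in H} q_g\gamma_g$, hence $\tau_H(\rho,\gamma)/c_{\mathcal{G}} \le \sum_{g \in H}\rho_g\gamma_g \le \sum_{g \in G}\rho_g\gamma_g = \kappa^T\gamma / c_{\mathcal{G}}$, and at least one of these is strict: the last one whenever $H \subsetneq G$ (because $\rho_g\gamma_g > 0$ for $g \notin H$), and the Jensen step whenever $H = G$ (because $\gamma$ is non-constant on $G$). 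To finish: if the market formed in the intervention scenario there would be a Nash equilibrium with $n^{([M])} > 0$, and then Theorem~(\ref{theorem:intervention-equilibrium}) (via Claim~(\ref{claim:intervention-participation-threshold})) would exhibit a nonempty $H = \{g : \gamma_g \ge \gamma_h\}$ with $\kappa^T\gamma \le \tau_H(\rho,\gamma)$, contradicting the strict inequality just established; so no such equilibrium exists and the market does not form under the intervention.

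The only real work is this power-mean comparison together with checking that the inequality is strict for \emph{every} $H$ that can arise at equilibrium; the combinatorics of which sets $H$ appear (the ``prefixes'' in the $\gamma$-ordering) is immaterial once the bound is proved for all nonempty $H$. The point worth flagging in the write-up is the full-support hypothesis on $\gamma$: for target vectors that are constant on a proper support (e.g.\ $\gamma = (1/2,1/2,0)$) the chain of inequalities degenerates to equality, and indeed such interventions cannot be driven to backfire in a fully-forming market --- so the statement should be read for full-support target vectors, as $u$ is.
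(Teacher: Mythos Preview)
Your argument is correct and reaches the same conclusion as the paper, but by a genuinely shorter route.

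Both proofs construct essentially the same market: one buyer with values $\mu_{1,g}$ and seller costs $\kappa_g=\mu_{1,g}\,c_{\mathcal G}$, so that each group sits exactly on its baseline participation threshold $\kappa_g=\tau_g$ and the market fully forms. The divergence is in the intervention analysis. The paper treats the marketplace's reserve prices as free and runs a two-stage constrained optimization (KKT over $\gamma$ for fixed $\rho$, then over $\rho$) to show that the intervention threshold is maximized precisely at $\gamma=u$ with $\rho_g=c_\mu$. You instead invoke Theorem~(\ref{theorem:intervention-equilibrium}) (equivalently Fact~(\ref{fact:marketplace-best-response})): at any Nash equilibrium with $n^{([M])}>0$, the marketplace must be maximizing $\rho_g$ for every monetized group $g\in H$, and since $\tau_H$ depends only on $(\rho_g)_{g\in H}$, the reserve prices are effectively pinned down. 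This lets you bypass the paper's optimization entirely and close with the single power-mean bound $\tau_H/c_{\mathcal G}=\rho^H\,M_{-\beta}(\gamma;q)\le \rho^H\,M_1(\gamma;q)=\sum_{g\in H}\rho_g\gamma_g\le \kappa^T\gamma/c_{\mathcal G}$, strict for every nonempty $H$ once $\gamma\neq u$ has full support.

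What your approach buys is brevity and transparency: one Jensen step instead of two KKT programs, and the construction works for arbitrary positive $\rho$ rather than only constant $\rho_g=c_\mu$. What the paper's approach buys is robustness against a reading of the model in which the marketplace could commit to off-best-response reserve prices to induce seller participation; under the simultaneous Nash formulation in Section~(\ref{subsection:market-dynamics}) this concern does not arise, so your shortcut is legitimate. You are also right to flag the full-support hypothesis on $\gamma$: the paper imposes it implicitly (``the marketplace wishes to ensure representation for all groups''), and without it the statement fails for target vectors that are uniform on a proper support.
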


The proof is in Appendix~(\ref{proof:only-u-is-safe-in-F}).
Theorem (\ref{theorem:only-u-is-safe-in-F}) clarifies that the backfire risk is still present in fully-forming markets.
Theorems (\ref{theorem:u-is-safe-in-F}) and (\ref{theorem:only-u-is-safe-in-F}) underscore the question: How can the backfire risk be mitigated in fully-forming markets?
We next give sufficient conditions for a fully-forming market that ensure the market will form in the intervention scenario.

\begin{restatable}{theorem}{theoremd}
    \label{theorem:fcb:sufficient-conditions}
    Let $N$ buyers and $M$ sellers be a fully-forming data market.
    Define $\eta \in [0,1]$ to be the minimum value satisfying for all $g \in G$,
    \begin{equation}
        \kappa_g \leq \eta\tau_g.
    \end{equation}
    Let $\gamma$ be an intervention.
    Define $a \geq 1$,
    \begin{equation}
        \frac{1}{a}  = \min_{g \in G} \gamma_g,
    \end{equation}
    and $b \geq 1$
    \begin{equation}
        \frac{1}{b} = \max_{g \in G} \gamma_g.
    \end{equation}

    If the marketplace chooses target vector $\gamma$ and
    \begin{equation}
        \eta < \left( \frac{b}{a} \right)^{\beta+1}\frac{1}{r|G|},
    \end{equation}
    where $r$ is a constant that depends on the $N$ buyers,
    then the market will form in the intervention scenario.
\end{restatable}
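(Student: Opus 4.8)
The plan is to exhibit a Nash equilibrium of the intervention scenario in which \emph{every} group is active, so that in particular $\|x^{([M])}\|>0$ and the market forms. By Fact~(\ref{fact:buyer-truthfulness}) the buyers bid truthfully and by Fact~(\ref{fact:marketplace-best-response}) the marketplace sets $\rho_g$-maximizing reserve prices; by Lemma~(\ref{lemma:intervention-data-production}) the candidate in which the active set is all of $G$ has each seller producing $n^{(j)}=\tfrac1M\bigl(\tfrac{\alpha\beta}{\kappa^T\gamma}\sum_{g\in G}\rho_g\gamma_g^{-\beta}\bigr)^{1/(\beta+1)}$, hence $n^{([M])}=\bigl(\tfrac{\alpha\beta}{\kappa^T\gamma}\sum_{g\in G}\rho_g\gamma_g^{-\beta}\bigr)^{1/(\beta+1)}>0$. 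For this to be a genuine equilibrium with active set $G$, two things must be checked: (i) \emph{profitability} --- the sellers do not prefer to produce nothing, which by Claim~(\ref{claim:intervention-participation-threshold}) and Theorem~(\ref{theorem:intervention-equilibrium}) is exactly $\kappa^T\gamma\le\tau_G(\rho,\gamma)$; and (ii) \emph{self-consistency of the active set} --- at the production level $n^{([M])}$ every group clears its accuracy threshold, i.e. $(\min_g\gamma_g)\,n^{([M])}>(\alpha/Z)^{1/\beta}$, so that all groups are indeed active and the displayed formula for $n^{(j)}$ is the one in force. The whole argument is then the verification that the hypothesis forces both (i) and (ii).

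\noindent\textbf{Reduction to a bound on $\eta$.} Since the market is fully-forming, Claim~(\ref{claim:baseline-participation-threshold}) gives $\kappa_g\le\tau_g=\rho_g c_{\mathcal G}$ for every $g$, so $\eta=\max_g\kappa_g/(\rho_g c_{\mathcal G})\in[0,1]$ and $\kappa^T\gamma=\sum_g\kappa_g\gamma_g\le\eta\,c_{\mathcal G}\sum_g\rho_g\gamma_g$. I substitute this, the closed form $\tau_G(\rho,\gamma)=\bigl(\sum_g\rho_g\bigr)^{(\beta+1)/\beta}\bigl(\sum_g\rho_g\gamma_g^{-\beta}\bigr)^{-1/\beta}c_{\mathcal G}$, and the identity $\alpha^{1/\beta}c_{\mathcal G}=\beta Z^{(\beta+1)/\beta}/(1+\beta)^{(\beta+1)/\beta}$ into (i) and into the rearranged form of (ii). The learning-curve scale parameters $Z$ and $\alpha$ cancel, leaving, for each of (i) and (ii), an explicit upper bound on $\eta$ involving only $\beta$ and the $\rho$-weighted sums $\sum_g\rho_g$, $\sum_g\rho_g\gamma_g$, and $\sum_g\rho_g\gamma_g^{-\beta}$.

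\noindent\textbf{Discharging the bounds.} Writing $1/a=\min_g\gamma_g$ and $1/b=\max_g\gamma_g$, I bound $\gamma_g^{-\beta}\le a^\beta$ and $\gamma_g\le1/b$ to control these weighted sums, and use $\sum_g\gamma_g=1$ (which forces $1/|G|\le\max_g\gamma_g$ and $\min_g\gamma_g\le1/|G|$, hence $b\le|G|\le a$), while the remaining spread among the $\rho_g$ is absorbed into a buyer-dependent constant $r$. Tracking exponents, (i) reduces to essentially $\eta\le b/a$ and the stricter condition (ii) reduces to a bound of the form $\eta<\mathrm{const}(\beta)\,(b/a)^{\beta+1}/(r|G|)$ with $\mathrm{const}(\beta)\ge1$; since the stated hypothesis $\eta<(b/a)^{\beta+1}/(r|G|)$ is at least this strong, both (i) and (ii) hold. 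Hence the candidate profile is a Nash equilibrium with $\|x^{([M])}\|=n^{([M])}>0$, and the market forms in the intervention scenario.

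\noindent\textbf{Main obstacle.} The delicate step is (ii). In the baseline scenario each group's participation is decoupled (Theorem~(\ref{theorem:quasi-symmetric-baseline-equilibrium})), but the intervention couples all the groups through $\gamma$, so there is a single production quantity $n^{([M])}$ and one must lower-bound it strongly enough that even the \emph{least-represented} group, with $\gamma_g=1/a$, crosses the threshold $(\alpha/Z)^{1/\beta}$. This is precisely where the exponent $\beta+1$ on $b/a$, the factor $|G|$, and the buyer-dependent constant $r$ genuinely enter, and where one must take care that the chain of inequalities closes under the stated hypothesis and not under something weaker.
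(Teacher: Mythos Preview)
Your plan is correct and follows the paper's two-step structure: verify profitability ($\kappa^T\gamma\le\tau_H$) and verify the learning-ante/self-consistency condition ($\gamma_h n^{([M])}>(\alpha/Z)^{1/\beta}$ for all $h\in H$). The one structural difference is that you commit from the outset to the active set $H=G$, whereas the paper leaves $H$ arbitrary. Because the paper's upper bound on $\kappa^T\gamma$ is a sum over all of $G$ while its lower bound on $\tau_H$ is a sum over $H$ only, it must insert the bridging inequality $\sum_{g\in G}\rho_g\le r|G|\sum_{h\in H}\rho_h$ with $r=\max_{f,g}\rho_f/\rho_g$; this is precisely where the factor $r|G|$ in the hypothesis is consumed.

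With your choice $H=G$ that bridging step is vacuous and the $\rho_g$-sums cancel completely: (i) indeed reduces to $\eta\le b/a$ as you say, and (ii) reduces to $\eta<C_\beta\,(b/a)^{\beta+1}$ with $C_\beta=\beta\bigl(\beta^{-\beta/(\beta+1)}+\beta^{1/(\beta+1)}\bigr)^{(\beta+1)/\beta}>1$ (Fact~\ref{fact:beta-expression}) --- \emph{no} $r$ or $|G|$ appears. So your assertion that (ii) reduces to a bound of the form $\eta<\mathrm{const}(\beta)(b/a)^{\beta+1}/(r|G|)$ is slightly off: the ``spread among the $\rho_g$'' you mention is not actually present once $H=G$. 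This is harmless for the theorem as stated, since $r\ge1$ and $|G|\ge1$ make the stated hypothesis $\eta<(b/a)^{\beta+1}/(r|G|)$ stronger than what your argument needs; in effect your route establishes a sharper sufficient condition than the one claimed.
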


The proof is in Appendix~(\ref{proof:fcb:sufficient-conditions}).
Altogether, our results on the backfire risk suggest that, in emerging markets, ethical firms may not be economically viable or may lack sufficient flexibility to act on their ethical objectives.
This suggests that ethical firms may not have emerged earlier because the economic potential of data markets was insufficient to support them.

\section{Market growth can amortize the cost of fairness in established markets}
\label{section:market-growth-amortizes-the-cost-of-fairness}

Our results on the backfire risk suggest an explanation for why ethical firms have not emerged sooner.
But what has changed that is conducive to the emergence of ethical firms now?
In this section, we study one possibility: data markets may now be sufficiently large and established.
Formally, we study market growth.

We model market growth as markets with more buyers.
For a fixed number of sellers $M$, a sequence of values-of-accuracy vectors $(\mu_N)_{N\in\mathbb{N}}$ defines a sequence of buyers and in turn a sequence of data markets.
The $N$-th data market is defined by the $M$ sellers and the first $N$ buyers in the buyers sequence.
We examine the Nash equilibrium outcomes of the individual data markets as well as in the limit as an unbounded number of buyers enter the data market.

Market growth turns out to be sufficient to mitigate the backfire risk.

\begin{restatable}{claim}{claimc}
    \label{claim:market-growth-mitigates-backfire-risk}
    Let $\gamma$ be the marketplace's target vector. If there exists $g \in G$ such that $\max_{p_g}\rho_g\to\infty$ as $N\to\infty$, then there exists an $N_0$ such that $N > N_0$ implies that for all $j$, $\|y^{(j)}\| > 0$.
\end{restatable}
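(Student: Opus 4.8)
The plan is to invoke Theorem~(\ref{theorem:intervention-equilibrium}) and show that its participation condition $\kappa^T\gamma \leq \tau_H(\rho, \gamma)$ is eventually satisfied once some group's potential economic value grows without bound. First I would fix the group $g^\star \in G$ for which $\max_{p_{g^\star}}\rho_{g^\star}\to\infty$, and let the marketplace play its best response (Fact~(\ref{fact:marketplace-best-response})), so that $\rho_{g^\star}$ equals this maximum in the $N$-th market. The target vector $\gamma$ is fixed, so $\gamma_{g^\star}$ is a fixed positive constant. The key observation is that $g^\star$ must belong to the set $H$ appearing in Theorem~(\ref{theorem:intervention-equilibrium}): $H = \{g : \gamma_g \geq \gamma_h\}$ where $\gamma_h$ is the minimum $\gamma_g$ such that $\gamma_g n^{([M])} > (\alpha/Z)^{1/\beta}$. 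Since $n^{([M])}$ is itself increasing in the $\rho_g$'s (by the formula in Lemma~(\ref{lemma:intervention-data-production})), once $\rho_{g^\star}$ is large enough $n^{([M])}$ is large enough that $\gamma_{g^\star} n^{([M])} > (\alpha/Z)^{1/\beta}$, forcing $g^\star \in H$.

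Next I would lower-bound the threshold $\tau_H(\rho,\gamma)$. We have
\begin{equation}
    \tau_H(\rho, \gamma) = \frac{\left(\sum_{g \in H}\rho_g\right)^{\frac{\beta+1}{\beta}}}{\left( \sum_{g \in H}\rho_g\gamma_g^{-\beta} \right)^{\frac{1}{\beta}}} \cdot c_{\mathcal{G}}.
\end{equation}
Bounding the denominator by $\left(\max_{g}\gamma_g^{-\beta}\right)^{1/\beta}\left(\sum_{g\in H}\rho_g\right)^{1/\beta}$ and using $\sum_{g\in H}\rho_g \geq \rho_{g^\star}$ in the numerator, one gets $\tau_H(\rho,\gamma) \geq C\,(\sum_{g\in H}\rho_g)^{\frac{\beta+1}{\beta} - \frac{1}{\beta}} = C\,(\sum_{g\in H}\rho_g) \geq C\,\rho_{g^\star}$ for a constant $C > 0$ depending only on $\gamma$, $\alpha$, $\beta$, $Z$. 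Since $\kappa^T\gamma$ is a fixed finite constant and $\rho_{g^\star}\to\infty$, there exists $N_0$ such that for all $N > N_0$ we have $\kappa^T\gamma \leq C\,\rho_{g^\star} \leq \tau_H(\rho,\gamma)$, so the participation condition of Theorem~(\ref{theorem:intervention-equilibrium}) holds and every seller produces $n^{(j)} = \|y^{(j)}\| > 0$.

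The main obstacle is handling the circularity between $H$ and $n^{([M])}$: the set $H$ is defined via $\gamma_h$, which depends on $n^{([M])}$, which in turn depends on the sum over $H$. I would resolve this by noting monotonicity — as $\rho_{g^\star}$ increases, the data produced $n^{([M])}$ can only increase (adding a group to $H$ only increases the sum $\sum_{g\in H}\rho_g\gamma_g^{-\beta}$, and hence $n^{(j)}$), so the membership threshold is crossed for $g^\star$ and is never un-crossed thereafter; one can take the worst case $H = \{g^\star\}$ for the lower bound on $\tau_H$, which is exactly the step used above and requires no knowledge of which other groups lie in $H$. A minor additional check is that the candidate equilibrium production profile from Lemma~(\ref{lemma:intervention-data-production}) is indeed a Nash equilibrium (no seller wants to deviate to zero), but this is already implicit in the statement of Theorem~(\ref{theorem:intervention-equilibrium}), so I would simply cite it.
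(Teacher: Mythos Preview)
Your proposal is correct and follows essentially the same approach as the paper's proof. Both arguments lower-bound the participation threshold $\tau_H(\rho,\gamma)$ by replacing each $\gamma_g^{-\beta}$ in the denominator with $\max_{g}\gamma_g^{-\beta}$, obtaining a bound that is linear in $\sum_{g}\rho_g$ (and hence in $\rho_{g^\star}$), and then observe that $\kappa^T\gamma$ is a fixed constant. The paper simply takes $H=G$ without discussing the ante condition $\gamma_g n^{([M])}>(\alpha/Z)^{1/\beta}$, whereas you explicitly flag and resolve the circularity between $H$ and $n^{([M])}$; this extra care is a refinement rather than a different route.
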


The proof is in Appendix (\ref{proof:market-growth-mitigates-backfire-risk}).
Claim~(\ref{claim:market-growth-mitigates-backfire-risk}) indicates that the ability of market growth to mitigate the backfire risk is flexible.
It only requires the economic potential of {\em a single group} to be sufficiently large.
The intervention can then redirect value extracted from one group towards data production for another group.

After the backfire risk is successfully mitigated, there will still be a cost of fairness, therefore it is natural to ask: What happens to the cost of fairness as the potential economic value of the market continues to grow?

The cost of fairness quantifies the burden imposed by a fairness intervention.
It is typically defined as the difference between an agent's utility without fairness requirements and its utility with fairness requirements.
This definition is not suitable for our analysis because: 1) we find that some agents can be strictly better off in the intervention scenario; and 2) the cost of fairness is in absolute terms which may be misleading in comparing markets of different size.

Therefore we focus on the ratio of an agent's intervention utility to its baseline utility.
This ratio is directly related to the cost of fairness, and it captures the burden of a fairness intervention in normalized terms.

Let $(p, \{b_i\}, \{x^{(j)}\})$ be a Nash equilibrium in the baseline scenario, and $(p^f, \{b_i^f\}, \{y^{(j)}\})$ be a Nash equilibrium in the intervention scenario.
The marketplace's utility ratio is,
\begin{equation}
    UR_{Mkt}(p, p^f) \triangleq \frac{w^f(p^f)}{w(p)},
\end{equation}
where $w^f(p^f)$ is the marketplace's utility in the intervention scenario.
Seller $j$'s utility ratio is,
\begin{equation}
    UR_{S,j}(x^{(j)}, y^{(j)}) \triangleq \frac{v_j^f(y^{(j)})}{v_j(x^{(j)})},
\end{equation}
where $v_j^f(y^{(j)})$ is seller $j$'s utility in the intervention scenario.
Buyer $i$'s utility ratio is,
\begin{equation}
    UR_{B,i}(b_i, b_i^f) \triangleq \frac{u_i^f(b_i^f)}{u_i(b_i)},
\end{equation}
where $u_i^f(b_i^f)$ is buyer $i$'s utility in the intervention scenario.

Market growth also suffices to attenuate the cost of fairness in normalized terms.

\begin{restatable}{theorem}{theoremh}
    \label{theorem:amortization}
    If there exists $g \in G$ such that $\max_{p_g}\rho_g\to\infty$ as $N\to\infty$, then for the marketplace we have
    \begin{equation}
        \lim_{N\to \infty}\frac{w^f(p)}{w(p)} = 1,
    \end{equation}
    for every seller $j$ we have
    \begin{equation}
        \lim_{N\to \infty}\frac{v^f_j(y^{(j)})}{v_j(x^{(j)})} = 1,
    \end{equation}
    and for every buyer $i$ we have
    \begin{equation}
        \lim_{N\to \infty}\frac{u^f_i(\mu_i)}{u_i(\mu_i)} \geq 1.
    \end{equation}
\end{restatable}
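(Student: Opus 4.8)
The plan is to write every agent's equilibrium utility, in both scenarios, as $Z\sum_{g\in G}\rho_g$ plus terms of strictly lower order. Here $\rho_g \triangleq \max_{p_g}\rho_g$ is the per-group potential economic value; by Fact~(\ref{fact:marketplace-best-response}) it is the value attained at every equilibrium in either scenario, and it is non-decreasing in $N$. Write $G_\infty \triangleq \{g \in G : \rho_g \to \infty\}$; by hypothesis $G_\infty \neq \emptyset$, so $\sum_{g\in G}\rho_g\to\infty$, and since the remaining $\rho_g$ are bounded we have $\sum_{g\in G}\rho_g = \sum_{g\in G_\infty}\rho_g + O(1)$. For $g\in G_\infty$ eventually $\kappa_g \le \tau_g = \rho_g c_{\mathcal{G}}$, so by Theorem~(\ref{theorem:quasi-symmetric-baseline-equilibrium}) group $g$ is produced at baseline with $x_g^{([M])} = (\rho_g\alpha\beta/\kappa_g)^{1/(\beta+1)}\to\infty$. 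By Claim~(\ref{claim:market-growth-mitigates-backfire-risk}) and Theorem~(\ref{theorem:intervention-equilibrium}) the sellers also produce in the intervention scenario for $N$ large, and because the booming group(s) become active one gets $n^{([M])}\to\infty$, hence eventually $H=G$ and $y_g^{([M])}=\gamma_g n^{([M])}\to\infty$ for every $g$. Consequently $\mathcal{G}(x_g^{([M])})\to Z$ for $g\in G_\infty$ and $\mathcal{G}(y_g^{([M])})\to Z$ for all $g$.

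For the marketplace, Equation~(\ref{align:0000}) gives $w = \sum_{g\in G}\rho_g\mathcal{G}(x_g^{([M])})$ and, using $H=G$ for large $N$, $w^f = \sum_{g\in G}\rho_g\mathcal{G}(y_g^{([M])})$. Substituting the equilibrium production levels and $\mathcal{G}(n) = Z - \alpha n^{-\beta}$, and setting aside the $O(1)$ contribution of the bounded groups, the baseline accuracy deficit is $\alpha(\alpha\beta)^{-\beta/(\beta+1)}\sum_{g}\kappa_g^{\beta/(\beta+1)}\rho_g^{1/(\beta+1)}$ and the intervention deficit is $\alpha(\alpha\beta/\kappa^T\gamma)^{-\beta/(\beta+1)}\big(\sum_{g}\rho_g\gamma_g^{-\beta}\big)^{1/(\beta+1)}$. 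Since $\beta>0$, $\sum_{g}\rho_g^{1/(\beta+1)} \le |G|(\max_g\rho_g)^{1/(\beta+1)} = o(\max_g\rho_g) = o(\sum_g\rho_g)$, and likewise $\big(\sum_g\rho_g\gamma_g^{-\beta}\big)^{1/(\beta+1)} = o(\sum_g\rho_g)$; hence $w = Z\sum_{g\in G}\rho_g\,(1+o(1)) = w^f$ and $w^f/w \to 1$. For the sellers, efficiency of the Shapley value makes the total payment to sellers equal to the revenue the marketplace collects, and since the equilibrium production is symmetric across sellers (Lemmas~(\ref{lemma:seller-baseline-data-production}) and~(\ref{lemma:intervention-data-production})) the split is equal, so seller $j$ receives $w/M$ at baseline and $w^f/M$ in the intervention. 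Its production cost is $\kappa^T x^{(j)} = \frac{1}{M}(\alpha\beta)^{1/(\beta+1)}\sum_g\kappa_g^{\beta/(\beta+1)}\rho_g^{1/(\beta+1)}$, resp.\ $\kappa^T y^{(j)} = \frac{1}{M}(\kappa^T\gamma)^{\beta/(\beta+1)}(\alpha\beta)^{1/(\beta+1)}\big(\sum_g\rho_g\gamma_g^{-\beta}\big)^{1/(\beta+1)}$, and by the same estimate both are $o(\sum_g\rho_g) = o(w)$. Hence $v_j = \frac{1}{M}w\,(1+o(1))$, $v_j^f = \frac{1}{M}w^f\,(1+o(1))$, and $v_j^f/v_j \to 1$.

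For the buyers, write $p_g$ for the marketplace's common reserve price, which coincides across scenarios since its equilibrium strategy depends only on the buyers. With truthful bids, $u_i = \sum_{g}(\mu_{i,g}-p_g)_+\,\mathcal{G}(x_g^{([M])})$ and $u_i^f = \sum_g(\mu_{i,g}-p_g)_+\,\mathcal{G}(y_g^{([M])})$ over the same index set. It therefore suffices to prove a termwise comparison $\mathcal{G}(y_g^{([M])}) \ge (1-\epsilon_N)\,\mathcal{G}(x_g^{([M])})$ for every $g$, with $\epsilon_N\to0$, since multiplying by the nonnegative weights $(\mu_{i,g}-p_g)_+$ and summing yields $u_i^f \ge (1-\epsilon_N)u_i$, whence $\liminf_N u_i^f/u_i \ge 1$. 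For $g\in G_\infty$, both $\mathcal{G}(x_g^{([M])})$ and $\mathcal{G}(y_g^{([M])})$ tend to $Z>0$, so their ratio tends to $1$. For $g\notin G_\infty$, the bounded $\rho_g$ forces $x_g^{([M])}$ bounded and thus $\mathcal{G}(x_g^{([M])}) \le Z - c_g$ for some $c_g>0$, while $\mathcal{G}(y_g^{([M])})\to Z$, so $\mathcal{G}(y_g^{([M])}) \ge \mathcal{G}(x_g^{([M])})$ for $N$ large. Taking $\epsilon_N \triangleq \max\big\{0,\ \max_{g\in G_\infty}\big(1 - \mathcal{G}(y_g^{([M])})/\mathcal{G}(x_g^{([M])})\big)\big\}$ gives the claim.

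The main obstacle is the assertion, in the first paragraph, that in the intervention scenario market growth forces $n^{([M])}\to\infty$ and eventually $H=G$: this requires the booming group to be active at equilibrium, which is essentially the content and proof technique of Claim~(\ref{claim:market-growth-mitigates-backfire-risk}). The rest is an order-of-magnitude bookkeeping exercise whose point is that revenue scales like $\sum_g\rho_g$ whereas production costs and accuracy deficits scale only like $(\sum_g\rho_g)^{1/(\beta+1)}$; and, for the buyers, the observation that demographic balancing drives every group's sample count to infinity alongside the booming group(s), so that $\mathcal{G}$, being capped at $Z$, cannot be appreciably smaller in the intervention scenario than at baseline.
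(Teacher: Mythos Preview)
Your proposal is correct and follows essentially the same approach as the paper: express each agent's equilibrium utility in both scenarios as $Z\sum_{g}\rho_g$ plus terms that grow only like $(\sum_g\rho_g)^{1/(\beta+1)}$, then take the ratio. Your use of Shapley efficiency to read off the sellers' revenue share as $w/M$ is a clean shortcut for what the paper does by direct substitution, and your termwise buyer comparison via $\epsilon_N$ is a mild repackaging of the paper's argument that $u_i\le Z\sum_{g\in G_p}(\mu_{i,g}-p_g)$ while $u_i^f\to Z\sum_{g\in G_p}(\mu_{i,g}-p_g)$; the underlying observation---that the intervention drives \emph{every} group's accuracy to $Z$ whereas the baseline can only reach $Z$ on the booming groups---is the same in both.
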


The proof is in Appendix~(\ref{proof:amortization}).
Perhaps surprisingly, if the potential economic value of at least one group grows unbounded as buyers enter the market, then {\em every agent in the data market is asymptotically at least as well off} in the intervention scenario than in the baseline scenario, and sometimes some of the buyers can be strictly better off.
Stated another way: market growth can amortize the cost of fairness, for any given group fairness either has a vanishing cost or creates a positive externality.

Altogether, our results on market growth suggest that market conditions may have become more hospitable to ethical firms.
Surging demand for data may have driven market growth to a level that can economically sustain them.

\section{Limitations and Discussion}
\label{section:limitations}

Our results come with some important limitations.
\begin{itemize}
  \item The fairness intervention that we study is naive.
  For instance, it does not allow sellers to specialize in producing samples for specific groups while collectively remaining balanced. Our work to reduce cost of fairness even further while addressing it. 
  \item Another limitation is that we study only one fairness concern in this work, that stems from unavailability of representative data.
  We expect that economic growth can dampen the cost of fairness for other criteria. 
  \item Assuming no transfer of knowledge for prediction between groups is an extreme case. It is interesting that it does not prevent fairness to trickle down with market growth. Still we expect intermediate regimes of transfer in practice~\cite{albiero2020does, gwilliam2021rethinking} and would like to leverage it for emerging or mid-size data markets.
  \item We focus here on one comprehensive and often cited data-market model~\cite{agarwal2019marketplace}.
  Although it captures multiple aspects of significant real-world and theoretical data markets, our results will vary when  qualitatively different incentive structure are offered to sellers and buyers.
\end{itemize}

Notwithstanding these limitations, we believe that our work points to exciting opportunities.
Data markets are often praised for their superior ability to extract value from data but criticized for the ethical concerns they raise including fairness.
So far fairness has rarely been considered except as an obstacle~\cite{maio2020incentives, elzayn2020effects}.
Our results suggest that value extraction and fairness are not always at odds: at least one established model of data markets aligns efficient value extraction with a higher mandate to ensure fairness conditions and convert economic growth into opportunities to intervene for fairness.
We believe that our results may be only the first in that direction; other market mechanisms and other forms of fairness or ethical objectives could also be leveraging interventions that take into account the market's endogenous response.

\begin{acks}
Roland Maio gratefully acknowledges the support of the National Science Foundation Graduate Research Fellowship Program under Grant No.  (DGE-1644869). Juba Ziani gratefully acknowledges the support of the National Science Foundation through NSF CAREER Award IIS-2336236.
Any opinions, findings, and conclusions or recommendations expressed in this
material are those of the author(s) and do not necessarily reflect the views of
the National Science Foundation.
\end{acks}

\bibliographystyle{ACM-Reference-Format}
\bibliography{references}

\appendix

\section{Proofs}

\subsection{Proof of Proposition~(\ref{prop:no-closed-form-solution})}
\label{proof:no-closed-form-solution}

\propositiona*

\begin{proof}

Consider the seller's marginal utility at equilibrium.
\begin{observation}
    Fix the bid $b_i$ of each buyer $i \in [N]$, the marketplace's posted price vector $p$, and the dataset $x^{(k)}$ of each seller $k \in [M]\setminus\{j\}$.
    Let $x^{(j)}$ be seller $j$'s best response in the baseline scenario.
    For each group $g$, if $x^{(j)}_g > 0$, then $x^{(j)}_g$ satisfies the equation,
    \begin{equation}
        \label{eq:asymmetric-baseline-seller-best-response}
        p_g\sum_{i \in B_g}\sum_{T\subseteq [M]\setminus\{j\}}c_T
        \alpha_{i,g}\beta_{i,g}(x^{(T)}_g + x^{(j)}_g)^{-\beta_{i,g} - 1} = \kappa_g^{(j)},
    \end{equation}
    where $B_g \triangleq \{i \in [N]: b_{i,g} \geq p_g\}$ is the set of buyers that bid at least $p_g$ for group $g$.
\end{observation}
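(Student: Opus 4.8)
The plan is to derive the stated equation as the first-order stationarity condition for seller $j$'s best-response problem, carried out group by group. Seller $j$ chooses $x^{(j)}$ to maximize its profit $v_j(x^{(j)})$ from Equation~(\ref{eq:seller-utility}), holding all other agents' strategies fixed. Because the cost term $\kappa^{(j)T}x^{(j)} = \sum_{g}\kappa^{(j)}_g x^{(j)}_g$ is separable across groups, and because—by Assumption~(\ref{assumption:zero-inter-group-transfer})—each learning curve $\mathcal{G}_{i,g}$ depends only on the group-$g$ coordinate, the objective decomposes into independent group-$g$ subproblems. I would therefore fix a group $g$ and optimize the single scalar $x^{(j)}_g$.

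First I would substitute the allocation function into the payment division. Since the bids $b_i$ and the price $p$ are held fixed, the membership $i \in B_g$ is determined, and the allocation $\tilde{x}^{(T)} = \mathcal{AF}_g(b_{i,g}, x^{(T)})$ equals $x^{(T)}_g$ for $i \in B_g$ and $0$ otherwise; crucially it is constant in $x^{(j)}_g$ (it is a step in the bid, not in the produced quantity). For $i \notin B_g$ every marginal Shapley term $\mathcal{G}_{i,g}(0) - \mathcal{G}_{i,g}(0)$ vanishes, so only buyers in $B_g$ contribute. Using $x^{(T\cup\{j\})}_g = x^{(T)}_g + x^{(j)}_g$ (valid since $j \notin T$), the group-$g$ part of the profit becomes
\begin{equation}
  p_g\sum_{i \in B_g}\sum_{T \subseteq [M]\setminus\{j\}} c_T\left(\mathcal{G}_{i,g}(x^{(T)}_g + x^{(j)}_g) - \mathcal{G}_{i,g}(x^{(T)}_g)\right) - \kappa^{(j)}_g x^{(j)}_g.
\end{equation}

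Next I would invoke concavity to justify that stationarity characterizes the optimum. In each summand the term $\mathcal{G}_{i,g}(x^{(T)}_g)$ is a constant in $x^{(j)}_g$, while $\mathcal{G}_{i,g}(x^{(T)}_g + x^{(j)}_g)$ is concave in $x^{(j)}_g$ because the learning curve exhibits diminishing returns; the cost is linear. Hence the group-$g$ profit is concave in $x^{(j)}_g \geq 0$, so any interior maximizer with $x^{(j)}_g > 0$ is pinned down by a vanishing derivative. Differentiating, using $\frac{d}{dn}(Z_{i,g} - \alpha_{i,g}n^{-\beta_{i,g}}) = \alpha_{i,g}\beta_{i,g}n^{-\beta_{i,g}-1}$ together with the fact that the constant term $\mathcal{G}_{i,g}(x^{(T)}_g)$ contributes nothing, and equating marginal revenue to the marginal cost $\kappa^{(j)}_g$, yields Equation~(\ref{eq:asymmetric-baseline-seller-best-response}).

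The main obstacle is the kink introduced by the positive part $(\cdot)_+$, which renders $\mathcal{G}_{i,g}$ only piecewise differentiable: the clean derivative formula holds solely in the strictly increasing region $n > (\alpha_{i,g}/Z_{i,g})^{1/\beta_{i,g}}$, whereas in the clipped region $\mathcal{G}_{i,g} \equiv 0$ and the marginal accuracy vanishes. I would resolve this by observing that at a positive best response the active terms must lie in the increasing region: producing in the clipped region yields zero marginal revenue while incurring strictly positive marginal cost $\kappa^{(j)}_g > 0$, so it is never optimal to produce there. A positive maximizer therefore sits in the smooth regime, where differentiation is justified and Equation~(\ref{eq:asymmetric-baseline-seller-best-response}) is valid. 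The step discontinuity of $\mathcal{AF}_g$ poses no analogous difficulty, since it is constant in $x^{(j)}_g$ once the bids are fixed.
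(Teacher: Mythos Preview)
Your proposal is correct and follows exactly the route the paper has in mind: the observation is stated there without a separate proof, and your derivation---separate the objective group by group via Assumption~(\ref{assumption:zero-inter-group-transfer}), restrict the buyer sum to $B_g$ using the allocation rule, note that $\mathcal{G}_{i,g}(x^{(T)}_g)$ is constant in $x^{(j)}_g$, and set the derivative of the concave profit to zero---is precisely the intended first-order computation. Your treatment of the $(\cdot)_+$ kink is also in line with how the paper proceeds elsewhere (cf.\ the analogous step in the proof of Lemma~(\ref{lemma:seller-baseline-data-production})); one small caveat is that with heterogeneous $(\alpha_{i,g},Z_{i,g},\beta_{i,g})$ across buyers and varying $x^{(T)}_g$ across coalitions, your argument only guarantees that \emph{some} summands lie in the smooth regime, not all of them, so the displayed equation really holds under the implicit assumption that every term $x^{(T)}_g + x^{(j)}_g$ exceeds its buyer's threshold---an assumption the paper also leaves tacit and which is satisfied in the one-seller example where the observation is actually used.
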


Equation (\ref{eq:asymmetric-baseline-seller-best-response}) indicates that the set of possible equilibrium equations includes a large class of polynomial equations that includes polynomial equations of arbitrarily large degree.

The celebrated Abel--Ruffini Theorem~\cite{abel1826demonstration} tells us that there is no general closed-form solution to polynomial equations of degree five or higher.
We cannot immediately apply the theorem, however, because the coefficients that may appear in the equilibrium polynomial equations are not unrestricted.
Do polynomial equations without closed-form solutions occur in the equilibrium equations in the general setting of the model? Yes, as demonstrated by the following toy example.

\begin{example}{(van der Waerden's Quintic)}
    \label{example:van-der-Waerden}
    Consider the following instance of the model.
    Let there be 2 buyers and 1 seller.
    For some group $g$, set the parameters as follows: $\mu_{1,g} = 1 = \mu_{2,g}$, $Z_{1,g} = 5 = Z_{2,g}$, $\alpha_{1,g} = 1/3$, $\beta_{1,g} = 3$, $\alpha_{2,g} = 1/4$, $\beta_{2,g} = 4$, and $\kappa_g^{(1)} = 1$.

    We know that in the baseline, the marketplace and seller will choose their strategies independently for each group.
    Since there are 2 buyers each with the same value-of-accuracy for this group, at equilibrium, the marketplace will set $p_g = \mu_{1,g} = \mu_{2,g} = 1$.
    And since there is only one seller, Equation (\ref{eq:asymmetric-baseline-seller-best-response}) becomes
    \begin{equation}
        \sum_{i \in \{1, 2\}}
        \alpha_{i,g}\beta_{i,g}(x^{(1)}_g)^{-\beta_{i,g} - 1} = \kappa_g^{(1)}.
    \end{equation}
    Multiplying both sides by $(x^{(1)}_g)^5$ and rearranging yields the following polynomial equation:
    \begin{equation}
        \label{eq:van-der-Waerden}
        (x^{(1)}_g)^5 - (x^{(1)}_g) - 1 = 0,
    \end{equation}
    which is an example given in~\cite{van1949modern} as having no closed-form solution.
    Still, the seller will only solve Equation (\ref{eq:van-der-Waerden}) exactly at equilibrium if it can achieve positive utility.
    The root of Equation (\ref{eq:van-der-Waerden}) is approximately $1.1673$, thus we can bound the utility the seller will receive from group $g$ by the prediction gain of producing 1 sample and the cost of producing 2 samples
    \begin{align}
    v_1(x^{(1)}) &= \sum_{i \in \{1,2\}}\sum_{h \in G}p_h\mathcal{G}(x^{(1)}_h) - \kappa_h^{(1)}x^{(1)}_h \\
                 &\geq \sum_{i \in \{1,2\}} \mathcal{G}_{i,g}(1) - 2 \\
                 &= \left(5 - \frac{1}{3}\right) + \left(5  - \frac{1}{4}\right) - 2 > 0.
    \end{align}
    We conclude that at equilibrium $x^{(1)}_g > 0$, hence the seller solves Equation (\ref{eq:van-der-Waerden}).
\end{example}

\end{proof}

\subsection{Proof of Fact~(\ref{fact:symmetric-seller-strategies-baseline})}
\label{proof:symmetric-seller-strategies-baseline}

\begin{restatable}{fact}{facta}
    \label{fact:symmetric-seller-strategies-baseline}
    If $\sigma = (p, \{\mu_i\}, \{x^{(j)}\})$ is a Nash equilibrium, then for every $j,j' \in [M]$ we have that $x^{(j)} = x^{(j')}$.
\end{restatable}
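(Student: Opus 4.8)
The plan is to reduce the statement, group by group, to a one-dimensional problem, and then exploit the symmetry of the resulting game together with the structure of the Shapley payment. Substituting the buyers' truthful bids into the payment division and invoking zero inter-group transfer, each seller $j$'s profit at a Nash equilibrium is separable across groups, $v_j(x^{(j)}) = \sum_{g\in G}\big(\rho_g\,\mathrm{Sh}^{(j)}_g - \kappa_g\,x^{(j)}_g\big)$, where $\rho_g = p_g\,|\{i:\mu_{i,g}\ge p_g\}|$ is determined by the equilibrium and $\mathrm{Sh}^{(j)}_g = \sum_{T\subseteq[M]\setminus\{j\}} c_T\big(\mathcal G(x^{(T)}_g + x^{(j)}_g) - \mathcal G(x^{(T)}_g)\big)$ is the Shapley value of seller $j$ in the transferable-utility game $T\mapsto\mathcal G(x^{(T)}_g)$. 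Since the sellers' feasible set is a product over groups and the objective is separable, a profile is a Nash equilibrium iff, for every group $g$, the scalar profile $(x^{(1)}_g,\dots,x^{(M)}_g)$ is a Nash equilibrium of the induced one-dimensional game with payoffs $\rho_g\mathrm{Sh}^{(j)}_g - \kappa_g x^{(j)}_g$. So fix $g$, write $a_j := x^{(j)}_g$ and $n_0 := (\alpha/Z)^{1/\beta}$, the threshold below which $\mathcal G\equiv 0$; it suffices to show the $a_j$ coincide, and if all are $0$ this is immediate, so assume some $a_j>0$.

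The induced game is symmetric, so suppose toward a contradiction that the equilibrium is asymmetric; let $j$ attain $\min_\ell a_\ell$ and pick $k$ with $a_k>a_j$. The main tool is the coalition bijection $\sigma$ between subsets of $[M]\setminus\{k\}$ and subsets of $[M]\setminus\{j\}$ that fixes every coalition avoiding both $j$ and $k$ and swaps $j\leftrightarrow k$ otherwise (so $c_T=c_{\sigma(T)}$). Comparing the two sellers' marginal payoffs term-by-term along $\sigma$, the coalitions that after the swap contain both $j$ and $k$ contribute identical terms $c_T\mathcal G'(\cdot + a_j + a_k)$, whereas each $T_0\subseteq[M]\setminus\{j,k\}$ contributes $c_{T_0}\mathcal G'(x^{(T_0)}_g+a_k)\le c_{T_0}\mathcal G'(x^{(T_0)}_g+a_j)$ because $\mathcal G'$ is non-increasing and $a_k>a_j$; hence seller $k$'s right-marginal payoff at $a_k$ is at most seller $j$'s at $a_j$. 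Because $\mathcal G$, hence each seller's payoff as a function of its own production, is piecewise smooth with only convex kinks, no local maximum can sit at a kink, so seller $k$'s marginal payoff vanishes at $a_k>0$, while seller $j$'s right-marginal payoff is $\le 0$ at $a_j$. Combining these facts forces both marginal quantities to equal $0$ and every $T_0$-term above to be an equality, i.e.\ $\mathcal G'(x^{(T_0)}_g+a_k)=\mathcal G'(x^{(T_0)}_g+a_j)$ for all $T_0\subseteq[M]\setminus\{j,k\}$; since $\mathcal G'$ is strictly decreasing on $(n_0,\infty)$ and identically $0$ on $[0,n_0]$ and $a_k>a_j$, this forces $x^{(T_0)}_g+a_k\le n_0$ for every such $T_0$, so in particular (taking $T_0=[M]\setminus\{j,k\}$) $\sum_{\ell\ne j}a_\ell\le n_0$, and because $j$ is a minimum producer the same bound then holds for every seller in place of $j$.

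This already rules out the sub-case $a_j=0$: then $\sum_\ell a_\ell=\sum_{\ell\ne j}a_\ell\le n_0$, so $\mathcal G$ of the aggregate is $0$, the marketplace collects nothing from group $g$, every Shapley payment vanishes, and seller $k$ (producing $a_k>0$) would strictly gain by producing nothing --- a contradiction. Hence $0<a_j<a_k$; moreover, as $\sum_{\ell\ne m}a_\ell\le n_0$ for every $m$, every proper sub-coalition has $\mathcal G$-value $0$, so each seller's payoff collapses to $\rho_g\sum_{T\subseteq[M]\setminus\{m\}}c_T\mathcal G(x^{(T)}_g+a_m)-\kappa_g a_m$. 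It remains to contradict the existence of such an asymmetric equilibrium by exhibiting a profitable deviation for one seller --- the natural move being to shift $a_j$ or $a_k$ toward the other, using the collapsed payoff form and the bijection $\sigma$ again, now applied to function values rather than derivatives. \emph{This closing step is where the difficulty concentrates}: because $\mathcal G(n)=(Z-\alpha n^{-\beta})_+$ has a convex kink at $n_0$, the per-seller payoff is not concave, first-order conditions do not certify a global best response, and one must genuinely rule out an asymmetric equilibrium in which individual productions straddle $n_0$ while their sum exceeds it. Making that argument airtight --- or, alternatively, replacing this branch with a potential-game uniqueness argument --- is where the real work lies.
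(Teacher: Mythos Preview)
Your core approach matches the paper's exactly: reduce group by group, use the coalition bijection swapping $j\leftrightarrow k$ over subsets of $[M]\setminus\{j,k\}$, and compare the two sellers' marginal payoffs term by term. Where you diverge is in the finish. The paper simply asserts that $\mathcal{G}'$ is strictly decreasing, so assuming $x^{(j)}_g > x^{(k)}_g$ yields $\mathcal{G}'(x^{(T\cup\{j\})}_g) < \mathcal{G}'(x^{(T\cup\{k\})}_g)$ for every $T\subseteq[M]\setminus\{j,k\}$, hence seller $j$'s marginal is \emph{strictly} less than seller $k$'s, contradicting that both vanish at equilibrium. That takes two lines and the paper is done; it never mentions the flat region of $\mathcal{G}$ or the kink at $n_0=(\alpha/Z)^{1/\beta}$.

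You correctly observe that $\mathcal{G}'$ is strictly decreasing only on $(n_0,\infty)$ and identically zero on $[0,n_0]$, and you attempt a careful case analysis to accommodate this. That is more rigorous in spirit, but it leaves your argument unfinished: after reducing to the regime where every proper sub-coalition sums to at most $n_0$ (so only the grand-coalition Shapley term survives), you concede that you cannot rule out an asymmetric equilibrium with productions straddling $n_0$. That is a genuine gap in your write-up as it stands. If your goal is to match the paper, adopt its strict-monotonicity claim---which amounts to treating $\mathcal{G}$ as smooth and strictly concave, or equivalently assuming the relevant coalition aggregates lie in the region where $\mathcal{G}>0$---and the contradiction is immediate. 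If your goal is to \emph{improve} on the paper, you have accurately located the place where additional work is needed, but you have not yet done that work.
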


\begin{proof}
    Seller $j$'s utility is given by
    \begin{align}
        v_j(x^{(j)}) \triangleq & \left(\sum_{i=1}^N\sum_{g \in G}\mathcal{PD}_{i,g,j}(x^{(j)})\right) - \kappa^{(j)T}x^{(j)} \\
        =& \sum_{g \in G}\rho_g\sum_{T \subseteq [M]\setminus \{j\}
        }c_T\left(\mathcal{G}(x^{(T\cup\{j\})}_g) - \mathcal{G}(x^{(T)}_g)\right) \\
        &- \sum_{g \in G}\kappa_gx^{(j)}_g.
    \end{align}
    where $\mathbf{1}[\cdot]$ is the indicator function.
    At equilibrium, seller $j$'s marginal utility in $x^{(j)}_g$ must be 0 for every group.
    Moreover, this must also be true for any other seller $k$.
    Hence, at equilibrium, seller $j$'s marginal utility with respect to $x^{(j)}_g$ must equal seller $k$'s marginal  utility with respect to $x^{(k)}_g$ for any two sellers $j$ and $k$.

    We now show that if two sellers, $j$ and $k$ produce different amounts of data for any group $g$, $x^{(j)}_g \neq x^{(k)}_g$, then $(p, \{\mu_i\}, \{x^{j}\})$ is not a Nash equilibrium.
    By definition, seller $j$'s marginal utility with respect to $x^{(j)}_g$ is
    \begin{equation}
        \frac{\partial}{\partial x^{(j)}_g}v_j(x^{(j)}) = \rho_g\left(\sum_{T \subseteq [M]\setminus\{j\}}c_T\mathcal{G}'(x^{(T\cup\{j\})}_g)\right) - \kappa_g.
    \end{equation}
    We want to write seller $j$'s marginal utility in a form that can be easily compared with seller $k$'s marginal utility.
    We can accomplish this by changing the index set of the summation from $[M]\setminus\{j\}$ to $[M]\setminus\{j, k\}$ as follows,
    \begin{align*}
        \frac{\partial}{\partial x^{(j)}_g}v_j(x^{(j)}) = \rho_g\sum_{T \subseteq [M]\setminus\{j, k\}}&c_T\mathcal{G}'(x^{(T\cup\{j\})}_g) \\
        + &c_{T\cup \{k\}}\mathcal{G}'(x^{((T\cup\{k\})\cup\{j\})}_g) - \kappa_g.
    \end{align*}
    Similarly, write seller $k$'s marginal utility with respect to $x^{(k)}_g$ as
    \begin{align*}
        \frac{\partial}{\partial x^{(k)}_g}v_k(x^{(k)}) = \rho_g\sum_{T \subseteq [M]\setminus\{j, k\}}&c_T\mathcal{G}'(x^{(T\cup\{k\})}_g) \\
        + &c_{T\cup \{k\}}\mathcal{G}'(x^{((T\cup\{j\})\cup\{k\})}_g) - \kappa_g.
    \end{align*}
    Note that $\mathcal{G}'$ is strictly decreasing.
    Without loss of generality, assume that $x^{(j)}_g > x^{(k)}_g$, then for every $T \subseteq [M]\setminus \{j,k\}$ it holds that
    $$\mathcal{G}'(x^{(T\cup\{j\})}_g) < \mathcal{G}'(x^{(T\cup\{k\})}_g),$$
    and
    $$\mathcal{G}'(x^{((T\cup\{k\})\cup\{j\})}_g) =  \mathcal{G}'(x^{((T\cup\{j\})\cup\{k\})}_g).$$
    Consequently, the derivative of seller $j$'s utility is strictly less than that of seller $k$'s.
    We conclude $\sigma$ is not a Nash equilibrium.
\end{proof}

\subsection{Proof of Lemma~(\ref{lemma:seller-baseline-data-production})}
\label{proof:seller-baseline-data-production}

\lemmaa*

\begin{proof}
    Seller $j$'s utility is given by
    \begin{align}
        v_j(x^{(j)}) =& \left(\sum_{i=1}^N\sum_{g \in G}\mathcal{PD}_{i,g,j}(x^{(j)})\right) - \kappa^{(j)T}x^{(j)} \\ \nonumber
        =& \sum_{g \in G}\rho_g\sum_{T \subseteq [M]\setminus \{j\}
        }c_T\left(\mathcal{G}(x^{(T\cup\{j\})}_g) - \mathcal{G}(x^{(T)}_g)\right) \\
        &- \sum_{g \in G}\kappa_gx^{(j)}_g.
    \end{align}
    By Fact (\ref{fact:symmetric-seller-strategies-baseline}) the sellers all make the same production decisions at equilibrium, i.e. for every pair of sellers $j,k \in [M]$ we have $x^{(j)} = x^{(k)}$.
    Consequently, every seller makes the same average marginal contribution over all the coalitions, so the sellers split the revenues collected from each buyer evenly.
    Therefore, seller $j$'s utility is
    \begin{equation}
        v_j(x^{(j)}) = \frac{1}{M}\sum_{g \in G}\rho_g\mathcal{G}(x^{([M])}_g) - \kappa_gx^{(j)}_g.
    \end{equation}
    Fact (\ref{fact:symmetric-seller-strategies-baseline}) also implies that $x^{([M])}_g = Mx^{(j)}_g$, thus
    \begin{equation}
        v_j(x^{(j)}) = \frac{1}{M}\sum_{g \in G}\rho_g\mathcal{G}(Mx^{(j)}_g) - \kappa_gx^{(j)}_g.
    \end{equation}
    Seller $j$'s marginal utility in $x^{(j)}_g$ is therefore,
    \begin{equation}
        \frac{\partial}{\partial x^{(j)}_g}v_j(x^{(j)}) = \frac{1}{M}\rho_g\left(\frac{\partial}{\partial x^{(j)}_g}\mathcal{G}(Mx^{(j)}_g)\right) - \kappa_g.
    \end{equation}
    Now,
    \begin{equation}
      \frac{\partial}{\partial x^{(j)}_g}\mathcal{G}(Mx^{(j)}_g) = \begin{cases}
        0 & \text{ if $Mx^{(j)}_g < (\alpha / Z)^{1/\beta}$} \\
        \alpha\beta M^{-\beta}(x^{(j)}_g)^{-\beta -1} & \text{if $(\alpha / Z)^{1/\beta} < Mx^{(j)}_g$} \\
      \end{cases}
    \end{equation}
    If $Mx^{(j)}_g < (\alpha / Z)^{1/\beta}$, then seller $j$'s marginal utility is negative.
    By assumption, $\sigma$ is a Nash equilibrium at which the sellers produce data.
    At equilibrium, seller $j$'s marginal utility in each $x^{(j)}_g$ must be 0.
    So we must have $Mx^{(j)}_g > (\alpha / Z)^{1/\beta}$, and seller $j$'s marginal utility is
    \begin{align}
        \frac{\partial}{\partial x^{(j)}_g}v_j(x^{(j)}) = \frac{1}{M}\rho_g\alpha\beta M^{-\beta}(x^{(j)}_g)^{-\beta-1} - \kappa_g.
    \end{align}
    Setting this equal to 0 and solving for $x^{(j)}_g$ completes the proof.
\end{proof}

\subsection{Proof of Fact~(\ref{fact:beta-expression})}
\label{proof:beta-expression}

\begin{restatable}{fact}{factb}
    \label{fact:beta-expression}
    If $\beta > 0$, then
    \begin{equation}
        \beta\left( \beta^{-\frac{\beta}{\beta+1}} + \beta^{\frac{1}{\beta+1}} \right)^{\frac{\beta+1}{\beta}} > 1.
    \end{equation}
\end{restatable}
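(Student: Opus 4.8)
The plan is to collapse the left-hand side to a transparent closed form by factoring, and then read off the inequality. Writing $L(\beta) \triangleq \beta\left( \beta^{-\frac{\beta}{\beta+1}} + \beta^{\frac{1}{\beta+1}} \right)^{\frac{\beta+1}{\beta}}$, the first step is to pull the common factor $\beta^{-\frac{\beta}{\beta+1}}$ out of the parenthesized sum. Using the exponent identity $\frac{1}{\beta+1} + \frac{\beta}{\beta+1} = 1$, one gets
\begin{equation}
  \beta^{-\frac{\beta}{\beta+1}} + \beta^{\frac{1}{\beta+1}} = \beta^{-\frac{\beta}{\beta+1}}\bigl(1 + \beta\bigr).
\end{equation}

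The second step is to substitute this back, raise to the power $\frac{\beta+1}{\beta}$, and observe the cancellation: the factor $\beta^{-\frac{\beta}{\beta+1}}$ raised to $\frac{\beta+1}{\beta}$ is $\beta^{-1}$, which cancels the leading $\beta$, leaving
\begin{equation}
  L(\beta) = \beta \cdot \beta^{-1} \cdot (1+\beta)^{\frac{\beta+1}{\beta}} = (1+\beta)^{\frac{\beta+1}{\beta}}.
\end{equation}

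The final step is immediate: since $\beta > 0$ we have $1 + \beta > 1$ and the exponent $\frac{\beta+1}{\beta} > 0$, hence $(1+\beta)^{\frac{\beta+1}{\beta}} > 1^{\frac{\beta+1}{\beta}} = 1$, which is the claim. I do not expect any genuine obstacle here; the only point requiring a moment of care is that every base appearing in the manipulation is strictly positive (so the fractional powers are well-defined and $x \mapsto x^{c}$ is strictly increasing for $c>0$), and this is guaranteed by the hypothesis $\beta > 0$.
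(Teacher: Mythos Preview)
Your proof is correct. The underlying algebra is the same as the paper's---both hinge on the identity $\beta^{\frac{\beta}{\beta+1}}\bigl(\beta^{-\frac{\beta}{\beta+1}} + \beta^{\frac{1}{\beta+1}}\bigr) = 1 + \beta$---but the presentations differ slightly: the paper argues by contradiction (assume the expression is $\leq 1$, raise to the $\frac{\beta}{\beta+1}$ power, distribute, and obtain $\beta \leq 0$), whereas you simplify directly to the closed form $(1+\beta)^{(\beta+1)/\beta}$. Your route is marginally more informative since it exhibits the exact value of the expression rather than merely bounding it, but the two are essentially equivalent.
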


\begin{proof}
    Towards contradition, suppose that
    \begin{equation}
        \beta\left( \beta^{-\frac{\beta}{\beta + 1}} + \beta^{\frac{1}{\beta+1}} \right)^{\frac{\beta + 1}{\beta}} \leq 1.
    \end{equation}
    And write,
    \begin{align}
            &\beta\left( \beta^{-\frac{\beta}{\beta + 1}} + \beta^{\frac{1}{\beta+1}} \right)^{\frac{\beta + 1}{\beta}} \leq 1 \\
        \implies &\beta^{\frac{\beta}{\beta + 1}}\left( \beta^{-\frac{\beta}{\beta + 1}} + \beta^{\frac{1}{\beta+1}} \right) \leq 1 \\
        \implies &1 + \beta \leq 1 \\
        \implies &\beta \leq 0.
    \end{align}
    But $\beta > 0$ by definition of the model, a contradiction.
\end{proof}

\subsection{Proof of Claim~(\ref{claim:baseline-participation-threshold})}
\label{proof:baseline-participation-threshold}

\claima*

\begin{proof}
    We first prove the forward direction, i.e., if $x^{([M])}_g > 0$, then $\kappa_g \leq \tau_g$.
    Fix a seller $j$.
    By assumption, $\sigma$ is a Nash equilibrium, so every seller produces the same dataset at equilibrium by Fact (\ref{fact:symmetric-seller-strategies-baseline}).
    Hence seller $j$'s utility is
    \begin{equation}
        v_j(x^{(j)}) = \frac{1}{M}\sum_{g \in G}\rho_g\mathcal{G}(Mx^{(j)}_g) - \kappa_gx^{(j)}_g.
    \end{equation}

    Also by assumption, $x^{([M])}_g > 0$.
    Since value is extracted independently by group in the baseline scenario, it must hold that
    \begin{equation}
        \label{eq:aa}
        \frac{1}{M}\rho_g\mathcal{G}(Mx^{(j)}_g) - \kappa_gx^{(j)}_g \geq 0,
    \end{equation}
    because otherwise seller $j$ could improve its utility by producing no data for group $g$ and $\sigma$ would not be a Nash equilibrium.

    By Lemma (\ref{lemma:seller-baseline-data-production}),
    \begin{equation}
        \label{eq:ab}
        x^{(j)}_g = \frac{1}{M}\left(\frac{\rho_g}{\kappa_g}\alpha\beta\right)^{\frac{1}{\beta +1}}.
    \end{equation}
    Therefore, we can write Inequality (\ref{eq:aa}) as
    \begin{equation}
        \label{eq:ae}
        \rho_g\mathcal{G}\left( \left(\frac{\rho_g}{\kappa_g}\alpha\beta\right)^{\frac{1}{\beta+1}} \right) - \kappa_g\left(\frac{\rho_g}{\kappa_g}\alpha\beta\right)^{\frac{1}{\beta+1}} \geq 0.
    \end{equation}

    Recall that $\mathcal{G}$ is defined piecewise, we must ascertain which piece applies, i.e., whether or not,
    \begin{equation}
        \label{eq:ac}
            \left(\frac{\rho_g}{\kappa_g}\alpha\beta\right)^{\frac{1}{\beta+1}} \geq \left(\frac{\alpha}{Z}\right)^{\frac{1}{\beta}}.
    \end{equation}
    Observe that if Inequality (\ref{eq:ac}) does not hold, then seller $j$ can improve its utility by producing no data for group $g$ and $\sigma$ would not be a Nash equilibrium.
    Therefore, Inequality (\ref{eq:ac}) holds, and we have
    \begin{equation}
        \label{eq:ad}
        \mathcal{G}\left( \left(\frac{\rho_g}{\kappa_g}\alpha\beta\right)^{\frac{1}{\beta+1}} \right) = Z - \alpha\left( \left(\frac{\rho_g}{\kappa_g}\alpha\beta\right)^{\frac{1}{\beta+1}} \right)^{-\beta}.
    \end{equation}

    Substituting Equation (\ref{eq:ad}) into Inequality (\ref{eq:ae}) and some straightforward algebra yields:
    \begin{align}
        \kappa_g \leq \frac{\rho_gZ^{\frac{\beta+1}{\beta}}}{\alpha^{\frac{1}{\beta}}\left( \beta^{-\frac{\beta}{\beta+1}} + \beta^{\frac{1}{\beta+1}} \right)^{\frac{\beta+1}{\beta}}} = \tau_g.
    \end{align}
    This proves the forward direction since $j$ is arbitrary.

    We now prove the reverse direction, i.e., if $\kappa_g \leq \tau_g$, then $x^{([M])}_g > 0$.
    We must show that the sellers will obtain non-negative utility by producing a positive number of samples, i.e., Inequality (\ref{eq:aa}) holds.
    To do so, we must evaluate
    \begin{equation}
        \mathcal{G}\left( \left( \frac{\rho_g}{\kappa_g}\alpha\beta \right)^{\frac{1}{\beta + 1}} \right)
    \end{equation}
    which depends on whether Inequality (\ref{eq:ac}) holds.
    We first show that it does.

    By assumption, $\kappa_g \leq \tau_g$ and therefore
    \begin{align}
        \left(\frac{\rho_g}{\kappa_g}\alpha\beta\right)^{\frac{1}{\beta+1}} &\geq \left(\frac{\rho_g}{\tau_g}\alpha\beta\right)^{\frac{1}{\beta+1}} \\
        &= \left( \frac{\alpha}{Z} \right)^{\frac{1}{\beta}} \left( \beta\left( \beta^{-\frac{\beta}{\beta + 1}} + \beta^{\frac{1}{\beta+1}} \right)^{\frac{\beta + 1}{\beta}} \right)^{\frac{1}{\beta + 1}} \\
        &> \left( \frac{\alpha}{Z} \right)^{\frac{1}{\beta}},
    \end{align}
    since
    \begin{equation}
        \beta\left( \beta^{-\frac{\beta}{\beta + 1}} + \beta^{\frac{1}{\beta+1}} \right)^{\frac{\beta + 1}{\beta}} > 1,
    \end{equation}
    by Fact (\ref{fact:beta-expression}).

    Thus, we can write Inequality (\ref{eq:aa}) as
    \begin{equation}
        \frac{1}{M}\rho_g\left(Z - \alpha\left( \left(\frac{\rho_g}{\kappa_g}\alpha\beta\right)^{\frac{1}{\beta+1}} \right)^{-\beta}\right) - \kappa_g\frac{1}{M}\left( \frac{\rho_g}{\kappa_g}\alpha\beta \right)^{\frac{1}{\beta + 1}} \geq 0.
    \end{equation}
    Applying the assumption $\kappa_g \leq \tau_g$ and some straightforward algebra complete the proof.
\end{proof}

\subsection{Proof of Fact~(\ref{fact:intervention-symmetric-seller-strategies})}
\label{proof:intervention-symmetric-seller-strategies}

\begin{restatable}{fact}{factc}
    \label{fact:intervention-symmetric-seller-strategies}
    If $(p, \{\mu_i\}, \{x^{(j)}\})$ is a Nash equilibrium, then for every $j,k \in [M]$ we have that $x^{(j)} = x^{(k)}$.
\end{restatable}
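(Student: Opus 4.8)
The plan is to mirror the proof of Fact~(\ref{fact:symmetric-seller-strategies-baseline}), adapting it to the intervention scenario in which every accepted dataset must be $\gamma$-demographically balanced. The first step is to reduce the claim to a statement about scalars. In the intervention scenario a dataset that is not $\gamma$-balanced is rejected and earns no revenue while still incurring its production cost, so it is strictly dominated by producing nothing; hence at any Nash equilibrium every seller $j$'s dataset has the form $x^{(j)} = n^{(j)}\gamma$ for some $n^{(j)}\ge 0$ (with $n^{(j)}=0$ meaning no production), and it suffices to show $n^{(j)}=n^{(k)}$ for every pair $j,k\in[M]$. Using buyer truthfulness (Fact~(\ref{fact:buyer-truthfulness})), substituting $x^{(T)}_g = n^{(T)}\gamma_g$, and collapsing the sum over buyers into the potential economic value $\rho_g$, seller $j$'s utility is
\[
  v_j = \sum_{g\in G}\rho_g\sum_{T\subseteq[M]\setminus\{j\}} c_T\Bigl(\mathcal{G}(n^{(T\cup\{j\})}\gamma_g) - \mathcal{G}(n^{(T)}\gamma_g)\Bigr) - n^{(j)}\,\kappa^{T}\gamma .
\]

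Next I would run the contradiction argument. Suppose $n^{(j)}\neq n^{(k)}$, say $n^{(j)}>n^{(k)}$. Differentiating gives $\partial v_j/\partial n^{(j)} = \sum_{g}\rho_g\gamma_g\sum_{T\subseteq[M]\setminus\{j\}}c_T\,\mathcal{G}'(n^{(T\cup\{j\})}\gamma_g) - \kappa^{T}\gamma$, and symmetrically for $k$. Re-indexing both sums over $T\subseteq[M]\setminus\{j,k\}$ — pairing each $T$ with $T\cup\{k\}$ in $j$'s sum and with $T\cup\{j\}$ in $k$'s sum — and using $c_{T\cup\{j\}}=c_{T\cup\{k\}}$, the terms carrying the common argument $n^{(T\cup\{j,k\})}\gamma_g$ coincide, so $\partial v_j/\partial n^{(j)} - \partial v_k/\partial n^{(k)} = \sum_{g}\rho_g\gamma_g\sum_{T}c_T\bigl(\mathcal{G}'(n^{(T\cup\{j\})}\gamma_g)-\mathcal{G}'(n^{(T\cup\{k\})}\gamma_g)\bigr)$. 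Since $n^{(T\cup\{j\})}>n^{(T\cup\{k\})}$ and $\mathcal{G}'$ is strictly decreasing where $\mathcal{G}$ is positive, every term is $\le 0$ and at least one is $<0$, giving $\partial v_j/\partial n^{(j)} < \partial v_k/\partial n^{(k)}$. If both sellers produce positive amounts, both derivatives vanish at the equilibrium, a contradiction; if instead $n^{(k)}=0$, the same computation yields $\partial v_k/\partial n^{(k)}\big|_{0^+} > \partial v_j/\partial n^{(j)} = 0$, so seller $k$ strictly gains by producing a small positive amount — again contradicting equilibrium. Hence $n^{(j)}=n^{(k)}$, and therefore $x^{(j)}=x^{(k)}$.

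The main obstacle is the positive-part truncation in the learning curve: $\mathcal{G}$ is flat on $[0,n^{*}]$ with $n^{*}=(\alpha/Z)^{1/\beta}$ and strictly concave only above $n^{*}$, so $\mathcal{G}'$ is not globally monotone (it jumps up at $n^{*}$) and $v_j$ is not concave in $n^{(j)}$; this is exactly where the bare "strictly decreasing $\mathcal{G}'$" and "first-order conditions are comparable" reasoning could fail. I would handle it by the same device used in the baseline proofs (proof of Lemma~(\ref{lemma:seller-baseline-data-production}) and Claim~(\ref{claim:baseline-participation-threshold})): at any equilibrium with positive aggregate production, groups with $n^{([M])}\gamma_g\le n^{*}$ contribute zero accuracy in every coalition and drop out of every marginal-utility computation, while for the remaining active groups one argues, as in those proofs, that the equilibrium sample counts relevant to a seller's best response lie strictly above $n^{*}$, where $\mathcal{G}$ coincides with the strictly concave $n\mapsto Z-\alpha n^{-\beta}$; on that region the monotonicity of $\mathcal{G}'$ and the comparison above are valid. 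Carrying this bookkeeping through carefully — restricting the sums to active groups and confirming the relevant coalition arguments exceed $n^{*}$ — is the only delicate part; everything else is a direct transcription of the baseline argument with $x^{(j)}_g$ replaced by $n^{(j)}\gamma_g$ and $\kappa_g$ by $\kappa^{T}\gamma$.
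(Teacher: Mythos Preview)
Your proposal is correct and follows essentially the same approach as the paper: reduce to the scalar $n^{(j)}$, write the Shapley-value utility, re-index the coalition sums over $T\subseteq[M]\setminus\{j,k\}$, and use monotonicity of $\mathcal{G}'$ to derive a contradiction from $n^{(j)}\neq n^{(k)}$. You are in fact more careful than the paper's own proof, which simply asserts ``$\mathcal{G}'$ is strictly decreasing'' and does not separately treat the boundary case $n^{(k)}=0$ or the positive-part truncation that you flag and handle.
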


\begin{proof}
    In the intervention scenario, the marketplace's intervention couples data production across the groups.
    Therefore, the production decision that each seller $j$ faces is how many samples in total to produce, i.e., $n^{(j)} \triangleq \|x^{(j)}\|$, because the number of samples of each group, $x^{(j)}_g$, is then determined by the marketplace's choice of target vector $\gamma$, i.e. $x^{(j)}_g = \gamma_gn^{(j)}$.
    Define $n^{(T)} \triangleq \sum_{k \in T}n^{(k)}$ for any $T \subseteq [M]$ and write seller $j$'s utility by
    \begin{align}
        v_j(x^{(j)}) =& \sum_{g \in G}\rho_g\sum_{T \subseteq [M]\setminus \{j\}
        }c_T\left(\mathcal{G}(\gamma_gn^{(T\cup\{j\})}) - \mathcal{G}(\gamma_gn^{(T)})\right) \nonumber \\
        &- \kappa^T\gamma n^{(j)} \\
        \triangleq& v_j(n^{(j)}).
    \end{align}

    At equilibrium, every seller $j$'s marginal utility must be 0.
    Therefore, for any two sellers $j$ and $k$, seller $j$'s marginal utility must equal seller $k$'s marginal utility; formally, we must have,
    \begin{equation}
        \frac{\partial}{\partial n^{(j)}}v_j(n^{(j)}) = 0 = \frac{\partial}{\partial n^{(k)}}v_k(n^{(k)}).
    \end{equation}

    We now show that if two sellers, $j$ and $k$ produce a different total number of samples, i.e., $n^{(j)} \neq n^{(k)}$, then $(p, \{\mu_i\}, \{x^{(j)}\})$ is not a Nash equilibrium.
    Write seller $j$'s marginal utility with respect to $n^{(j)}$ as
    \begin{align}
        \frac{\partial}{\partial n^{(j)}}v_j(n^{(j)}) =& \sum_{g\in G}\rho_g\sum_{T \subseteq [M]\setminus\{j, k\}}c_T\mathcal{G}'(\gamma_gn^{(T\cup\{j\})}) \nonumber \\
        &+ c_{T\cup \{k\}}\mathcal{G}'(\gamma_gn^{((T\cup\{k\})\cup\{j\})}) - \kappa^T\gamma.
    \end{align}
    Write seller $k$'s marginal utility with respect to $n^{(k)}$ as
    \begin{align}
        \frac{\partial}{\partial n^{(k)}}v_k(n^{(k)}) =& \sum_{g \in G}\rho_g \sum_{T \subseteq [M]\setminus\{j, k\}}c_T\mathcal{G}'(\gamma_gn^{(T\cup\{k\})}) \nonumber\\
        &+ c_{T\cup \{k\}}\mathcal{G}'(\gamma_gn^{((T\cup\{j\})\cup\{k\})}) - \kappa^T\gamma.
    \end{align}
    Note that $\mathcal{G}'$ is strictly decreasing.
    Without loss of generality and towards contradiction, assume that $n^{(j)} > n^{(k)}$, then for every $T \subseteq [M]\setminus \{j,k\}$ it holds that
    \begin{equation}
    \mathcal{G}'(\gamma_gn^{(T\cup\{j\})}) < \mathcal{G}'(\gamma_gn^{(T\cup\{k\})}),
    \end{equation}
    and
    \begin{equation}
    \mathcal{G}'(\gamma_gn^{((T\cup\{k\})\cup\{j\})}) =  \mathcal{G}'(\gamma_gn^{((T\cup\{j\})\cup\{k\})}).
    \end{equation}
    Consequently, the derivative of seller $j$'s utility is strictly less than that of seller $k$'s, a contradiction.
    We conclude that $n^{(j)} = n^{(k)}$, i.e., $x^{(j)} = x^{(k)}$.
\end{proof}

\subsection{Proof of Lemma~(\ref{lemma:intervention-data-production})}
\label{proof:intervention-data-production}

\lemmab*

\begin{proof}
    Write seller $j$'s utility as follows
    \begin{align}
        v_j(x^{(j)}) =& \left(\sum_{i=1}^N\sum_{g \in G}\mathcal{PD}_{i,g,j}(x^{(j)})\right) - \kappa^{(j)T}x^{(j)} \label{aligna:eq1}\\
            =& \sum_{i=1}^N\sum_{g \in G}p_g \sum_{T \subseteq [M]\setminus \{j\}}c_T\cdot\mathcal{G}_{i,g}(\mathcal{AF}_g(\mu_{i,g}, x^{(T\cup \{j\})})) \nonumber\\
            &- \mathcal{G}_{i,g}(\mathcal{AF}_g(\mu_{i,g}, x^{(T)})) - \kappa^{(j)T}x^{(j)} \label{aligna:eq2} \\
            =& \left( \sum_{i=1}^N\sum_{g \in G}p_g \frac{1}{M}\mathcal{G}_{i,g}(\mathcal{AF}_g(\mu_{i,g}, x^{([M])})) \right) - \kappa^{(j)T}x^{(j)} \label{aligna:eq3}\\
            =& \left( \sum_{i=1}^N\sum_{g \in G}p_g \mathbf{1}[\mu_{i,g} \geq p_g] \frac{1}{M}\mathcal{G}_{i,g}(x^{([M])}) \right) - \kappa^{(j)T}x^{(j)} \label{aligna:eq4}\\
            =& \left( \sum_{i=1}^N\sum_{g \in G}p_g \mathbf{1}[\mu_{i,g} \geq p_g] \frac{1}{M}\mathcal{G}(x^{([M])}) \right) - \kappa^Tx^{(j)} \label{aligna:eq5}\\
            =& \left( \sum_{i=1}^N\sum_{g \in G}p_g \mathbf{1}[\mu_{i,g} \geq p_g] \frac{1}{M}\mathcal{G}(x^{([M])}_g) \right) - \kappa^Tx^{(j)} \label{aligna:eq6}\\
            =& \frac{1}{M}\left( \sum_{g \in G}  p_g\left(\sum_{i=1}^N \mathbf{1}[\mu_{i,g} \geq p_g]\right) \mathcal{G}(x^{([M])}_g) \right) - \kappa^Tx^{(j)} \label{aligna:eq7}\\
            =& \frac{1}{M}\left( \sum_{g \in G}  \rho_g \mathcal{G}(x^{([M])}_g) \right) - \kappa^Tx^{(j)}.\label{aligna:eq8} 
    \end{align}
    In order: Equation (\ref{aligna:eq1}) is by definition of the seller's utility; Equation (\ref{aligna:eq2}) is by definition of the payment division function; Equation (\ref{aligna:eq3}) is by Fact (\ref{fact:intervention-symmetric-seller-strategies}) since the sellers all play the same strategy; Equation (\ref{aligna:eq4}) is by definition of the allocation function; Equation (\ref{aligna:eq5}) is by quasi-symmetry; Equation (\ref{aligna:eq6}) is by the assumption of zero group inter-transfer; Equation (\ref{aligna:eq7}) is by rearranging terms; and Equation (\ref{aligna:eq8}) is by definition of market value-of-accuracy.

    In the intervention scenario, the marketplace's target vector, $\gamma$, determines the sellers' marginal production costs as $\kappa^T\gamma$, and each participating seller $j$ has only to decide how many samples to produce, $n^{(j)}$, incurring total production costs of $\kappa^T\gamma n^{(j)}$.
    And again, because all the seller's produce the same number of samples at equilibrium, seller $j$'s utility becomes
    \begin{equation}
        v_j(n^{(j)}) = \frac{1}{M}\left( \sum_{g \in G}  \rho_g \mathcal{G}(M\gamma_gn^{(j)}) \right) - \kappa^T\gamma n^{(j)}.
    \end{equation}

    By assumption, the sellers produce some samples, i.e., $n^{([M])} > 0$, and so $n^{(j)} > 0$.
    It follows that there must exist at least one group $g \in G$ satisfying
    \begin{equation}
        \label{eq:as}
        M\gamma_h n^{(j)} > \left(\frac{\alpha}{Z}\right)^{\frac{1}{\beta}},
    \end{equation}
    because otherwise seller $j$'s utility would be negative and $\sigma$ would not be a Nash equilibrium.

    Now, let $h$ be the group with the smallest $\gamma_h$ satisfying Inequality (\ref{eq:as}).
    Then, for every $g \in G$ satisfying $\gamma_g \geq \gamma_h$ we have
    \begin{equation}
        M\gamma_gn^{(j)} \geq M\gamma_hn^{(j)} > \left(\frac{\alpha}{Z}\right)^{\frac{1}{\beta}}.
    \end{equation}

    Therefore we can define
    \begin{equation}
        H \triangleq \{ g \in G : \gamma_g \geq \gamma_h \},
    \end{equation}
    and seller $j$'s utility becomes
    \begin{equation}
        v_j(n^{(j)}) = \frac{1}{M}\left( \sum_{g \in H}  \rho_g \left( Z - \alpha(M\gamma_gn^{(j)})^{-\beta} \right) \right) - \kappa^T\gamma n^{(j)}.
    \end{equation}
    Therefore, seller $j$'s marginal utility in $n^{(j)}$ is
    \begin{align}
        v'(n^{(j)}) = \frac{1}{M}\sum_{g \in H}\rho_g\alpha\beta\left( M\gamma_gn^{(j)} \right)^{-\beta - 1}M\gamma_g  - \kappa^T\gamma.
    \end{align}
    At equilibrium, the seller's marginal utility is 0; solving for $n^{(j)}$ completes the proof.
\end{proof}

\subsection{Proof of Claim~(\ref{claim:intervention-participation-threshold})}
\label{proof:claim:intervention-participation-threshold}

\claimb*

\begin{proof}
    Fix a seller $j$, and write its utility as follows
    \begin{align}
        v_j(x^{(j)}) =& \left(\sum_{i=1}^N\sum_{g \in G}\mathcal{PD}_{i,g,j}(x^{(j)})\right) - \kappa^{(j)T}x^{(j)} \label{alignb:eq1}\\
            =& \sum_{i=1}^N\sum_{g \in G}p_g \sum_{T \subseteq [M]\setminus \{j\}}c_T\cdot\mathcal{G}_{i,g}(\mathcal{AF}_g(\mu_{i,g}, x^{(T\cup \{j\})})) \nonumber\\
            &- \mathcal{G}_{i,g}(\mathcal{AF}_g(\mu_{i,g}, x^{(T)}))  - \kappa^{(j)T}x^{(j)} \label{alignb:eq2} \\
            =& \left( \sum_{i=1}^N\sum_{g \in G}p_g \frac{1}{M}\mathcal{G}_{i,g}(\mathcal{AF}_g(\mu_{i,g}, x^{([M])})) \right) - \kappa^{(j)T}x^{(j)} \label{alignb:eq3}\\
            =& \left( \sum_{i=1}^N\sum_{g \in G}p_g \mathbf{1}[\mu_{i,g} \geq p_g] \frac{1}{M}\mathcal{G}_{i,g}(x^{([M])}) \right) - \kappa^{(j)T}x^{(j)} \label{alignb:eq4}\\
            =& \left( \sum_{i=1}^N\sum_{g \in G}p_g \mathbf{1}[\mu_{i,g} \geq p_g] \frac{1}{M}\mathcal{G}(x^{([M])}) \right) - \kappa^Tx^{(j)} \label{alignb:eq5}\\
            =& \left( \sum_{i=1}^N\sum_{g \in G}p_g \mathbf{1}[\mu_{i,g} \geq p_g] \frac{1}{M}\mathcal{G}(x^{([M])}_g) \right) - \kappa^Tx^{(j)} \label{alignb:eq6}\\
            =& \frac{1}{M}\left( \sum_{g \in G}  p_g\left(\sum_{i=1}^N \mathbf{1}[\mu_{i,g} \geq p_g]\right) \mathcal{G}(x^{([M])}_g) \right) - \kappa^Tx^{(j)} \label{alignb:eq7}\\
            =& \frac{1}{M}\left( \sum_{g \in G}  \rho_g \mathcal{G}(x^{([M])}_g) \right) - \kappa^Tx^{(j)}.\label{alignb:eq8}
    \end{align}
    In order: Equation (\ref{alignb:eq1}) is by definition of the seller's utility; Equation (\ref{alignb:eq2}) is by definition of the payment division function; Equation (\ref{alignb:eq3}) is by Fact (\ref{fact:intervention-symmetric-seller-strategies}) since the sellers all play the same strategy; Equation (\ref{alignb:eq4}) is by definition of the allocation function; Equation (\ref{alignb:eq5}) is by quasi-symmetry; Equation (\ref{alignb:eq6}) is by the assumption of zero group inter-transfer; Equation (\ref{alignb:eq7}) is by rearranging terms; and Equation (\ref{alignb:eq8}) is by definition of economic potential.

    In the intervention scenario, the marketplace's target vector, $\gamma$, determines the sellers' marginal production costs as $\kappa^T\gamma$, and each participating seller $j$ has only to decide how many samples to produce, $n^{(j)}$, incurring total production costs of $\kappa^T\gamma n^{(j)}$.
    And again, because all the seller's produce the same number of samples at equilibrium, seller $j$'s utility becomes
    \begin{equation}
        v_j(n^{(j)}) = \frac{1}{M}\left( \sum_{g \in G}  \rho_g \mathcal{G}(M\gamma_gn^{(j)}) \right) - \kappa^T\gamma n^{(j)}.
    \end{equation}

    By Lemma (\ref{lemma:intervention-data-production}), there exists a group $h \in G$ such that every seller $j$ produces
    \begin{equation}
        n^{(j)} = \frac{1}{M}\left( \frac{\alpha\beta}{\kappa^T\gamma}\sum_{g \in H}\rho_g\gamma_g^{-\beta} \right)^{1/(\beta+1)}
    \end{equation}
    samples, where
    \begin{equation}
        H \triangleq \{g \in G : \gamma_g \geq \gamma_h \},
    \end{equation}
    and $\gamma_h$ is a minimum value over $\gamma_g$ satisfying
    \begin{equation}
        M\gamma_gn^{(j)} > \left( \frac{\alpha}{Z} \right)^{\frac{1}{\beta}}.
    \end{equation}

    Therefore, seller $j$'s utility is
    \begin{align}
        v_j(n^{(j)}) =& \frac{1}{M}\left( \sum_{g \in H}  \rho_g \mathcal{G}(M\gamma_gn^{(j)}) \right) - \kappa^T\gamma n^{(j)} \\
            =& \frac{1}{M}\left( \sum_{g \in H}  \rho_g \left(Z - \alpha\left(M\gamma_gn^{(j)}\right)^{-\beta} \right) \right) - \kappa^T\gamma n^{(j)} \\
            =& \frac{1}{M}\left( \sum_{g \in H}  \rho_g \left(Z - \alpha\left(M\gamma_g\frac{1}{M}\left( \frac{\alpha\beta}{\kappa^T\gamma}\sum_{f \in H}\rho_f\gamma_f^{-\beta} \right)^{\frac{1}{\beta + 1}}\right)^{-\beta} \right) \right) \nonumber\\
            &- \kappa^T\gamma \frac{1}{M}\left( \frac{\alpha\beta}{\kappa^T\gamma}\sum_{f \in H}\rho_f\gamma_f^{-\beta} \right)^{\frac{1}{\beta + 1}}
    \end{align}

    By assumption, $\sigma$ is a Nash equilibrium, hence seller $j$'s utility is non-negative; solving for $\kappa^T\gamma$ with straightforward algebra completes the proof.
\end{proof}

\subsection{Proof of Theorem~(\ref{theorem:every-intervention-can-backfire})}
\label{proof:every-intervention-can-backfire}

\theorema*

\begin{proof}
    Fix a target vector $\gamma$.
    We will construct a data market in which $\gamma$ backfires.
    Let the $N$ buyers be arbitrary.
    Let there be $M$ sellers whose common cost structure $\kappa$ is arbitrary except for two groups $h$ and $h'$ that we will specify.
    We will set $\kappa_h$ so that the $h$-specific sub-market forms in the baseline scenario, and we will set $\kappa_{h'}$ so that the market does not form in the intervention scenario.
    In other words, we will show that: 1) in the baseline scenario there exists a Nash equilibrium $\sigma = (p, \{\mu_i\}, \{x^{(j)}\})$ such that $x^{([M])}_h > 0$; and 2) in the intervention scenario there does not exist a Nash equilibrium $\sigma' = (p, \{\mu_i\}, \{y^{(j)}\})$ such that $\|y^{([M])}\| > 0$.

    We show 1) first.
    By Fact (\ref{fact:marketplace-best-response}), the marketplace sets the reserve prices $p_g$ to maximize $\rho_g$ for every group $g$ in both scenarios.
    Now set $\kappa_h$ to any value satisfying $\kappa_h \leq \tau_h$.
    We will set $\kappa_{h'}$ more precisely when we turn to the intervention scenario, but we require it to satisfy $\kappa_{h'} > \tau_{h'}$.
    For every seller $j$, and group $g$ set $x^{(j)}_g$ in accordance with Theorem (\ref{theorem:quasi-symmetric-baseline-equilibrium}).
    It follows that $\sigma = (p, \{\mu_i\}, \{x^{(j)}\})$ is a Nash equilibrium in the baseline scenario such that $x^{([M])}_h > 0$.
    We conclude that the data market forms in the baseline scenario.

    We now show 2).
    We must further specify $\kappa_{h'}$.
    We wish to ensure that $\kappa^T\gamma > \tau_H(\rho, \gamma)$ for every $H \subseteq G$.
    Observe that once the buyers are fixed, then the maximum value of $\tau_H(\rho, \gamma)$ over all possible $H$ is determined and finite.
    Define
    \begin{equation}
        \lceil \tau \rceil \triangleq \max_{H \subseteq G} \tau_H(\rho, \gamma).
    \end{equation}
    Now we just need to ensure that
    \begin{equation}
        \kappa^T\gamma > \lceil \tau \rceil,
    \end{equation}
    which readily follows by setting $\kappa_{h'}$ to any value satisfying,
    \begin{equation}
        \kappa_{h'} > \frac{\lceil \tau \rceil}{\gamma_{h'}}.
    \end{equation}
    Thus, the sellers' intervention marginal cost of production is greater than intervention participation threshold over all possible $H$ and 2) follows.
    We conclude that the data market does not form in the intervention scenario.
\end{proof}

\subsection{Proof of Theorem~(\ref{theorem:u-is-safe-in-F})}
\label{proof:u-is-safe-in-F}

\theoremb*

\begin{proof}
    Fix $N$ buyers and $M$ sellers such that every group-specific sub-market forms at the Nash equilibrium, $(p, \{\mu_i\}, \{x^{(j)}\})$, in the baseline scenario.
    Because the market fully forms in the baseline scenario, the groups' production costs must not be too large.
    In particular, for each group $g$, its production costs, $\kappa_g$, must be at most the seller's baseline participation threshold, $\tau_g$,
    \begin{equation}
        \kappa_g \leq \tau_g.
    \end{equation}

    We want to show that the market forms in the intervention scenario, i.e., the sellers will produce data when the marketplace applies its fairness intervention with target vector $u$.
    The market forms when the seller's intervention production cost, $\kappa^Tu$, is no more than the seller's intervention participation threshold, $\tau$.
    The seller's intervention production cost is a function of $\kappa$ and $u$, hence, it is fixed once the sellers are fixed.
    In contrast, the seller's intervention participation threshold, $\tau$, depends on the equilibrium strategies of the marketplace and the buyers in the intervention scenario.
    
    Let $(p, \{\mu_i\}, \{y^{(j)}\})$ be a Nash equilibrium in the intervention scenario.
    We must show that there exists some subset of groups $H$ such that $\kappa^Tu \leq \tau_H(\rho, u)$ and for every $g \in H$, $n^{([M])} / M > (\alpha/Z)^{1/\beta}$.
    In fact, we will show that this holds for $H = G$.
    First we analyze $\tau_G(\rho, u)$,
    \begin{align}
        \tau_G(\rho, u) =& \frac{\left(\sum_{g \in G} \rho_g \right)^{\frac{\beta+1}{\beta}}}{\left(\sum_{g \in G} \rho_g\gamma_g^{-\beta} \right)^{\frac{1}{\beta}}}c_{\mathcal{G}}  \\
        =& \frac{\left(\sum_{g \in G} \rho_g \right)^{\frac{\beta+1}{\beta}}}{\left(\sum_{g \in G} \rho_g\left(\frac{1}{|G|}\right)^{-\beta} \right)^{\frac{1}{\beta}}}c_{\mathcal{G}} \\
        =& \frac{\left(\sum_{g \in G} \rho_g \right)^{\frac{\beta+1}{\beta}}}{\left(\sum_{g \in G} \rho_g|G|^{\beta} \right)^{\frac{1}{\beta}}}c_{\mathcal{G}} \\
        =& \frac{\left(\sum_{g \in G} \rho_g \right)^{\frac{\beta+1}{\beta}}}{\left(|G|^{\beta}\sum_{g \in G} \rho_g \right)^{\frac{1}{\beta}}}c_{\mathcal{G}} \\
        =& \frac{\left(\sum_{g \in G} \rho_g \right)^{\frac{\beta+1}{\beta}}}{|G|\left(\sum_{g \in G} \rho_g \right)^{\frac{1}{\beta}}}c_{\mathcal{G}} \\
        =& \frac{1}{|G|}\sum_{g \in G}\rho_gc_{\mathcal{G}} \\
        =& \frac{1}{|G|}\sum_{g \in G}\tau_g.
    \end{align}
    By assumption, $\kappa_g \leq \tau_g$ for every group $g$.
    It follows that
    \begin{equation}
        \kappa^Tu = \frac{1}{|G|}\sum_{g \in G}\kappa_g \leq \frac{1}{|G|}\sum_{g \in G}\tau_g = \tau_G(\rho, u).
    \end{equation}

    It remains to show that
    \begin{equation}
        y^{([M])}_g = \frac{n^{([M])}}{M} > \left( \frac{\alpha}{Z} \right)^{\frac{1}{\beta}}.
    \end{equation}
    When $H = G$ we have,
    \begin{align}
        n^{([M])} =& \left( \frac{\alpha\beta}{\kappa^Tu}\sum_{g \in G}\rho_gu_g^{-\beta} \right)^{\frac{1}{\beta + 1}} \\
        =& \left( \frac{\alpha\beta}{\sum_{g \in G} \kappa\frac{1}{M}}\sum_{g \in G}\rho_g\left(\frac{1}{M}\right)^{-\beta} \right)^{\frac{1}{\beta + 1}} \\
        =& M\left(\alpha\beta \frac{\sum_{g \in G}\rho_g}{\sum_{g \in G} \kappa_g} \right)^{\frac{1}{\beta + 1}} \\
        &\geq M\left(\frac{\alpha\beta}{c_{\mathcal{G}}} \right) > M\left( \frac{\alpha}{Z} \right)^{\frac{1}{\beta}}.
    \end{align}
    This completes the proof.
\end{proof}

\subsection{Proof of Theorem~(\ref{theorem:only-u-is-safe-in-F})}
\label{proof:only-u-is-safe-in-F}

\theoremc*

Let us first discuss the proof.
The key high-level idea is the following.
We will construct a specific data market in which every $\gamma \neq u$ backfires.\footnotemark
\footnotetext{We analyze a specific data market for ease and clarity of presentation, our analysis readily generalizes to a more restricted class.}
This will be due to two key features of the data market: 1) it is just on the verge of formation; and 2) the dataset demographics at baseline equilibrium will be uniform.
Thus, any $\gamma$ that forces the sellers to deviate from their baseline equilibrium strategies also forces the sellers out of the data market.
Although the result is intuitive, the proof is quite involved due to one principal technical challenge.

For any $\gamma \neq u$, it is straightforward to check that if the marketplace sets the same reserve prices in the intervention scenario as in the baseline-scenario equilibrium, then the market will not form in the intervention scenario.
This observation proves that $\gamma$ backfires when the marketplace does not change its reserve prices.
The technical difficulty stems from the fact that the marketplace can, in principle, set different reserve prices in the intervention scenario.
And Claim (\ref{claim:intervention-participation-threshold}) indicates that sometimes the marketplace can increase the sellers' intervention participation threshold, $\tau$, {\em by decreasing} some of its reserve prices.
To prove the theorem requires us to prove that $\gamma$ backfires for any set of reserve prices the marketplace may choose, not just the same ones as at baseline equilibrium.
And this must be over all possible choices of $\gamma$.

The proof establishes the theorem by showing that over all possible $\gamma$ and all possible reserve prices, the sellers' intervention participation threshold is less than the seller's marginal cost of production when $\gamma \neq u$.
This is done in two major steps.
In the first step, the quantification over all possible $\gamma$ and reserve prices is reduced to a quantification over all possible reserve prices by computing the $\gamma$ that maximizes the sellers' intervention participation threshold given a fixed set of reserve prices.
And the second step proves that the maximum sellers' intervention participation threshold over all reserve prices is obtained when $\gamma = u$.
This proves the theorem, because that is precisely the seller's marginal cost of production.
\begin{proof}
    We first specify the data market.
    There is only 1 buyer whose value-of-accuracy is the same across all the groups, i.e., for all $g \in G$, we have $\mu_{1,g} = c_{\mu}$ for some positive constant $c_{\mu} > 0$.
    There are $M$ sellers that face a cost-structure $\kappa$, with the following two properties: 1) the marginal production cost is the same across all the groups, i.e., for all $g \in G$, we have $\kappa_g = c_{\kappa}$ for some positive constant $c_{\kappa} > 0$; and 2) $c_{\kappa}$ is related to the buyer's value-of-accuracy and prediction gain function by $c_{\kappa} = c_{\mu}c_{\mathcal{G}}$.

    We now analyze the baseline equilibrium.
    Since there is only one buyer, the marketplace's value-of-accuracy for group $g$ when it plays reserve price $p_g$ is $p_g$ if $p_g \leq c_{\mu}$ and 0 otherwise.
    Therefore, the marketplace will set each group's reserve price to the buyer's value-of-accuracy and all the reserve prices will be the same, i.e., for all $g \in G$, $p_g = c_{\mu}$.
    Consequently, the seller's baseline participation threshold for each group $g$ is $\tau_g = c_{\mu}c_{\mathcal{G}}$.
    By construction, for every group $g$, $\kappa_g = c_{\kappa} = c_{\mu}c_{\mathcal{G}}$, therefore $\kappa_g = \tau_g$ and every seller will produce samples of group $g$ at equilibrium, i.e., $x^{(j)}_g > 0$.
    Moreover, every seller will produce the same number of samples for every group, i.e., for all $g,h \in G$, $x^{(j)}_g = x^{(j)}_h$.
    Therefore every group will have the same number of samples in the aggregate dataset, i.e., for all $g,h \in G$, $x^{([M])}_g = x^{([M])}_h$, and the demographics of the aggregate dataset will coincide with the uniform intervention, i.e., $\gamma(x^{([M])}) = u$.

    We turn to analyzing the intervention scenario.
    We show that over all the marketplace's possible choices of target vector, $\gamma$, and reserve prices $q$, the sellers' intervention participation threshold reaches its maximum when $\gamma = u$ and $q_g = c_{\mu}$.
    We do this in two major steps.
    In the first step, we solve for the target vector $\gamma$ that maximizes the seller's intervention participation threshold given fixed reserve prices $q$.
    This allows us to maximize the sellers' intervention participation threshold solely in terms of the reserve prices.
    In the second step, we show that the reserve prices that maximize the seller's intervention participation threshold are $q_g = c_{\mu}$, and this implies that $\gamma = u$.

    We now take the first step.
    Fix the marketplace's reserve prices $q$ in the intervention scenario.
    The sellers' intervention participation threshold is
    \begin{equation}
        \tau = \frac{\left(\sum_{g \in H}\rho_g\right)^{\frac{\beta+1}{\beta}}}{\left( \sum_{g \in H}\rho_g\gamma_g^{-\beta} \right)^{\frac{1}{\beta}}}
        \cdot
        c_{\mathcal{G}}.
    \end{equation}
    Note that $\tau$ depends on $q$ through $\rho_g$, and that the $\rho_g$ are fixed because $q$ is fixed.
    Therefore, choosing $\gamma$ to maximize $\tau$ is equivalent to choosing $\gamma$ to minimize
    \begin{equation}
        \sum_{g \in H} \rho_g\gamma_g^{-\beta}.
    \end{equation}
    Define
    \begin{equation}
        f(\gamma) \triangleq \begin{cases}
            \sum_{g \in H} \rho_g\gamma_g^{-\beta} & \text{if $\forall h \in H, \gamma_h > 0$} \\
            \infty & \text{otherwise}
        \end{cases}
    \end{equation}
    Thus, we wish to solve the following program:
    \begin{equation}
        \min_{\gamma} f(\gamma),
    \end{equation}
    subject to
    \begin{equation}
        \sum_{g \in G}\gamma_g = 1,
    \end{equation}
    and for all $g \in G$, $0 \leq \gamma_g \leq 1$.

    Define the following functions for the constraints:
    \begin{equation}
        h(\gamma) \triangleq \sum_{g \in G}\gamma_g - 1;
    \end{equation}
    and for every $g \in G$,
    \begin{equation}
        b_{(\ell,g)}(\gamma) \triangleq -\gamma_g,
    \end{equation}
    and
    \begin{equation}
        b_{(u,g)}(\gamma) \triangleq \gamma_g - 1.
    \end{equation}

    Compute the partial derivatives of the objective and constraint functions.
    For the objective,
    \begin{equation}
        \frac{\partial}{\partial \gamma_g}f(\gamma) = \begin{cases}
            -\beta\rho_g\gamma_g^{-\beta - 1} & \text{if $g \in H$} \\
            0 & \text{otherwise}
        \end{cases}
    \end{equation}
    For the equality constraint,
    \begin{equation}
        \frac{\partial}{\partial \gamma_g} h(\gamma) = 1.
    \end{equation}
    For the lower bound constraints,
    \begin{equation}
        \frac{\partial}{\partial \gamma_h} b_{(\ell, g)}(\gamma) = \begin{cases}
            -1 & \text{if $g = h$} \\
            0 & \text{otherwise}
        \end{cases}
    \end{equation}
    And for the upper bound constraints,
    \begin{equation}
        \frac{\partial}{\partial \gamma_h} b_{(u, g)}(\gamma) = \begin{cases}
            1 & \text{if $g = h$} \\
            0 & \text{otherwise}
        \end{cases}
    \end{equation}

    By the Karush-Kuhn-Tucker (KKT) conditions, we are searching for solutions $\gamma$ that satisfy the multiplier rule,
    \begin{equation}
        \nabla f(\gamma) + \nabla b(\gamma)\lambda + \nabla h(\gamma)\mu = 0,
    \end{equation}
    and complementarity conditions, i.e., $\lambda \geq 0$ and
    \begin{equation}
        b(\gamma)^T\lambda = 0.
    \end{equation}

    Plug the partial derivatives into the KKT multiplier rule.
    For each $g \in H$ this gives
    \begin{equation}
        -\beta\rho_g\gamma_g^{-\beta - 1} - \lambda_{(\ell, g)} + \lambda_{(u,g)} + \mu = 0.
    \end{equation}
    And for each $g' \in G\setminus H$ this gives
    \begin{equation}
        -\lambda_{(\ell,g')} + \lambda_{(u,g')} + \mu = 0.
    \end{equation}

    Now analyze the multipliers.
    By the definition of $f$, observe that an optimal solution $\gamma$ must satisfy $\gamma_g > 0$ for all $g \in H$.
    This has a number of consequences.
    First, if $g \in H$, then the lower bound constraint is loose, i.e., $b_{(\ell, g)}(\gamma) < 0$, and the complementarity conditions imply that $\lambda_{(\ell, g)} = 0$.
    Second, if $g' \in G\setminus H$, then the upper bound constraint for $g'$ is loose, i.e., $\gamma_{g'} < 1$, and the complementarity conditions imply that $\lambda_{(u,g')} = 0$.
    Finally, if $g' \in G \setminus H$, then $\gamma_{g'} = 0$, which can be seen as follows.
    Towards contradiction, suppose $\gamma_{g'} > 0$.
    Then the KKT multiplier rule for $g'$ is
    \begin{equation}
        \mu = 0,
    \end{equation}
    and the KKT multiplier rule for any $g \in H$ is
    \begin{equation}
        -\beta\rho_g\gamma_g^{-\beta - 1} + \mu = -\beta\rho_g\gamma_g^{-\beta - 1} + 0 = -\beta\rho_g\gamma_g^{-\beta - 1} = 0.
    \end{equation}
    But this is a contradiction because $\gamma_g > 0$ implies that $-\beta\rho_g\gamma_g^{-\beta - 1} < 0$.

    It remains to solve for $\gamma_g$, for each $g \in H$.
    If $|H| = 1$, it follows that $\gamma_g = 1$.
    Otherwise, $|H| > 1$, and the KKT multiplier rule is
    \begin{equation}
        -\beta\rho_g\gamma_g^{-\beta + 1} + \mu = 0.
    \end{equation}
    It follows that for every $h \neq g \in H$ we have
    \begin{equation}
        -\beta\rho_g\gamma_g^{-\beta + 1} = -\beta\rho_h\gamma_h^{-\beta + 1},
    \end{equation}
    and with some straightforward algebra we obtain
    \begin{equation}
        \gamma_h = \left( \frac{\rho_h}{\rho_g} \right)^{\frac{1}{\beta + 1}}\gamma_g.
    \end{equation}
    Plugging this into the equality constraint we obtain
    \begin{equation}
        \sum_{h \in H}\gamma_h = \sum_{h \in H}\left( \frac{\rho_h}{\rho_g} \right)^{\frac{1}{\beta + 1}}\gamma_g = 1.
    \end{equation}
    And solving for $\gamma_g$ yields
    \begin{equation}
        \gamma_g = \frac{\rho_g^{\frac{1}{\beta + 1}}}{\sum_{h \in H}\rho_h^{\frac{1}{\beta+1}}}.
    \end{equation}

    Plugging the solution for $\gamma$ into the sellers' intervention participation threshold $\tau$, with some straightforward algebra, we obtain that the maximum $\tau$ can be over all choices of $\gamma$ for a fixed set of reserve prices is
    \begin{equation}
        \tau = \left( \frac{\sum_{h \in H}\rho_h} {\sum_{h \in H}\rho_h^{\frac{1}{\beta+1}}} \right)^{\frac{\beta+1}{\beta}}c_{\mathcal{G}}.
    \end{equation}
    This completes the first major step.

    We move on to the second step.
    What is the maximum sellers' intervention participaiton threshold, $\tau$, over all the possible reserve prices, $q$, the marketplace can set?
    We first derive a non-standard program and then formulate an equivalent program in standard form.
    First, what are the possible reserve prices the marketplace can set?
    In principle, the marketplace has the flexibility to set $p_g$ to any non-negative value for each group $g \in G$, i.e., $p_g \geq 0$.
    Moreover $p_g$ enters $\tau$ via $\rho_g = p_g\mathbf{1}[b_{1,g} \geq p_g]$.
    At intervention equilibrium, the buyer will bid its value-of-accuracy for each group $g \in G$, $\mu_{1,g} = c_{\mu}$.
    Therefore, the marketplace can set $\rho_g$ to any value in $[0, c_{\mu}]$ by setting $p_g$ appopriately.
    Thus, choosing $p_g$ is equivalent to choosing $\rho_g$, so we now focus on the marketplace's choices of $\rho$.

    In this problem, $\rho$ determines $\gamma$.
    $\gamma$ is set to maximize the sellers' intervention participation threshold.
    As we have just shown, this entails that any unmonetized group has no representation, i.e., for any $g \in G \setminus H$, $\rho_g = 0$.
    Thus the search space over $\rho$ is $0$ if $g in G \setminus H$ and $[0, c_{\mu}$ if $g \in H$.
    We have derived the following program:
    \begin{equation}
        \max_{\rho} \left( \frac{\sum_{h \in H}\rho_h }{ \sum_{h \in H}\rho_h^{\frac{1}{\beta+1}} } \right)^{\frac{\beta+1}{\beta}},
    \end{equation}
    subject to
    \begin{equation}
        0 \leq \rho_g \leq c_{\mu},
    \end{equation}
    for all $g \in H$, where
    \begin{equation}
        \rho_g = 0,
    \end{equation}
    for all $g \in G \setminus H$.

    We now formulate an equivalent program in standard form.
    Define objective,
    \begin{equation}
        f(\rho) \triangleq -\frac{\sum_{h \in H}\rho_h}{\sum_{h \in H}\rho_h^{\frac{1}{\beta+1}}}.
    \end{equation}
    Define equality constraints: For each $g \in G \setminus H$,
    \begin{equation}
        h_g(\rho) = \rho_g,
    \end{equation}
    subject to,
    \begin{equation}
        h_g(\rho) = 0.
    \end{equation}
    Define inequality constraints: For each $g \in H$,
    \begin{equation}
        b_{(\ell,g)}(\rho) =  - \rho_g 
    \end{equation}
    and
    \begin{equation}
        b_{(u,g)}(\rho) = \rho_g - c_{\mu}.
    \end{equation}
    And define the program
    \begin{equation}
        \min_{\rho} f(\rho),
    \end{equation}
    subject to
    \begin{equation}
        b(\rho) \leq 0.
    \end{equation}

    We now solve the program.
    For each $g \in G \setminus H$, the equality constraint $h_g(\rho)$ determines $\rho_g = 0$.
    We turn to solving $\rho_g$ for $g \in H$.
    Compute the partial derivatives of the objective and inequality constraint functions.
    For the objective,
    \begin{equation}
        \frac{\partial}{\partial \rho_g}f(\rho) = - \frac{\left(\sum_{h\in H}\rho_g^{\frac{1}{\beta+1}}\right) - \left( \sum_{h \in H}\rho_h \right)\frac{1}{\beta+1}\rho_g^{-\frac{\beta}{\beta+1}} }{\left(\sum_{h \in H}\rho_h^{\frac{1}{\beta+1}}\right)^2} 
    \end{equation}
    For the lower bound constraints,
    \begin{equation}
        \frac{\partial}{\partial \rho_h} b_{(\ell, g)}(\rho) = \begin{cases}
            -1 & \text{if $h = g$} \\
            0 & \text{otherwise}
        \end{cases}
    \end{equation}
    And for the upper bound constraints,
    \begin{equation}
        \frac{\partial}{\partial \rho_h} b_{(u, g)}(\rho) = \begin{cases}
            1 & \text{if $h = g$} \\
            0 & \text{otherwise}
        \end{cases}
    \end{equation}

    We first find the feasible solutions that satisfy the KKT conditions.
    The multiplier rule of the KKT conditions gives
    \begin{equation}
        \frac{\partial}{\partial \rho_g}f(\rho) + \frac{\partial}{\partial \rho_g}b_{(\ell,g)}(\rho)\lambda_{(\ell,g)} + \frac{\partial}{\partial \rho_g}b_{(u,g)}(\rho)\lambda_{(u,g)} = 0.
    \end{equation}
    Substituting the partial derivatives of the inequality constraints yields,
    \begin{equation}
        \frac{\partial}{\partial \rho_g}f(\rho) - \lambda_{(\ell,g)} + \lambda_{(u,g)} = 0.
    \end{equation}
    Observe that $\rho = 0$ is not a local minimum.
    And for every feasible solution $\rho \neq 0$ we have
    \begin{equation}
        \frac{\partial}{\partial \rho_g}f(\rho) \neq 0.
    \end{equation}
    It follows that if a feasible solution $\rho$ satisfies the KKT conditions, then for all $g \in H$ either $\rho_g = 0$ or $\rho_g = c_{\mu}$.
    Let us call such a feasible solution a corner.
    Moreover, every corner satisfies the KKT conditions by setting their multipliers appropriately.
    It follows that if there is a local minimum, it is one of these corners.

    We now show that every corner $\rho$ is a local minimum.
    It suffices to show that for every $g \in H$ the following two implications hold: 1) if $\rho_g = 0$, then $\frac{\partial}{\partial \rho_g}f(\rho) > 0$; and 2) if $\rho_g = c_{\mu}$, then $\frac{\partial}{\partial \rho_g}f(\rho) < 0$.
    And observe that in both cases, we have
    \begin{equation}
        \frac{\partial}{\partial \rho_g}f(\rho) = - \frac{c_{\mu}^{\frac{1}{\beta+1}} - \frac{1}{\beta+1}c_{\mu}\rho_g^{-\frac{\beta}{\beta+1}}}{n(c_{\mu}^{\frac{1}{\beta + 1}})^2},
    \end{equation}
    where $n$ is the number of groups $h$ such that $\rho_h = c_{\mu}$, because $\rho$ is a corner.

    Now consider the first implication.
    Suppose $\rho_g = 0$.
    Observe that $\rho_g^{-\frac{\beta}{\beta+1}} \to \infty$ as $\rho_g \to 0$.
    We conclude that $\frac{\partial}{\partial \rho_g}f(\rho) > 0$.

    Now consider the second implication.
    Suppose $\rho_g = c_{\mu}$.
    Then,
    \begin{equation}
        \frac{\partial}{\partial \rho_g}f(\rho) = -\frac{\beta}{(\beta+1)n} < 0.
    \end{equation}
    Therefore every corner $\rho$ is a local minimum.
    Actually, every corner is a global minimum because since they all achieve the same objective value,
    \begin{equation}
        f(\rho) = -\frac{\sum_{h \in H}\rho_h}{\sum_{h \in H}\rho_h^{\frac{1}{\beta+1}}} = -\frac{nc_{\mu}}{nc_{\mu}^{\frac{1}{\beta+1}}} = c_{\mu}^{\frac{\beta}{\beta+1}},
    \end{equation}
    where $n$ is the number of groups $h$ such that $\rho_h = c_{\mu}$.

    We conclude that the maximum sellers' intervention participation threshold is achieved when the intervention stipulates uniform intervention over the monetized groups and the marketplace sets the reserve prices as high as the buyers can bear.
    Since the marketplace wishes to ensure that there is representation for all the groups in the data market, this implies that the only such feasible intervention is the uniform intervention, i.e., $\gamma = u$.
    When the intervention is $u$, the sellers' intervention participation threshold becomes,
    \begin{equation}
        \tau(u) = \left(\frac{\sum_{h \in H}\rho_h}{\sum_{h \in H}\rho_h^{\frac{1}{\beta+1}}}\right)^{\frac{\beta+1}{\beta}}c_{\mathcal{G}} = c_{\mu}c_{\kappa} = \kappa^Tu,
    \end{equation}
    and the market will still form in the intervention scenario.
    But for any other feasible $\gamma \neq u$, this implies that
    \begin{equation}
        \tau(u) = \left(\frac{\sum_{h \in H}\rho_h}{\sum_{h \in H}\rho_h^{\frac{1}{\beta+1}}}\right)^{\frac{\beta+1}{\beta}}c_{\mathcal{G}} < c_{\mu}c_{\kappa} = \kappa^T\gamma,
    \end{equation}
    and the market will not form in the intervention scenario.
    This concludes the proof.
\end{proof}

\subsection{Proof of Theorem~(\ref{theorem:fcb:sufficient-conditions})}
\label{proof:fcb:sufficient-conditions}

\theoremd*

\begin{proof}
    We will show that there must exist a subset of groups that can be profitably monetized in the intervention scenario, i.e., there exists a strategy profile and a subset of groups such that in the intervention scenario: 1) the sellers' intervention marginal production cost is at most the sellers' intervention participation threshold; and 2) the sellers produce more than the learning ante for every group in the subset.
    This implies that the market will form in the intervention scenario.

    We first show that the sellers' intervention marginal production cost is at most the sellers' intervention participation threshold.
    Since $\mathcal{M}$ is a fully-forming data market in the baseline scenario, there exists a strategy profile $\sigma = (p, \{\mu_i\}, \{x^{(j)}\})$ such that $x^{([M])}_g > 0$ for all $g \in G$, and $\sigma$ is a Nash equilibrium in the baseline scenario.
    Consider the following strategy profile $\sigma' = (q, \{\mu_i\}, \{n^{(j)}\})$ in the intervention scenario where $q = p$ and
    \begin{equation}
        n^{(j)} = \frac{1}{M}\left( \frac{\alpha\beta}{\kappa^T\gamma} \sum_{h \in H}\rho_h\gamma_h^{-\beta}  \right)^{\frac{1}{\beta}},
    \end{equation}
    for some subset of monetized groups $H \subseteq G$, $H \neq \emptyset$.

    Observe that we can bound the sellers' intervention production costs, $\kappa^T\gamma$, from above,
    \begin{align}
        \label{ineq:fcb:d}
        \kappa^T\gamma &= \sum_{g \in G}\kappa_g\gamma_g \leq \sum_{g \in G}\eta\tau_g\gamma_g \leq \sum_{g \in G}\eta\tau_g\frac{1}{b} = \frac{\eta}{b}\sum_{g \in G}\rho_gc_{\mathcal{G}}.
    \end{align}
    The sellers' intervention participation threshold, $\tau$, depends on the subset of monetized groups, $H$.
    And observe that we can bound $\tau$ from below,
    \begin{align}
        \tau &= \frac{\left( \sum_{g \in H}\rho_g \right)^{\frac{\beta+1}{\beta}}}{ \left( \sum_{g \in H}\rho_g\gamma_g^{-\beta} \right)^{\frac{1}{\beta}} } c_{\mathcal{G}} = \left( \frac{ \sum_{g \in H}\rho_g }{ \sum_{g \in H}\rho_g\gamma_g^{-\beta} } \right)^{\frac{1}{\beta}} \left( \sum_{g \in H}\rho_g \right)c_{\mathcal{G}} \\
            &\geq \left( \frac{ \sum_{g \in H}\rho_g }{ \sum_{g \in H}\rho_g\frac{1}{a}^{-\beta} } \right)^{\frac{1}{\beta}} \sum_{g \in H}\rho_gc_{\mathcal{G}} = \frac{1}{a} \sum_{g \in H}\rho_gc_{\mathcal{G}}. \label{ineq:fcb:f}
    \end{align}
    Except for their index sets, the bounds are structurally very similar.
    We now derive an inequality to bridge the difference in their index sets.
    Define $\overline{H} \triangleq G \setminus H$,
    \begin{equation}
        \rho_0 \triangleq \min_{h \in H} \rho_h,
    \end{equation}
    and
    \begin{equation}
        r \triangleq \max_{f,g \in G}\frac{\rho_f}{\rho_g},
    \end{equation}
    and write
    \begin{align}
        \sum_{g \in G} \rho_g &= \sum_{h \in H}\rho_h + \sum_{g \in \overline{H}}\rho_g = \sum_{h \in H}\rho_h + \sum_{g \in \overline{H}}\frac{\rho_0}{\rho_0}\rho_g = \sum_{h \in H}\rho_h + \sum_{g \in \overline{H}}\frac{\rho_g}{\rho_0}\rho_0 \\
            &\leq \sum_{h \in H}\rho_h + \sum_{g \in \overline{H}}r\rho_0 = \sum_{h \in H}\rho_h + r|\overline{H}|\rho_0 \\
            &\leq \sum_{h \in H}\rho_h + r|\overline{H}|\left( \frac{1}{|H|}\sum_{h \in H}\rho_h \right) = \left( 1 + r\frac{|\overline{H}|}{|H|} \right)\sum_{h \in H}\rho_h. 
        \label{ineq:fcb:a}
    \end{align}
    Now $1 \leq |H| \leq |G|$, and $|\overline{H}| = |G| - |H|$ so
    \begin{equation}
        \frac{|\overline{H}|}{|H|} = \frac{|G| - |H|}{|H|} = \frac{|G|}{|H|} - 1 \leq |G| - 1,
    \end{equation}
    and therefore
    \begin{equation}
        \label{ineq:fcb:b}
        1 + r\frac{|\overline{H}|}{|H|} \leq 1 + r(|G| - 1) = 1 + r|G| - r \leq r|G|.
    \end{equation}
    Putting Inequalities (\ref{ineq:fcb:a}) and (\ref{ineq:fcb:b}) together we have,
    \begin{equation}
        \label{ineq:fcb:c}
        \sum_{g \in G}\rho_g \leq r|G|\sum_{h \in H}\rho_h.
    \end{equation}
    Applying this to Inequality (\ref{ineq:fcb:d}) we obtain,
    \begin{equation}
        \label{ineq:fcb:e}
        \kappa^T\gamma \leq \frac{\eta}{b}r|G|\sum_{h \in H}\rho_hc_{\mathcal{G}}.
    \end{equation}
    By definition, $b \leq a$, and by assumption it follows that
    \begin{equation}
        \eta < \left( \frac{b}{a} \right)^{\beta+1}\frac{1}{r|G|} < \frac{b}{a}\frac{1}{r|G|}.
    \end{equation}
    Applying this to Inequality (\ref{ineq:fcb:e}) yields,
    \begin{equation}
        \kappa^T\gamma \leq \frac{1}{a}\sum_{h \in H}\rho_hc_{\mathcal{G}},
    \end{equation}
    which we recognize as the lower bound on $\tau$ in Inequality (\ref{ineq:fcb:f}), i.e.,
    \begin{equation}
        \kappa^T\gamma \leq \tau.
    \end{equation}

    Critically, the analysis of $\kappa^T\gamma$ and $\tau$ depends on the sellers producing more than the learning ante number of samples for each group in $H$, that is, it requires,
    \begin{equation}
        \gamma_hn^{([M])} > \left( \frac{\alpha}{Z} \right)^{\frac{1}{\beta}},
    \end{equation}
    for every $h \in H$.
    We now show that this holds.
    \begin{align}
        \gamma_hn^{([M])} =& \gamma_h\left(\frac{\alpha\beta}{\kappa^T\gamma}\sum_{g \in H}\rho_g\gamma_g^{-\beta} \right)^{\frac{1}{\beta+1}} \\
        \geq& \frac{1}{a}\left(\frac{\alpha\beta}{\kappa^T\gamma}\sum_{g \in H}\rho_g\gamma_g^{-\beta} \right)^{\frac{1}{\beta+1}} \\
        \geq& \frac{1}{a}\left(\frac{\alpha\beta}{\kappa^T\gamma}\sum_{g \in H}\rho_gb^{\beta} \right)^{\frac{1}{\beta+1}} \\
        \geq& \frac{1}{a}\left(\alpha\beta\frac{b}{\eta r|G|\sum_{h \in H}\rho_hc_{\mathcal{G}}}\sum_{g \in H}\rho_gb^{\beta} \right)^{\frac{1}{\beta+1}} \\
        =& \frac{b}{a}\left(\frac{\alpha\beta}{\eta r|G|c_{\mathcal{G}}}\right)^{\frac{1}{\beta+1}} \\
        =& \frac{b}{a}\left(\frac{\alpha\beta}{\eta r|G|} \frac{\alpha^{\frac{1}{\beta}} \left( \beta^{-\frac{\beta}{\beta+1}} + \beta^{\frac{1}{\beta+1}} \right)^{\frac{\beta+1}{\beta} } }{Z^{\frac{\beta+1}{\beta}}} \right)^{\frac{1}{\beta+1}} \\
        \geq&  \frac{b}{a}\left(\frac{1}{\eta r|G|}\left( \frac{\alpha}{Z}\right)^{\frac{\beta+1}{\beta}} \right)^{\frac{1}{\beta+1}} \\
        &= \frac{b}{a}\left(\frac{1}{\eta r|G|}\right)^{\frac{1}{\beta+1}} \left( \frac{\alpha}{Z}\right)^{\frac{1}{\beta}}  > \left( \frac{\alpha}{Z}\right)^{\frac{1}{\beta}}  
    \end{align}
    because
    \begin{equation}
        \frac{b}{a}\left(\frac{1}{\eta r|G|}\right)^{\frac{1}{\beta+1}} > 1
    \end{equation}
    by assumption.
    We conclude that there exists a subset of groups $H$ that can be profitably monetized.
\end{proof}

\subsection{Proof of Claim~(\ref{claim:market-growth-mitigates-backfire-risk})}
\label{proof:market-growth-mitigates-backfire-risk}

\claimc*

\begin{proof}
    Let $\lceil \gamma \rceil \triangleq \max_{g \in G} \gamma_g^{-\beta}$. In the intervention scenario, By Claim (\ref{claim:intervention-participation-threshold}), the sellers will produce data if
    \begin{equation}
        \kappa^T\gamma \leq \frac{\left(\sum_{g \in G}\rho_g\right)^{\frac{\beta+1}{\beta}}}{\left( \sum_{g \in G}\rho_g\gamma_g^{-\beta} \right)^{\frac{1}{\beta}}}
        \cdot
        \frac{Z^{\frac{\beta+1}{\beta}}}
        { \alpha^{\frac{1}{\beta}}\left( \beta^{-\frac{\beta}{\beta+1}} + \beta^{\frac{1}{\beta+1}} \right)^{\frac{\beta+1}{\beta}}},
    \end{equation}
    or equivalently
    \begin{equation}
        \label{ineq:c}
        \frac{\alpha\left( \beta^{-\frac{\beta}{\beta+1}} + \beta^{\frac{1}{\beta+1}} \right)^{\beta+1}}{Z^{\beta+1}}\left(\kappa^T\gamma\right)^\beta \leq
        \frac{\left(\sum_{g \in G}\rho_g\right)^{\beta+1}}{\left( \sum_{g \in G}\rho_g\gamma_g^{-\beta} \right)}
    \end{equation}
    Recall that the $\gamma_g$ are fixed with respect to increasing $N$.
    Bound the right-hand side of Inequality (\ref{ineq:c}) from below by:
    \begin{equation}
        \frac{1}{\lceil \gamma \rceil}
        \left(\sum_{g \in G}\rho_g\right)^{\beta}
        \leq
        \frac{\left(\sum_{g \in G}\rho_g\right)^{\beta+1}}{\left( \sum_{g \in G}\rho_g\gamma_g^{-\beta} \right)}
    \end{equation}
    Therefore, Inequality (\ref{ineq:c}) will hold if
    \begin{equation}
        \label{ineq:d}
        \frac{\alpha\left( \beta^{-\frac{\beta}{\beta+1}} + \beta^{\frac{1}{\beta+1}} \right)^{\beta+1}}{Z^{\beta+1}}\left(\kappa^T\gamma\right)^\beta \leq
        \left(\frac{1}{\lceil \gamma \rceil}
        \sum_{g \in G}\rho_g\right)^{\beta}
    \end{equation}
    Recall that  $Z$, $\alpha$, $\beta$, and $\kappa_g$ are fixed with respect to increasing $N$.
    By assumption, there is at least one group $g$ such that $\max_{p_g}\rho_g\to\infty$ as $N\to\infty$.
    Observe that $\max_{p_g}\rho_g$ is non-decreasing in the number of buyers $N$.
    Therefore there is some $N_0$ such that Inequality (\ref{ineq:d}) will be satisfied for all $N > N_0$.
\end{proof}

\subsection{Proof of Claim~(\ref{claim:baseline-asymptotic-participation-condition})}
\label{proof:baseline-asymptotic-participation-condition}

\begin{restatable}{claim}{claimd}
    \label{claim:baseline-asymptotic-participation-condition}
    If $\max_{p_g}\rho_g\to\infty$ as $N\to\infty$, then there exists an $N_0$ such that $N > N_0$ implies that for all $j$, $x^{(j)}_g > 0$.
\end{restatable}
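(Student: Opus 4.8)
The plan is to reduce the claim to the baseline participation threshold characterization already established in Claim~(\ref{claim:baseline-participation-threshold}) and the explicit equilibrium production formula in Theorem~(\ref{theorem:quasi-symmetric-baseline-equilibrium}). First I would recall that, by Fact~(\ref{fact:marketplace-best-response}), in any Nash equilibrium the marketplace sets the reserve price $p_g$ to maximize $\rho_g$, so the potential economic value of group $g$ at equilibrium equals $\max_{p_g}\rho_g$. Hence the relevant threshold quantity is $\tau_g = \left(\max_{p_g}\rho_g\right)c_{\mathcal{G}}$.

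Next I would invoke Claim~(\ref{claim:baseline-participation-threshold}): the sellers produce a positive number of samples for group $g$ at a Nash equilibrium if and only if $\kappa_g \leq \tau_g$. Since $c_{\mathcal{G}}$ depends only on the fixed learning-curve parameters $Z$, $\alpha$, $\beta$, and $\kappa_g$ is fixed with respect to $N$, this inequality becomes
\begin{equation}
    \max_{p_g}\rho_g \geq \frac{\kappa_g}{c_{\mathcal{G}}}.
\end{equation}
By the hypothesis that $\max_{p_g}\rho_g \to \infty$ as $N \to \infty$, there exists $N_0$ such that for every $N > N_0$ this inequality holds, and therefore $\kappa_g \leq \tau_g$ in the $N$-th market.

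Finally, for $N > N_0$ I would apply Theorem~(\ref{theorem:quasi-symmetric-baseline-equilibrium}): since $\kappa_g \leq \tau_g$, every seller $j$ produces
\begin{equation}
    x^{(j)}_g = \frac{1}{M}\left(\frac{\rho_g}{\kappa_g}\alpha\beta\right)^{1/(\beta+1)} > 0,
\end{equation}
which is strictly positive because $\rho_g, \kappa_g, \alpha, \beta > 0$. This establishes the claim. I do not expect any real obstacle here: the statement is essentially a corollary of the earlier threshold results, and the only thing to be careful about is noting that the $\rho_g$ appearing in the threshold is the marketplace's optimized value $\max_{p_g}\rho_g$ (so that the hypothesis applies directly) and that all the other quantities entering $\tau_g$ are constant in $N$.
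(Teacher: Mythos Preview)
Your proposal is correct and follows essentially the same approach as the paper: both reduce the claim to the baseline participation threshold $\kappa_g \leq \tau_g = \rho_g c_{\mathcal{G}}$ from Claim~(\ref{claim:baseline-participation-threshold}), observe that the threshold inequality is eventually satisfied because $\max_{p_g}\rho_g\to\infty$ while $\kappa_g/c_{\mathcal{G}}$ is fixed in $N$, and conclude via the explicit equilibrium production formula. Your version is slightly more explicit in invoking Fact~(\ref{fact:marketplace-best-response}) and Theorem~(\ref{theorem:quasi-symmetric-baseline-equilibrium}), which is fine.
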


\begin{proof}
    In the baseline scenario, the sellers will produce data if
    \begin{equation}
        \kappa_g \leq \tau_g =
        \frac{\rho_gZ^{\frac{\beta+1}{\beta}}}{\alpha^{\frac{1}{\beta}}\left( \beta^{-\frac{\beta}{\beta+1}} + \beta^{\frac{1}{\beta+1}} \right)^{\frac{\beta+1}{\beta}}},
    \end{equation}
    or equivalently
    \begin{equation}
        \label{ineq:b}
        \frac{\alpha^{\frac{1}{\beta}}\left( \beta^{-\frac{\beta}{\beta+1}} + \beta^{\frac{1}{\beta+1}} \right)^{\frac{\beta+1}{\beta}}}{Z^{\frac{\beta+1}{\beta}}}\kappa_g \leq \rho_g
    \end{equation}
    Recall that  $Z$, $\alpha$, $\beta$, and $\kappa_g$ are fixed with respect to increasing $N$.
    By assumption, $\max_{p_g}\rho_g\to\infty$ as $N\to\infty$ and note that $\max_{p_g}\rho_g$ is non-decreasing in the number of buyers $N$, therefore there is some $N_0$ such that Inequality (\ref{ineq:b}) is satisfied for every $N > N_0$.
\end{proof}

\subsection{Proof of Claim~(\ref{claim:baseline-unbounded-data-production})}
\label{proof:baseline-unbounded-data-production}

\begin{restatable}{claim}{claime}
    \label{claim:baseline-unbounded-data-production}
    If $\max_{p_g}\rho_g\to\infty$ as $N\to\infty$, then $x^{([M])}_g \to \infty$.
\end{restatable}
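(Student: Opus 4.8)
The plan is to combine the asymptotic participation condition with the closed-form expression for baseline data production. First I would invoke Claim~(\ref{claim:baseline-asymptotic-participation-condition}): since $\max_{p_g}\rho_g\to\infty$ as $N\to\infty$, there exists $N_0$ such that for every $N > N_0$ the sellers produce a positive number of samples for group $g$ at any Nash equilibrium, i.e., $x^{([M])}_g > 0$. This lets us restrict attention to the ``producing'' case of Theorem~(\ref{theorem:quasi-symmetric-baseline-equilibrium}).

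Next I would apply Lemma~(\ref{lemma:seller-baseline-data-production}) (equivalently, Theorem~(\ref{theorem:quasi-symmetric-baseline-equilibrium})): whenever $x^{([M])}_g > 0$, the total number of samples produced for group $g$ is exactly
\begin{equation}
    x^{([M])}_g = \left( \frac{\rho_g}{\kappa_g}\alpha\beta \right)^{1/(\beta + 1)}.
\end{equation}
Here $\rho_g$ is the potential economic value at equilibrium, and by Fact~(\ref{fact:marketplace-best-response}) the marketplace sets the reserve price $p_g$ to maximize $\rho_g$, so in any Nash equilibrium $\rho_g = \max_{p_g}\rho_g$.

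Finally, I would take the limit. Recall that $\alpha$, $\beta$, and $\kappa_g$ are fixed with respect to increasing $N$, and that $\max_{p_g}\rho_g$ is non-decreasing in $N$ (so the limit is well defined as $+\infty$). Since the exponent $1/(\beta+1)$ is a fixed positive constant and $t\mapsto (t\alpha\beta/\kappa_g)^{1/(\beta+1)}$ is continuous and increasing in $t$ with $\lim_{t\to\infty} = \infty$, the assumption $\max_{p_g}\rho_g\to\infty$ forces $x^{([M])}_g\to\infty$ as $N\to\infty$. This completes the proof.

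There is no real obstacle here: the statement is an immediate corollary of the closed-form production formula together with Claim~(\ref{claim:baseline-asymptotic-participation-condition}). The only point requiring a word of care is ensuring the limit is taken along the equilibrium value $\rho_g = \max_{p_g}\rho_g$ rather than an arbitrary reserve price, which is handled by Fact~(\ref{fact:marketplace-best-response}).
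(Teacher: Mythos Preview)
Your proposal is correct and follows essentially the same approach as the paper: invoke Claim~(\ref{claim:baseline-asymptotic-participation-condition}) to guarantee positive production for all large $N$, then plug into the closed-form expression from Lemma~(\ref{lemma:seller-baseline-data-production}) and let $\rho_g\to\infty$. Your added remark that $\rho_g=\max_{p_g}\rho_g$ at equilibrium via Fact~(\ref{fact:marketplace-best-response}) is a small point of extra rigor that the paper leaves implicit.
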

\begin{proof}
    By Lemma (\ref{lemma:seller-baseline-data-production}), at equilibrium we have that
    \begin{equation}
        x^{([M])}_g = \left( \frac{\rho_g}{\kappa_g}\alpha\beta \right)^{1/(\beta + 1)}.
    \end{equation}
    By assumption, $\max_{p_g}\rho_g\to\infty$, it follows that by Claim (\ref{claim:baseline-asymptotic-participation-condition}) the sellers will participate in the market for all $N$ sufficiently large and therefore
    \begin{equation}
        x^{([M])}_g = \left( \frac{\rho_g}{\kappa_g}\alpha\beta \right)^{1/(\beta + 1)} \to \infty.
    \end{equation}
\end{proof}

\subsection{Proof of Claim~(\ref{claim:intervention-unbounded-data-production})}
\label{proof:intervention-unbounded-data-production}

\begin{restatable}{claim}{claimf}
    \label{claim:intervention-unbounded-data-production}
    If there exists $g \in G$ such that $\max_{p_g}\rho_g\to\infty$ as $N\to\infty$, then $n^{([M])} = \|y^{([M])}\| \to \infty$.
\end{restatable}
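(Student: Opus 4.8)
The plan is to combine the closed-form equilibrium production from Lemma~(\ref{lemma:intervention-data-production}) with a single-seller deviation argument. Fix the group $g^\star \in G$ with $\max_{p_{g^\star}}\rho_{g^\star}\to\infty$; assume, as is natural for a demographic-balance target, that $\gamma_{g^\star}>0$. By Claim~(\ref{claim:market-growth-mitigates-backfire-risk}) there is an $N_0$ such that for every $N>N_0$ the market forms in the intervention scenario, so fix such an $N$ and a Nash equilibrium $\sigma=(p,\{\mu_i\},\{y^{(j)}\})$ with $n^{([M])}>0$. By Fact~(\ref{fact:intervention-symmetric-seller-strategies}) every seller produces $n^{(j)}=n^{([M])}/M$, and by Fact~(\ref{fact:marketplace-best-response}) the marketplace sets $\rho_{g^\star}=\max_{p_{g^\star}}\rho_{g^\star}$. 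Lemma~(\ref{lemma:intervention-data-production}) then gives $n^{([M])}=\big(\tfrac{\alpha\beta}{\kappa^T\gamma}\sum_{g\in H}\rho_g\gamma_g^{-\beta}\big)^{1/(\beta+1)}$ with $H=\{g:\gamma_g n^{([M])}>(\alpha/Z)^{1/\beta}\}$. Since $\alpha,\beta,\kappa^T\gamma$ and $\gamma_{g^\star}$ do not depend on $N$, it suffices to prove that $g^\star\in H$ for all large $N$: then $\sum_{g\in H}\rho_g\gamma_g^{-\beta}\ge\rho_{g^\star}\gamma_{g^\star}^{-\beta}$, hence $n^{([M])}\ge\big(\tfrac{\alpha\beta}{\kappa^T\gamma}\rho_{g^\star}\gamma_{g^\star}^{-\beta}\big)^{1/(\beta+1)}\to\infty$.

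To establish $g^\star\in H$ I would argue by contradiction. Suppose $g^\star\notin H$, i.e.\ $\gamma_{g^\star}n^{([M])}\le(\alpha/Z)^{1/\beta}$, so that at $\sigma$ the learning curve $\mathcal{G}$ evaluates to $0$ on the $g^\star$-allocation of every subcoalition and seller $j$'s payment from $g^\star$ is $0$. Now let a single seller $j$ deviate to output $n^{(j)}+\Delta$, choosing $\Delta$ so that $\gamma_{g^\star}(n^{([M])}+\Delta)=2(\alpha/Z)^{1/\beta}$; this $\Delta$ is positive (since $\gamma_{g^\star}n^{([M])}\le(\alpha/Z)^{1/\beta}$) and at most $2(\alpha/Z)^{1/\beta}/\gamma_{g^\star}$, a constant independent of $N$. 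Two facts make the deviation profitable once $\rho_{g^\star}$ is large. First, seller $j$'s Shapley payment from each group $g$ is non-decreasing in $n^{(j)}$, because every marginal-contribution term $\mathcal{G}(\gamma_g(n^{(T)}+n^{(j)}))-\mathcal{G}(\gamma_g n^{(T)})$ is non-decreasing in $n^{(j)}$ as $\mathcal{G}$ is non-decreasing; hence the deviation cannot lower $j$'s revenue from any $g\ne g^\star$. Second, after the deviation the single grand-coalition term of $j$'s $g^\star$-Shapley sum already contributes $\tfrac{\rho_{g^\star}}{M}\mathcal{G}(2(\alpha/Z)^{1/\beta})=\tfrac{\rho_{g^\star}}{M}Z(1-2^{-\beta})>0$, because the subcoalition $[M]\setminus\{j\}$ still lies below the learning ante for $g^\star$, while before the deviation $j$'s $g^\star$-revenue was $0$. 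Therefore $j$'s utility increases by at least $\tfrac{\rho_{g^\star}}{M}Z(1-2^{-\beta})-\kappa^T\gamma\,\Delta\ge\tfrac{\rho_{g^\star}}{M}Z(1-2^{-\beta})-\tfrac{2\kappa^T\gamma(\alpha/Z)^{1/\beta}}{\gamma_{g^\star}}$, which is strictly positive once $\rho_{g^\star}$ exceeds a fixed threshold. Since $\rho_{g^\star}=\max_{p_{g^\star}}\rho_{g^\star}\to\infty$, this happens for all large $N$, contradicting that $\sigma$ is a Nash equilibrium. Hence $g^\star\in H$ eventually, and the claim follows.

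I expect the main obstacle to be exactly this contradiction step: a priori an equilibrium could leave $g^\star$ ``dormant'' (below the learning ante, outside $H$) with $n^{([M])}$ staying bounded even as $\rho_{g^\star}$ blows up, and a marginal/first-order argument cannot rule this out because near such an equilibrium $g^\star$ contributes nothing to the seller's marginal utility; one really needs the global, finite-jump deviation. The other slightly delicate point is the monotonicity of the Shapley shares in a seller's own production, which is what lets me discard the (possibly also unbounded) revenues from groups other than $g^\star$ when lower-bounding the deviation gain. Everything else---invoking Lemma~(\ref{lemma:intervention-data-production}), the constants' independence of $N$, and the final limit---is routine.
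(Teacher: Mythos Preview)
Your proof is correct and in fact more careful than the paper's own argument. The paper's proof of this claim is a two-sentence sketch: it quotes the closed form from Lemma~(\ref{lemma:intervention-data-production}) and Claim~(\ref{claim:intervention-participation-threshold}) and then simply ``observes'' that $\kappa^T\gamma$ is fixed while ``$n^{([M])}$ and $\tau_G(\rho,\gamma)$ grow unboundedly,'' implicitly taking $H=G$ once $\rho_{g^\star}$ is large (essentially re-using the computation behind Claim~(\ref{claim:market-growth-mitigates-backfire-risk})). It never explicitly rules out the possibility of an equilibrium in which $g^\star$ stays dormant below the learning ante, i.e.\ $g^\star\notin H$ with $n^{([M])}$ bounded.

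Your contribution is to close exactly that gap with a finite-jump deviation: if $g^\star\notin H$, a single seller can unilaterally push the aggregate past the $g^\star$-ante at a cost bounded independently of $N$, while collecting Shapley revenue of order $\rho_{g^\star}$ from $g^\star$, contradicting equilibrium once $\rho_{g^\star}$ is large. The monotonicity of each group's Shapley share in a seller's own output (from $\mathcal{G}$ non-decreasing) is the right tool to discard the other groups when lower-bounding the gain, and your computation of the grand-coalition contribution $\tfrac{\rho_{g^\star}}{M}Z(1-2^{-\beta})$ is correct. This is a genuinely more rigorous route than the paper's, at the price of being longer; the paper's version is shorter but leaves the ``why must $g^\star$ eventually lie in $H$?'' question to the reader. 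One small remark: your standing assumption $\gamma_{g^\star}>0$ is also implicit in the paper (its proof of Theorem~(\ref{theorem:amortization}) uses $H=G$ throughout), so you are not adding any hypothesis the paper does not already rely on.
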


\begin{proof}
    By Lemma (\ref{lemma:intervention-data-production}) and Claim~(\ref{claim:intervention-participation-threshold}) the sellers will produce
    \begin{equation}
        n^{([M])} = \left( \frac{\alpha\beta}{\kappa^T\gamma}\sum_{g \in H}\rho_g\gamma_g^{-\beta} \right)^{1/(\beta+1)}
    \end{equation}
    samples at equilibrium if
    \begin{equation}
        \kappa^T\gamma \leq \tau_H(\rho, \gamma)
    \end{equation}
    where
    \begin{equation}
        H \triangleq \{g \in G : \gamma_g \geq \gamma_h \},
    \end{equation}
    and $\gamma_h$ is a minimum value over $\gamma_g$ satisfying
    \begin{equation}
        \gamma_gn^{([M])} > \left( \frac{\alpha}{Z} \right)^{\frac{1}{\beta}}.
    \end{equation}
    Observe that $\kappa^T\gamma$ is fixed as $\rho_g \to \infty$, whereas $n^{([M])}$ and $\tau_G(\rho, \gamma)$ grow unboundedly.
\end{proof}

\subsection{Proof of Theorem~(\ref{theorem:amortization})}
\label{proof:amortization}

\theoremh*

\begin{proof}
    Consider the  marketplace's utility in the baseline scenario.
    \begin{align}
        w(p) &= \sum_{g \in G}p_g\sum_{i=1}^N\mathbf{1}[\mu_{i,g} \geq p_g]\mathcal{G}(x^{([M])}_g) \\
        &= \sum_{g \in G}\rho_g\left(Z - \alpha(x^{([M])}_g)^{-\beta}\right) \\
        &= Z\sum_{g \in G}\rho_g - \alpha\sum_{g \in G}\rho_g(x^{([M])}_g)^{-\beta} \\
        &= Z\sum_{g \in G}\rho_g - \alpha\sum_{g \in G}\rho_g\left( \frac{\rho_g}{\kappa_g}\alpha\beta \right)^{-\frac{\beta}{\beta+1}} \\
        &= Z\sum_{g \in G}\rho_g - \alpha\sum_{g \in G}\rho_g^{\frac{1}{\beta+1}}\left( \frac{\alpha\beta}{\kappa_g} \right)^{-\frac{\beta}{\beta+1}}
    \end{align}
    Note that the positive term is linear in the sum of the $\rho_g$ whereas the negative term is sublinear.
    Now consider the marketplace's utility in the intervention scenario.
    \begin{align}
        w^f(p) &= \sum_{g \in G}p_g\sum_{i=1}^N\mathbf{1}[\mu_{i,g} \geq p_g]\mathcal{G}(y^{([M])}_g) \\
        &= \sum_{g \in G}\rho_g\left(Z - \alpha(y^{([M])}_g)^{-\beta}\right) \\
        &= Z\sum_{g \in G}\rho_g - \alpha\sum_{g \in G}\rho_g(y^{([M])}_g)^{-\beta} \\
        &= Z\sum_{g \in G}\rho_g - \alpha\sum_{g \in G}\rho_g\gamma_g^{-\beta}\left(\frac{\alpha\beta}{\kappa^T\gamma}\sum_{g \in G}\rho_g\gamma_g^{-\beta} \right)^{-\frac{\beta}{\beta+1}} \\
        &= Z\sum_{g \in G}\rho_g - \frac{\alpha}{M}\frac{\alpha\beta}{\kappa^T\gamma}^{-\frac{\beta}{\beta+1}}\left( \sum_{g \in G}\rho_g\gamma_g^{-\beta}\right)^{\frac{1}{\beta+1}}
    \end{align}
    Note that the positive term is linear in the sum of the $\rho_g$ whereas the negative term is sublinear.
    Consequently, the marketplace's utility ratio in the limit is
    \begin{equation}
        \lim_{N\to \infty}\frac{w^f(p)}{w(p)} = \lim_{N\to \infty}\frac{Z\sum_{g \in G}\rho_g}{Z\sum_{g \in G}\rho_g} = 1
    \end{equation}

    Consider seller $j$'s utility in the baseline scenario.
    \begin{align}
        v_j(x^{(j)}) =& \sum_{g \in G}p_g\sum_{i=1}^N\mathbf{1}[\mu_{i,g} \geq p_g]\sum_{T \subseteq [M]\setminus \{j\}}c_T\left(\mathcal{G}(x^{(T\cup\{j\})}_g) - \mathcal{G}(x^{(T)}_g)\right) \nonumber\\
        &- \sum_{g \in G}\kappa_gx^{(j)}_g \\
        =& \sum_{g \in G}\rho_g \frac{1}{M}\mathcal{G}(x^{([M])}_g) - \sum_{g \in G}\kappa_gx^{(j)}_g \\
        =& \sum_{g \in G}\rho_g \frac{1}{M}\left( Z - \alpha(x^{([M])}_g)^{-\beta} \right) - \sum_{g \in G}\kappa_gx^{(j)}_g \\
        =& \frac{Z}{M}\sum_{g \in G}\rho_g
        - \frac{\alpha}{M} \sum_{g \in G}\rho_g\left( \frac{\rho_g}{\kappa_g}\alpha\beta \right)^{-\frac{\beta}{\beta+1}}
        - \sum_{g \in G}\kappa_g\frac{1}{M}\left(\frac{\rho_g}{\kappa_g}\alpha\beta \right)^{\frac{1}{\beta+1}} \\
        =& \frac{Z}{M}\sum_{g \in G}\rho_g
        - \frac{\alpha}{M} \sum_{g \in G}\rho_g^{\frac{1}{\beta+1}}\left( \frac{\alpha\beta}{\kappa_g} \right)^{-\frac{\beta}{\beta+1}}
        - \sum_{g \in G}\kappa_g\frac{1}{M}\left(\frac{\rho_g}{\kappa_g}\alpha\beta \right)^{\frac{1}{\beta+1}}
    \end{align}
    Note that the positive term is linear in the sum of the $\rho_g$ whereas the two negative terms are sublinear.
    Now consider seller $j$'s utility in the intervention scenario.
    \begin{align}
        v^f_j(y^{(j)}) =& \sum_{g \in G}p_g\sum_{i=1}^N\mathbf{1}[\mu_{i,g} \geq p_g]\sum_{T \subseteq [M]\setminus \{j\}
        }c_T\left(\mathcal{G}(y^{(T\cup\{j\})}_g) - \mathcal{G}(y^{(T)}_g)\right) \nonumber\\
        &- \sum_{g \in G}\kappa_gy^{(j)}_g \\
        =& \sum_{g \in G}\rho_g \frac{1}{M}\mathcal{G}(y^{([M])}_g)
        - \sum_{g \in G}\kappa_gy^{(j)}_g \\
        =& \sum_{g \in G}\rho_g \frac{1}{M}\left( Z - \alpha(y^{([M])}_g)^{-\beta} \right)
        - \sum_{g \in G}\kappa_gy^{(j)}_g \\
        =& \frac{Z}{M}\sum_{g \in G}\rho_g
        - \frac{\alpha}{M} \sum_{g \in G}\rho_g\gamma_g^{-\beta}\left( \frac{\alpha\beta}{\kappa^T\gamma}\sum_{g \in G}\rho_g\gamma_g^{-\beta} \right)^{-\frac{\beta}{\beta+1}} \nonumber\\
        &- \sum_{g \in G}\frac{\kappa_g\gamma_g}{M}\left(\frac{\alpha\beta}{\kappa^T\gamma}\sum_{g \in G}\rho_g\gamma_g^{-\beta} \right)^{\frac{1}{\beta+1}} \\
        =& \frac{Z}{M}\sum_{g \in G}\rho_g
        - \frac{\alpha}{M}\frac{\alpha\beta}{\kappa^T\gamma}^{-\frac{\beta}{\beta+1}}\left( \sum_{g \in G}\rho_g\gamma_g^{-\beta}\right)^{\frac{1}{\beta+1}} \nonumber\\
        &- \sum_{g \in G}\frac{\kappa_g\gamma_g}{M}
        \left(\frac{\alpha\beta}{\kappa^T\gamma}\sum_{g \in G}\rho_g\gamma_g^{-\beta} \right)^{\frac{1}{\beta+1}}
    \end{align}
    Note that the positive term is linear in the sum of the $\rho_g$ where as the two negative terms are sublinear.
    Consequently, seller $j$'s utility ratio in the limit is
    \begin{equation}
        \lim_{N\to \infty}\frac{v^f_j(y^{(j)})}{v_j(x^{(j)})} = \frac{\frac{Z}{M}\sum_{g \in G}\rho_g}{\frac{Z}{M}\sum_{g \in G}\rho_g} = 1.
    \end{equation}

    Consider buyer $i$'s utility in the baseline scenario,
    \begin{equation}
        u_i(\mu_i) = \sum_{g \in G}(\mu_{i,g} - p_g)\mathcal{G}(\mathcal{AF}_g(\mu_{i,g}, x^{([M])})).
    \end{equation}
    Buyer $i$ will be allocated data for group $g$ if and only if $\mu_{i,g} \geq p_g$.
    Let $G_p \triangleq \{g\in G: \mu_{i,g} \geq p_g\}$, and consider the buyer's utility in the limit,
    \begin{equation}
        \lim_{N\to \infty}u_i(\mu_i) = \lim_{N\to \infty}\sum_{g \in G_p}(\mu_{i,g} - p_g)\mathcal{G}(x^{([M])}_g) \leq Z \sum_{g \in G_p}(\mu_{i,g} - p_g),
    \end{equation}
    where the inequality follows because for all $g$ and $x^{([M])}_g$, $\mathcal{G}(x^{([M])}_g) \leq Z$.
    By assumption, there is at least one group such that $\max_{p_g}\rho_g\to\infty$ as $N\to\infty$.
    For any such group it follows that $\mathcal{G}(x^{([M])}_g) \to Z$ as $N \to \infty$ because $x^{([M])}_g \to \infty$ by Claim (\ref{claim:baseline-unbounded-data-production}).

    Now consider buyer $i$'s utility in the intervention scenario.
    \begin{equation}
        u^f_i(\mu_i) = \sum_{g \in G}(\mu_{i,g} - p_g)\mathcal{G}(\mathcal{AF}_g(\mu_{i,g}, y^{([M])})).
    \end{equation}
    Again, buyer $i$ will be allocated data for group $g$ if and only if $\mu_{i,g} \geq p_g$.
    Therefore, the buyer's utility in the limit is
    \begin{equation}
        \lim_{N\to \infty}u^f_i(\mu_i) = \lim_{N\to \infty}\sum_{g \in G_p}(\mu_{i,g} - p_g)\mathcal{G}(y^{([M])}_g) = Z \sum_{g \in G_p}(\mu_{i,g} - p_g).
    \end{equation}
    The last equality follows from the assumption that there is at least one group such that $\max_{p_g}\rho_g\to\infty$ as $N\to\infty$.
    By Claim (\ref{claim:intervention-unbounded-data-production}), it follows that for every group $g$, $y^{([M])}_g\to\infty$ and consequently $\mathcal{G}(y^{([M])}_g) \to Z$.

    Putting these together we can evaluate the cost of fairness in the limit
    \begin{equation}
        \lim_{N\to \infty}\frac{u^f_i(\mu_i)}{u_i(\mu_i)} \geq \frac{Z \sum_{g \in G_p}(\mu_{i,g} - p_g)}{Z \sum_{g \in G_p}(\mu_{i,g} - p_g)} = 1.
    \end{equation}
\end{proof}

\end{document}